\numberwithin{equation}{section}
\theoremstyle{plain}
\newtheorem{theorem}{Theorem}
\newtheorem{proposition}{Proposition}
\newtheorem{corollary}{Corollary}
\newtheorem{lemma}{Lemma}
\newtheorem{assumption}{Assumption}
\newtheorem{definition}{Definition}
\newtheorem{condition}{Condition}
\newtheorem{example}{Example}
\newtheorem{claim}{Claim}
\newtheorem{remark}{Remark}
\def\R{\mathbb{R}} % real numbers
\def\Z{\mathbb{Z}} % integers
\def\N{\mathbb{N}} % naturals
\def\disc{\lambda}
\def\LL{{\rm L}}
\def\ELL{{\rm L}}
\def\RR#1#2{{\left\|#2\right\|_{\LL,#1}}} % overall regularization
\def\k{{\digamma}}
\def\MM#1#2{{\left\|#2\right\|_{\LL,#1}}} % overall metric
\def\Good{{\rm Good}} % overall metric
\def\bgamma{{\theta}}
\newcommand{\Ex}[1]{\mathbb{E}\left[#1\right]} % \Ex{abc} prints E[abc] with appropriately sized bracketss
\newcommand{\Ind}[1]{1_{#1}} % Indicator of event #1
\newcommand{\Var}[1]{\mathbb{V}\left(#1\right)} % Variance with appropriately sized parentheses
\renewcommand{\Pr}[1]{\mathbb{P}\left(#1\right)} % \Pr{abc} prints P(abc) with appropriately sized parentheses
\newcommand{\bigoh}[1]{O\left(#1\right)}
\newcommand{\liloh}[1]{o\left(#1\right)}
\def\sF{\mathcal{F}}
\def\sS{\mathcal{S}}\def\sT{\mathcal{T}}
\def\eps{\epsilon}
\def\bX{{X}}
\def\conf{\widehat{{\sf cr}}}
\def\oconf{\overline{{\sf cr}}}
\def\parent{{\rm par}}
\def\supp{{\rm supp}}
\begin{document}

\begin{frontmatter}
%\title{Approximate group context tree for discrete-alphabet time series: model selection, estimation, and applications to Dynamic Programming and Dynamic Choice Models\protect\thanksref{T1}}
\title{Approximate group context tree\protect\thanksref{T1}}
\runtitle{AGCT}
\thankstext{T1}{First Version: June 2010; Current Version: \today.}

\begin{aug}
\author{\fnms{Alexandre} \snm{Belloni}\ead[label=e1]{abn5@duke.edu}}
\and
\author{\fnms{Roberto I.} \snm{Oliveira\thanksref{t1}}\ead[label=e2]{rimfo@impa.br}}

\thankstext{t1}{Supported by projects {\em Universal} and {\em Produtividade em Pesquisa} from CNPq, Brazil. This work is part of USP project ``Mathematics, computation, language and the brain".}
%\thankstext{t2}{First supporter of the project}
%\thankstext{t3}{Second supporter of the project}
\runauthor{Belloni and Oliveira}

\affiliation{Duke University and IMPA }

\address{100 Fuqua Drive\\
Durham, NC 27708\\
\printead{e1}\\
}

\address{Estrada Dona Castorina 110\\
Rio de Janeiro, RJ\\
\printead{e2}\\
}
\end{aug}

\begin{abstract}

We study a variable length Markov chain model associated with a group of stationary processes that share the same context tree but each process has potentially different conditional probabilities. We propose a new model selection and estimation method which is computationally efficient. We develop oracle and adaptivity inequalities, as well as model selection properties, that hold under continuity of the transition probabilities and  polynomial $\beta$-mixing. In particular, model misspecification is allowed.

These results are applied to interesting families of processes. For Markov processes, we obtain uniform rate of convergence for the estimation error of transition probabilities as well as perfect model selection results. For chains of infinite order with complete connections, we obtain explicit uniform rates of convergence on the estimation of conditional probabilities, which have an explicit dependence on the processes' continuity rates. Similar guarantees are also derived for renewal processes.

Our results are shown to be applicable to discrete stochastic dynamic programming problems and to dynamic discrete choice models. We also apply our estimator to a linguistic study, based on recent work, by Galves et al \cite{GGGL2012}, of the rhythmic differences between Brazilian and European Portuguese. \end{abstract}

\begin{keyword}[class=AMS]
\kwd[Primary ]{62M05}
\kwd{62M09}
\kwd{62G05}
\kwd[; secondary ]{62P20}\kwd{60J10}
\end{keyword}

\begin{keyword}
\kwd{categorical time series}
\kwd{group context tree}
\kwd{dynamic discrete choice models}
\kwd{dynamic programming}
\kwd{model selection}
\kwd{VLMC}
\end{keyword}

\end{frontmatter}

\section{Introduction}

In this paper, we are interested in applying {\em context tree models}, also known {\em variable length Markov chains} (VLMCs), to the estimation of transition probabilities and dependence structures in discrete-alphabet stochastic processes. Context tree models describe processes where each infinite ``past"~ has a finite suffix -- the {\em context} -- that suffices to determine the transition probabilities. As such, they are generalizations of finite order Markov chains, for which contexts exist and are of fixed length. Context tree processes first appeared in Rissanen's seminal paper \cite{Rissanen1983}, where two appealing traits were noted.
\begin{itemize}
\item {\em Parsimony:} a Markov chain model must have an order parameter that is large enough to distinguish any two pasts with different transition probabilities. By contrast, by using different context lengths for different pasts, one may need less parameters to specify the model. (Incidentally, this motivates the VLMC terminology.)
\item {\em Computationally efficient estimation:} the set of context has a natural suffix tree structure, known as the {\em context tree.} The fact that this is a tree allows for efficiency search over an exponentially large class of models. Rissanen's original Context algorithm for estimating the context tree relied strongly on this.
\end{itemize}
Both traits have continued to play a role over the years as a growing number of papers on context tree models appeared in Statistics \cite{BuhlmannWyner1999,Buhlmann2000,FerrariWyner2003}, Information Theory \cite{WillemsEtAl1995,vert2001adaptive,Garivier2006}, Bioinformatics \cite{Bejerano04} and Linguistics \cite{GGGL2012}. In this last paper, {\em interpretability} of context trees has also played a role, which adds to their interest as practical tools.

%\subsection{Our focus, and previous work}

In this paper we consider context tree model selection and estimation for a {\em group} of $\LL \geq 1$ stationary processes over a discrete alphabet. These stationary processes have the same context tree but possibly different conditional probability distributions. We refer to this model as \textit{group context tree} alluding to the recent literature on group lasso \cite{YuanLin2006,LouniciPontilTsybakovvandeGeer2009,OWJ2011}. As in the case of group lasso, by combining different processes with similar dependence structure we hope to improve the overall estimation. In addition, the model we consider also allows for processes which are only approximately described by a finite context tree, hence the name {\it approximate group context tree} (AGCT) model.

Although this group context tree setting is new, our estimator and the results we obtain are related to several papers that considered a single stationary process ($\LL = 1$), which we outline briefly. B\"{u}hlmann and Wyner \cite{BuhlmannWyner1999} proved properties of the Context estimator allowing the model to grow with the sample size. They also studied a bootstrap scheme based on fitted VLMCs. Ferrari and Wyner \cite{FerrariWyner2003} consider processes with infinite dependence for which there exist ``good" context tree approximations. They established new results on a sieve methodology based on an adaptation of the Context algorithm. The BIC Context Tree algorithm and its consistency properties have been considered in \cite{CsiszarTalata2006}, \cite{Garivier2006} and \cite{TalataDuncan2009}. Redundancy rates were studied by \cite{CsiszarShields1996} and \cite{Garivier2006}. Several other works contributed to this literature in various directions: see \cite{Buhlmann1999}, \cite{Buhlmann2000}, \cite{vert2001adaptive}, \cite{GarivierLeonardi2010} and the references therein. %\textbf{[NEED]}
%\textbf{[NEED: to make comment that we do not use KL]}

%\subsection{The new estimator}
In Section \ref{Sec:GCT}~we propose an estimator for model selection and estimation of conditional probabilities based on context tree models, which {\em does not assume a true VLMC model}. As in Rissanen's original estimator, we first build a full suffix tree for the observed sample, then prune the tree by removing ``statistically insignificant"~nodes. In addition to considering a group of processes, the proposed estimator also differs on how we define insignificance. We use a procedure reminiscent of Lepskii's adaptation method \cite{Lepskii1991}. For each suffix we compute from the sample an (approximate) confidence radius for its vector of transition probability estimates (one for each process). We then recursively prune any leaf node $w$ whose descendants $w'$ in the full sample suffix tree (i.e. the tree prior to pruning) are all ``compatible"~ with the parent of $w$, in the sense that the corresponding confidence regions intersect. By a judicious choice of confidence radii, this procedure automatically balances the variance coming from the random sampling with the bias incurred by the truncation mechanism.

%\subsection{Finite-sample statistical properties}

Section \ref{sec:assumptions} details the assumptions we impose on processes, most notably {\em continuity of transition probabilities} (deeper truncation implies arbitrarily good approximation). Based on this, finite sample results on adaptivity and  model selection properties are presented in Section \ref{Sec:Analysis}. In that same section we present stronger results, including oracle inequalities, that require an added assumption of {\em polynomial $\beta$-mixing}. Previous work in the area imposed assumptions that implied a true finite VLMC model, exponential mixing properties and/or non-nullness (positivity) of the transition probabilities, which we manage to avoid here. Moreover, our oracle inequality for the AGCT estimator (Corollary \ref{cor:oracle1}) seems to be the first result of its kind for context tree estimation, even in the single process case.

%\subsection{Examples}
In Section \ref{Sec:Examples} we present three classes of examples where our general results can be applied. For parametric models (i.e. actual finite-order Markov chains), we derive uniform rate of convergence to transition probabilities, as well as perfect model selection, under weaker assumptions than \cite{BuhlmannWyner1999} (which only covered the single process case). For chains of infinite order with complete connections, we obtain explicit uniform rates of convergence on the estimation of conditional probabilities, which have an explicit dependence on the processes' continuity rates. We also derive explicit uniform rates of convergence for certain renewal processes. In most cases, we show that the group context tree model can lead to improvements on the estimation when compared to the single-process case.

%\subsection{Applications, variants and comments}
Group context tree models are used in Section \ref{Sec:Applications} to estimate dynamic marginal effects in dynamic choice models (\cite{AguirregabiriaMira2010,BrowningCarro2010}), and to estimate the value function in discrete stochastic dynamic programming problems (\cite{Ross1983,Bertsekas1987,Puterman1994,FariasEtAl2010}). In these applications the objects of main interest are functionals of the conditional probabilities.  We derive uniform bounds on the rate of convergence for the estimates that hold uniformity over all possible contexts and account for model selection mistakes. Furthermore, in Section \ref{Sec:Linguistic} we revisit a study by Galves et al. \cite{GGGL2012}, and apply the AGCT model to understand the difference between the rhythmic features in European and Brazilian Portuguese. A key point is that the AGCT framework allows for the processes to have different transition probabilities.

Section \ref{Sec:Conclusion} discusses variations of the estimator and comparisons, and a final Section adds some further thoughts. Proofs are mostly contained in two Appendices. Simulations and some auxiliary theoretical results are provided in the Supplementary Material \cite{SM-AGCT}.

\subsection{Notation} Let $A$ denote a finite set (called alphabet), and the set of probability distributions over $A$ will be denoted by $\Delta^A$. We use $A^{-1}_{-k}$ to denote all $A$-valued sequences with length $k$, and $A^*= A^{-1}_{-\infty}\cup ( \cup_{k=0}^\infty A^{-1}_{-k})$. The length of a string $w$ is denoted by $|w|$ and, for each $1\leq k\leq |w|$, $w^{-1}_{-k}$ is the suffix of $w$ with length $k$. We also let $w^{-1}_{0}=e$, the empty string.
A subset $\widetilde T\subset A^*$ is a {\em tree} if the empty string $e\in \widetilde T$ and for all $w=w_{-|w|}\dots  w_{-1}\in \widetilde T\backslash \{e\}$ the string $w^{-1}_{-k}=w_{-k}\dots w_{-1} \in \widetilde T$ for any $k\leq |w|$. The {\em parent} of $w$ is denoted by $\parent(w)=w_{-|w|+1}\dots w_{-1}$. An element of a tree $\widetilde T$ that is not the parent of any other element in $\widetilde T$ is said to be a {\em leaf} of $\widetilde T$. For $w,w'\in A^*$, we write $w\preceq w'$ if $w$ is a suffix of $w'$.

We associate with each tree $\widetilde T$ and each $x=\dots x_{-3}x_{-2}x_{-1}\in A^{-1}_{-\infty}$ a suffix $\widetilde T(x)$ of $x$ with the following rule:
\begin{itemize}
\item If $x^{-1}_{-k}\in \widetilde T$ for all $k\in\N$, then $\widetilde T(x)=x$;
\item Otherwise, take the largest $k\in\N$ with $x^{-1}_{-k}\in \widetilde T$ and set $\widetilde T(x)=x^{-1}_{-k}$. (Note that this is the empty string if $k=0$.)\end{itemize}

The strings of the form $\widetilde T(x)$ where $x$ ranges over $A^{-1}_{-\infty}$ will be called the {\em terminal nodes} of $\widetilde T$. Notice that all terminal nodes are either leaves or infinite strings. For two sequences $a_n$, $b_n$ we denote  $a_n\lesssim b_n$ if $a_n=O(b_n)$. The indicator function of an event $E$ is denoted by $\Ind{E}$, and for $q\geq 1$ the $\|\cdot\|_{\LL,q}$-norm of a vector $v \in \mathbb{R}^\LL$ is defined as
\begin{equation}\label{DefLq}\|v\|_{\LL,q} = \left( \frac{1}{\LL}\sum_{\ell=1}^\LL |v_\ell|^q \right)^{1/q}.\end{equation}

\section{Setting for Group Context trees}\label{Sec:GCT}

A pair $(\widetilde T,\widetilde p)$ will correspond to a tree $\widetilde T$ and a mapping $\widetilde p$ that assigns to each terminal node $v$ of $\widetilde T$ a probability distribution $\widetilde p(\cdot | v)$ over a finite alphabet $A$. A stationary ergodic process $\bX\equiv (X_k)_{k\in\Z}$ will be said to be {\em compatible with} $(\widetilde T,\widetilde p)$ if:
$$\Pr{X_0=a\mid X^{-1}_{-\infty}} = \widetilde p(a\mid \widetilde T(X^{-1}_{-\infty}))\mbox{ almost surely.}$$
%If $\bX$ is compatible with $(\widetilde T,\widetilde p)$, we say that $\widetilde T$ is a context tree for $\bX$. We will also sometimes write $p(a\mid x)$ instead of $p(a\mid \widetilde T(x))$. There always exists a minimal complete context tree for $\bX$ and we will always implicitly refer to that tree.

%We will describe a meta-algorithm which, given a sample $X_1^n$ from some process $\bX$, outputs a pair $(\widehat{T}_n,\hat{P}_n)$ which is a ``good" approximation for $\bX$ in some sense to be made precise below.

On a group context tree model we have a family $\bX=(\bX(\ell))_{\ell=1}^{\LL}$ of $\LL$ independent and stationary processes
$$\bX(\ell)\equiv (X_k(\ell))_{k\in\Z}\,\,(1\leq \ell\leq \LL),$$
a single context tree $T^*$, and (possibly distinct) probability distributions $p_\ell$, $\ell = 1,\ldots,\LL$, such that the $\ell$th process is compatible with
$(T^*,p_\ell)$ for $\ell = 1,\ldots,\LL$.
Note that $T^*$ is possibly infinite so that this is not a restriction/assumption on the model. Moreover, if the $\ell$th process is compatible with a context tree $T^\ell$, we have $T^* = \cup_{\ell=1}^\LL T^\ell$, and we may redefine $p_\ell$ correspondingly.

%The {\it exact} context tree $T^*$ described above might be too long to be efficiently estimated with finite data. Indeed, a smaller context tree can lead to more precise estimates for the conditional probabilities despite its misspecification. This is due to a potentially better bias/variance tradeoff which in turn motivates us to consider {\it approximate} context trees. The algorithm proposed below aims to search for such a context tree that balancing (estimates of the) bias and variance.% the length of (uniform) confidence intervals for the estimates and the misspecification of choosing a smaller tree. %Accordingly, we will also consider pairs $(\widetilde T,\widetilde p)$ which give good {\em approximations} for $\bX$.

To quantify the approximation error we use a metric $d_\ell: \Delta^A \times \Delta^A\to
[0,1]$ for each process, $\ell=1,\ldots,\LL$ and write the associated $\LL$-vector $d(p,q)=(d_1(p_1,q_1),\ldots,d_\LL(p_\LL,q_\LL))'.$ We will aggregate the approximation errors across processes through $\MM{\k}{d(p,q)}$ where $\MM{\k}{\cdot}$ is the norm defined in (\ref{DefLq}). For simplicity, we consider all metrics $d_\ell$ to be equal and of a certain specific kind. Namely, there exists a collection $\sS$  of subsets of $A$ such that:
\begin{equation}\label{def:metrics}d_\ell(p_\ell,q_\ell) =\sup_{S\in\sS}|p_\ell(S)-q_\ell(S)|, \ \ \ell=1,\ldots,\LL.\end{equation}
Our main interests are in the $\ell_1$ metric, where $\sS=2^A$ consists of all subsets of $A$, and the $\ell_{\infty}$ metric, where $\sS$ consists of all singletons of $A$. %The choices of metrics $d_\ell$ and the value of $\k$ are usually tied to the goal of the estimation. Although the analysis covers $\k\geq 1$ and allows for an arbitrary metric $d_\ell$, we will later explicitly consider the cases of $\k=1$ and $\k=\infty$, and the cases of $$d_\ell(p_\ell,q_\ell) = \|p_\ell-q_\ell\|_1/2 \ \ \mbox{ and } \ \ d_\ell(p_\ell,q_\ell)=\|p_\ell-q_\ell\|_\infty.$$ Those cases are motivated by specific problems as we discussed in further sections.

\subsection{The AGCT estimator}

In this section we discuss the model selection method which leads to the estimation of the conditional probabilities from a sample of $\LL$ processes. For each $\ell = 1,\ldots,\LL$, our sample consists of a string of size $n$ with symbols from $A$ denoted as $X_1^n(\ell)\equiv (X_{1}(\ell),\dots,X_{n}(\ell))$. For a string $w\in A^*$, we let $N_{k,\ell}(w)$ denote the  number of occurrences of $w$ in $X_1^k(\ell)$. \footnote{Formally defined for $k\geq |w|+1$, so that $N_{k,\ell}(w)$ denotes the number of indices $i$, $|w|\leq i \leq k$, with $X_{i-|w|}^i(\ell)=w$.} (For notational convenience we assume that the length $n$ of the sample of each process is the same but the analysis does not rely on that.)

The algorithm proceeds in three steps: Initialization, Identification of Removable Nodes, and Pruning. Next we describe in detail the procedure. In what follows we let $E_n$ be the suffix tree that contains every string $w\in A^*$ which appears in all $\LL$ data sequences of the sample $X_1^{n-1}$, namely \begin{equation}\label{def:En} E_n = \left\{ w\in A^* \ : \  \min_{\ell=1,\ldots,\LL}N_{n-1,\ell}(w)>0 \right\}.\end{equation}

%At a high level, our procedure consists of three steps: Initialization, Identification of Removable Nodes, and Pruning.\\

%Roughly speaking, the method aims to balance the misspecification of truncating a context tree at $w$ with the ``noise" in the estimation of the of the conditional probabilities. The former is measured by $\MM{\k}{
%d(\hat{p}_n(\cdot|w'),\hat{p}_n(\cdot|w''))}$ where $w$ precedes $w'$ and $w''$, while the latter is measure by $\RR{r}{\conf(w')} + \RR{r}{\conf(w'')}$ which is essentially the sum of the length of uniform confidence intervals for $\hat{p}_n(\cdot|w')$ and $\hat{p}_n(\cdot|w'')$.  These regularization terms are used to decide if a particular node should be pruned from the tree. After selecting a context tree, probabilities compatible with it are computed. Next we describe in detail the procedure.

\noindent {\bf Step 1: Initialization.} {\em For each $w\in E_n$  we specify a conditional probability estimate and a confidence radius:
$$\hat{p}_{n,\ell}(a|w)\equiv\frac{N_{n,\ell}(wa)}{N_{n-1,\ell}(w)} \ \ \mbox{and} \ \ \conf_\ell(w) \mbox{(to be specified)}, \ \ \mbox{for} \ \ a \in A, \  \ell=1,\ldots,\LL.$$}
The estimator $\hat{p}_{n,\ell}(a|w)$ is a nonparametric estimate for the transition probability $p_\ell(a|w)$. The confidence radius $\conf_{\ell}(w)$, to be specified in Section \ref{sec:datadriven1} below, depends on the choice of $\k$. With high probability, it is essentially an upper bound for the distance between $p(\cdot|w)$ and $\hat{p}_{n}(\cdot|w)$, up to a bias factor that comes from truncating the past of the process at $w$ (this is related to the {\em continuity rates}, cf. Assumption \ref{assum:cont} below).\\

\noindent {\bf Step 2: Identifying Removable Nodes.} {\em For a fixed constant $c > 1$, define for each $w \in E_n$: {\small \begin{equation}\label{Def:IsPrunable}{\sf
CanRmv}(w)\equiv \left\{\begin{array}{ll} 1, & \mbox{if for all }\
w', w''\in E_n\mbox{ with }w\preceq w',  \parent(w)\preceq w''\\ & \MM{\k}{
d(\hat{p}_n(\cdot|w'),\hat{p}_n(\cdot|w''))}\leq c\RR{r}{\conf(w')}+c\RR{r}{\conf(w'')};\\ 0, &
\mbox{otherwise.}\end{array} \right.\end{equation}}}

Intuitively, ${\sf CanRmv}(w)=1$ means that we can remove $w$, which happens if and only if, for any two nodes $w\preceq w'$, $\parent(w)\preceq w''$, the distance between the corresponding transition probability estimates is smaller than the sum of the noise levels at the nodes. The slack factor $c>1$ allows us to keep a check on the bias that might be incurred by removing $w$. Our analysis in the Appendix shows that using $c>1$ implies that, with high probability, this bias will not be much larger than the noise\footnote{See the proof of Lemma \ref{lem:adapt}}. This is similar e.g. to the slack parameter used in \cite{BickelRitovTsybakov2009}, and we recommend $c=1.01$ in practice.\\

% to obtain our estimate for the oracle context tree $T$ defined as the solution of (\ref{Def:Oracle}).

%{\mbox{{\bf Prunning.}} \underline{Procedure {\rm PruneTree}}\hfill }
\noindent {\bf Step 3: Pruning.} {\em Let $\widehat{T}_n\leftarrow E_n$. Prune any leaf of $\widehat T_n$ with {${\sf
CanRmv}(w)=1$}. Repeat until all leaves of $\widehat T_n$ have {${\sf
CanRmv}(w)=0$}. Return $(\hat P_n, \widehat T_n)$ where $$\hat P_n(a|x) \equiv \hat p_n(a|\widehat T_n(x))$$ for all $x \in A^{-1}_{-\infty}$ and $a \in A$.}\\

This last step keeps the smallest subtree of $E_n$ containing all nodes that {\em cannot} be removed (i.e., for all $w \in \widehat T_n$ we have {\sf CanRmv}$(w)=0$). For completeness we provide detail algorithm in Figure \ref{fig:prune} in Appendix \ref{Sec:Algorithm} of the Supplementary Material  \cite{SM-AGCT}.  The context tree $\widehat T_n$ is our selected model, and the transition probability estimate $\hat P_n$ is compatible with it. We will show that pruning typically removes high-noise nodes, and the bias incurred by pruning is kept manageable by the test in {\sf CanRmv}.

\subsection{Data-driven choices of confidence radii}\label{sec:datadriven1}

The performance of our algorithm is heavily dependent on choices of confidence radii $\conf_\ell(w)$. As noted above, we will choose those so as to bound from above the deviations $\|d(\hat p_{n}(\cdot|w),p(\cdot|w))\|_{\LL,\k}$ up to an extra error term depending on the continuity rates. There is an important tradeoff between large confidence radius that introduce large bias and small confidence radius that do not properly account for the noise in the data. In this section, we present choices that achieve good balance between these factors. These choices ultimately derive from the self-normalized martingale inequalities that we present in the Appendix and Supplementary Material.

\begin{definition}[First choice of confidence radius]\label{def:firstchoice}Let $1-\delta$, $\delta\in (0,1)$ be our desired confidence level. For $w\in E_n$, $\ell=1,\ldots,\LL$, let  $$\conf_\ell(w)\equiv\sqrt{\frac{4}{N_{n-1,\ell}(w)}\,\left(2\ln(2 + \log_2N_{n-1,\ell}(w)) + \ln\left(\frac{n^2\,\ELL\,|\sS|}{\delta}\right)\right)}.$$
%We also consider the population counterpart $\oconf_{\ell,\infty}(w)$ where $N_{n-1,\ell}(w)$ is replaced by $\pi_\ell(w)\,n/2$
\end{definition}

The choice above satisfies $\conf_{\ell}(w) \sim \sqrt{\log (n\LL/\delta)/N_{n-1,\ell}(w)}$. This choice exhibits the same behavior as in the case of a single group ($\LL=1$) provided  $\log \LL \lesssim \log n$, which encompasses most cases of interest. The choice in Definition \ref{def:firstchoice} is desired when we want our estimates of the transition probabilities to be uniformly good approximations. The next proposal for confidence radius is appropriate when the number of processes is large and we want our estimates to be good on average.

\begin{definition}[Second choice of confidence radius]\label{def:secondchoice} Let $1-\delta$, $\delta\in (0,1)$, be the desired confidence level. Assume the condition:
\begin{equation}\label{Def:star}\,\LL\geq 6\ln\left(\frac{n^2}{\delta}\right)\end{equation} and for $w\in E_n$, $\ell=1,\ldots,\LL$, let
%{\small $$\conf_{\ell,2}(w) =  \sqrt{\frac{4}{N_{n-1,\ell}(w)}}\sqrt{\log\left(\frac{4|\sS|\log^2(n^2\LL/\delta)}{\log\log n}\right) + 2\log(2 + 2\log N_{n-1,\ell}(w))}\sqrt{\left(1+\frac{3\alpha}{2}\right)\left(1+\frac{1}{\log n}\right)}.$$}
{\small $$\conf_\ell(w)\equiv\sqrt{\frac{4}{N_{n-1,\ell}(w)}\,\left(2\ln(2 + \log_2N_{n-1,\ell}(w)) + \ln|\sS| + 1 +\sqrt{\frac{6\ln\left(\frac{n^2}{\delta}\right)}{\ELL}}\right)}.$$}
%We also consider the population counterpart $\oconf_{\ell,2}(w)$ where $N_{n-1,\ell}(w)$ is replaced by $\pi_\ell(w) n/2$ in the denominator and by $2\pi_\ell(w)n$ in the numerator, cf. (\ref{eq:defpiell}).
\end{definition}

In this case, because of (\ref{Def:star}), the rate of $\conf_{\ell}(w)$ is $\sqrt{\log \log n/N_{n-1,\ell}(w)}$ improving upon the single-process case. This is remarkably close to the error in the estimation of probabilities if the model was known in advance.

\section{Assumptions}\label{sec:assumptions}

In this section we state the main assumptions on the processes $(X(\ell))_{\ell=1}^{\LL}$ for our main results. For clarity, we decided to use relatively transparent hypotheses, but slightly more general assumptions can be imposed with very few changes.

\subsection{Basic distributional assumptions}

We start with the simplest assumptions that allow for effective use of the group structure, in that we consider the same ``prefixes" for all processes. To make this precise, we define the {\em support} ${\rm supp}_\ell$ of process $X(\ell)$ as the set:
\begin{equation*}\supp_\ell\equiv \{x^{-1}_{-\infty}\in A^{-1}_{-\infty}\,:\,\forall k\in\N,\, \Pr{X^{-1}_{-k}(\ell)=x^{-1}_{-k}}>0\},\end{equation*}
and formally state our condition.

\begin{assumption}[Framework]\label{assum:supp}We have $\LL$ processes
$$X(\ell) = (X_k(\ell))_{k\in \Z},\,1\leq \ell\leq \LL$$taking values in the same discrete alphabet $A$ which are independent and stationary. All processes have the same (potentially infinite) context tree $T^*$ and (potentially different) transition probabilities $p_1,\dots,p_\LL$. The sets ${\rm supp}_{\ell}$, $1\leq \ell\leq \LL$, are all equal. We denote by ${\rm supp}\equiv {\rm supp}_1$. We observe $\{X_1^n(\ell)\}_{\ell=1}^{\ELL}$, samples of length $n\geq 9$ of the stochastic processes $\{X(\ell)\}_{\ell=1}^{\LL}$.\end{assumption}

\subsection{Continuity rates and Mixing}

The uniform control we aim for essentially requires that truncating the past of the process at some past time $-k$, $k\gg 1$, is not too hurtful for the transition probabilities.

\begin{assumption}[Continuity]\label{assum:cont}The processes $X(\ell)$, $1\leq \ell\leq \LL$, are {\em continuous}. That is, for each $\ell$, there exists a version of the conditional probabilities $p_\ell$ of the $X(\ell)$ process such that the quantities:
$$\gamma_\ell(x^{-1}_{-k})\equiv \sup\limits_{y,z\in A^{-1}_{-\infty}\,:\,y^{-1}_{-k}=z^{-1}_{-k}=x^{-1}_{-k}}d_\ell(p_\ell(\cdot|y),p_\ell(\cdot|z))$$
converge to $0$ as $k\to +\infty$, for all $x^{-1}_{-\infty}$, where $d_\ell$ is a metric as in (\ref{def:metrics}).\end{assumption}
The numbers $\gamma_\ell(\cdot)$ are the continuity rates of process $\ell$. A compactness argument implies that their convergence to $0$ is {\em uniform} in $x \in A^{-1}_{-\infty}$. However, our estimator will {\em adapt} to the continuity rates, meaning that it will tend to do better on pasts that are ``more continuous."

\section{Finite Sample Analysis}\label{Sec:Analysis}

In this section we derive our main theoretical results on the performance of the estimates proposed in Section \ref{Sec:GCT}. %We first specify the processes and parameters relevant for the analysis.
%\begin{definition}[Basic Setup]\label{def:basic} $\{X_1^n(\ell)\}_{\ell=1}^{\ELL}$ are samples of length $n\geq 9$ of stochastic processes $\{X(\ell)\}_{\ell=1}^{\LL}$ satisfying Assumptions 1 through 6 above. A confidence parameter $1-\delta$ with $\delta\in (0,1)$ is chosen.
%A value $c>1$ is also specified for the slack parameter. \end{definition}

\subsection{Main results: adaptivity and an oracle inequality}

We can now state our main result.

\begin{theorem}[Main theorem; proven in Appendix \ref{App:ProofsSec4}]\label{thm:main}Under Assumptions \ref{assum:supp} and \ref{assum:cont}, let $\widehat{T}_n$ and $\widehat{P}_n$ denote the tree and transition probabilities output by the AGCT algorithm with $\delta \in (0,1)$, $c>1$ and one of the options below:
\begin{itemize}
\item{\em General case.} We use any $\k\in [1,\infty]$, take $r=\k$ and use the confidence radii as in Definition \ref{def:firstchoice}.
\item{\em Many processes.} In this case we assume condition (\ref{Def:star}) in Definition \ref{def:secondchoice}, take $\k=1$, $r=2$ and use the confidence radii in that definition.\end{itemize}
Then, the following facts hold simultaneously with probability at least $1-\delta$
\begin{enumerate}
\item The estimated tree is contained in the correct tree: $\widehat{T}_n\subset T^*$.
\item Uniformly over $x\in {\rm supp}$, we have
\end{enumerate}
\vspace{-0.5cm}\begin{eqnarray*}\|d(p(\cdot|x),\widehat{P}_n(\cdot|x))\|_{\LL,\k} \leq &{\displaystyle \inf_{T}}\, \frac{2c+2}{c-1}\|\gamma(T(x))\|_{\LL,\k} + (1+2c)\,\|\conf(T(x))\|_{\LL,r}.\end{eqnarray*}
\end{theorem}

Theorem \ref{thm:main} contains two assertions that hold with high probability. Firstly, the AGCT estimator does not give a bigger tree than necessary: this is advantageous when there is a true, finite VLMC model with a small $T^*$. However, note that, in general, $T^*$ might contain some infinite paths.

Secondly, Theorem \ref{thm:main} shows that our estimator adapts to the continuity rates of the process in a very strong, pastwise sense. The transition probabilities for more frequent pasts are better approximated because the confidence radii $\conf_\ell(T(x))$ decrease when the $N_{n-1,\ell}(T(x))$ increase. This is enough to imply the almost sure converge of the AGCT probability estimates to the transition probabilities for continuous, ergodic processes, when the sample size $n$ increases and the values of $\delta=\delta^{(n)}$ chosen are summable.

An added feature is that, under (\ref{Def:star}), we may use the second choice of confidence radii in Definition \ref{def:secondchoice} (with $\k=1$, $r=2$) and obtain faster rate of convergence by a $\sqrt{ \log n / \log\log n}$ factor relative to the choice in Definition \ref{def:firstchoice}. This is indeed the case for some processes studied in more detail in the Supplementary Material \cite{SM-AGCT}.

\begin{remark}[Generality of Adaptivity] The result in Theorem \ref{thm:main} holds for any stationary process. The generality of Theorem \ref{thm:main} is achieved through the use of self-normalized martingale inequalities derived in the Supplementary Material \cite{SM-AGCT}. Those inequalities are used to establish the validity of the data-driven choice of the confidence radius. However, the rates of convergence depend on sample realization through the confidence radius. In order to derive explicit rates of convergence, it is necessary to control how fast the $\LL$ processes lose memory, see Section \ref{Sec:Beta}.
\end{remark}

\subsection{Main results for $\beta$-Mixing Processes}\label{Sec:Beta}

In this section we assume the processes satisfy a polynomial $\beta$-mixing condition, which is known to hold for a wide class of processes. (This property can sometimes be derived from the continuity rates, see Section \ref{Sec:Examples}.)
Recall that a process $X^{+\infty}_{-\infty}$ with values in a finite alphabet $A$ is said to be $\beta$-mixing (or absolutely regular) if there exists a function $\beta:\N\to[0,1]$ with $\lim_{b\in \N, b\to \infty}\beta(b)=0$ and $\forall k\in \Z,\,s\in\N$:
$$\beta(b)\geq \Ex{\sup\limits_{E\subset A^s}\left|\Pr{X_{k+b}^{k+b+s-1}\in E\mid X^{k}_{\infty}} - \Pr{X_{k+b}^{k+b+s-1}\in E}\right|}.$$
The function $\beta(\cdot)$ is called a ($\beta$-)mixing rate function for $X^{+\infty}_{-\infty}$. We assume:

\begin{assumption}[Polynomial $\beta$-mixing]\label{assum:polybeta}The $\LL$ processes $X(1),\dots,X(\LL)$ are all polynomially $\beta$-mixing with common rate function $\beta(b)\equiv \Gamma\,b^{-\bgamma}$ ($b\in\N$), where $\Gamma,\bgamma>0$.\end{assumption}

This extra assumption will allow us to control how ``typical" context trees behave as estimators, which in turn allows us to establish guarantees for the proposed AGCT estimator. To characterize the set of typical trees, recall that under Assumption \ref{assum:supp} the processes $X(1),\dots,X(\LL)$ have the same support ${\rm supp}$, and we define:
\begin{equation}\label{eq:defpiell}\pi_\ell(w)\equiv \Pr{X^{-1}_{-|w|}(\ell) = w}\,\,(w\in {\rm supp},1\leq \ell\leq \LL).\end{equation}
For a finite tree $T$, define $\pi_T$ as the minimum stationary probability of a leaf node,
$$\pi_T:= \min\{\pi_\ell(w)\,:\, 1\leq \ell\leq \LL,\, w\in {\rm supp}  \mbox{ is a leaf of }T\},$$
and let $h_T$ denote the height of $T$,
$$h_T:= \max\{|w|\,:\, w\in{\rm supp}\mbox{ is a leaf of }T\}.$$

\begin{definition}[Typical trees]\label{def:typical} For $(h,\pi_*)$, define the set of {\em typical} trees $\sT(h,\pi_*)$ as the set of all finite trees $T$ satisfying $\pi_T\geq \pi_*$ and $h_T\leq h$.\end{definition}

Define also the population analogues of confidence radii
$$\oconf_{\ell}(w)\equiv \left\{\begin{array}{l} \sqrt{\frac{8}{\pi_\ell(w)\,n}}\sqrt{2\ln(2 + \log_2 \{\pi_\ell(w)\,n/2\})+ \ln\left(\frac{|\sS|\LL n^2}{\delta}\right)}  \mbox{, or } \\
%\sqrt{\frac{8\left(1+\frac{3\alpha}{2}\right)\left(1+\frac{1}{\log n}\right)}{\pi_\ell(w)\,n}}\sqrt{\log\left(4|\sS|\alpha\LL\right) + 2\log(2 + 2\log \{\pi_\ell(w)\,n/2\})}\\
%%%%%\sqrt{\frac{8}{\pi_\ell(w)\,n}}\sqrt{\log\left(\frac{4|\sS|\log^2(n^2\LL/\delta)}{\log\log n}\right) + 2\log(2 + 2\log 2\pi_\ell(w)\,n)}
\sqrt{\frac{8}{\pi_\ell(w)\,n}}\sqrt{2\ln(2 + \log_2\{\pi_\ell(w)\,n/2\}) + \ln|\sS| + 1 +\sqrt{\frac{6}{\ELL}\ln\left(\frac{n^2}{\delta}\right)}}\\
\end{array}\right.$$ where $N_{n-1,\ell}(w)$ is replaced by $\pi_\ell(w)n/2$ in $\conf_\ell(w)$.

The next result exploits the $\beta$-mixing condition to provide finite sample bounds that depend on the population confidence radii of typical trees.

\begin{theorem}[Adaptivity for $\beta$-mixing; proven in Appendix \ref{Ap:betamixingtypicality}]\label{thm:adapt} Make Assumption \ref{assum:polybeta}~in addition to the assumptions of Theorem \ref{thm:main}, and consider the typical trees $\sT(h,\pi_*)$ with parameters $h\in \N, \pi_*>0$ such that for $\delta_0\in(0,1/e)$ \begin{equation}\label{Condition:n}
n\geq 2\,\max\left\{40h,\left\lceil\frac{48\,\Gamma\,\LL}{\pi_*\,\delta_0}\right\rceil^{1/\bgamma}\right\}\times \left\{1 + \frac{1200}{\pi_*}\,\log\left(\frac{24\,(h+1)}{\delta_0\,\pi_*}\right)\right\}.\end{equation}
Then, the following inequality holds with probability at least $1-\delta-\delta_0$, simultaneously over all $x \in {\rm supp}$:
\begin{equation*}\|d(\widehat{P}_n(\cdot|x),p(\cdot|x))\|_{\LL,\k} \leq \inf_{T\in \sT(h,\pi_*)}\mbox{$\frac{2c+2}{c-1}$}\|\gamma(T(x))\|_{\LL,\k} + (1+2c)\,\|\oconf(T(x))\|_{\LL,r}.\end{equation*}
\end{theorem}

Theorem \ref{thm:adapt} shows that the estimator balances continuity rates and population confidence radii over the set of typical trees. The parameters $\pi_T^{-1}$ and $h_T$ of these trees may grow polynomially with the sample size $n$, and this allows for the use of very deep nodes for the estimation of difficult pasts. This strong adaptivity property may be rephrased as an {\em oracle inequality} when $\k=\infty$.

\begin{corollary}[Oracle inequality]\label{cor:oracle1} In the setting of Theorem \ref{thm:adapt}, take any choice $(h,\pi_*)$ that satisfies (\ref{Condition:n}) and set $\k=\infty$, $\delta=n^{-a}$ with $a>0$. Then there exists a constant $C>0$ depending only on the slack parameter $c>1$, on the alphabet $|A|$ and on the exponent $a>0$ of $\delta$, such that with probability at least $1-n^{-a}-\delta_0$,
\begin{eqnarray*}\sup\limits_{\stackrel{\stackrel{{T}\in\sT(h,\pi_*)}{\tilde{p}\text{ compatible}}}{\text{ with }{T}}}\,\left(\sup_{x\in {\rm supp}}\frac{\|d(\widehat{P}_n(\cdot|x),p(\cdot|x))\|_{\LL,\infty}}{\|d(\tilde{p}(\cdot|x),p(\cdot|x))\|_{\LL,\infty} + \left\|\left\{\sqrt{\frac{\log n}{\pi_\ell(T(x))\,n}}\right\}_{\ell=1}^{\ELL}\right\|_{\LL,\infty}}\right)\leq C.\end{eqnarray*}
%In particular, with the same probability,
%\begin{eqnarray*}& \sup_{x\in {\rm supp}}\|d(\widehat{P}_n(\cdot|x),p(\cdot|x))\|_{\LL,\infty}\\ \leq &C\,\inf\limits_{\stackrel{\stackrel{{T}\in\sT(h,\pi_*)}{\tilde{p}\text{ compatible}}}{\text{ with }{T}}}\,\sup\limits_{x\in {\rm supp}}\left(\|d(\tilde{p}(\cdot|x),p(\cdot|x))\|_{\LL,\infty} + \left\|\left\{\sqrt{\frac{\log n}{\pi_\ell(T(x))\,n}}\right\}_{\ell=1}^{\ELL}\right\|_{\LL,\infty}\right).\end{eqnarray*}

\end{corollary}

This is a consequence of the previous Theorem \ref{thm:adapt} because any $\tilde{p}$ that is constant on the leaves of ${T}$ will make errors that are proportional to the continuity rates at those leaves. Therefore adapting its precision to different parts of the tree. %It would be interesting to prove oracle inequalities for other values of $\k$.
Alternatively, we could compute a different estimators for the context tree for each process, namely $\widehat T_{n,\ell}$ for $\ell=1,\ldots,\LL$. Under the stated conditions both approaches lead to the same rate of convergence and the pruning rules imply that $\widehat T_n \subset \cup_{\ell=1}^\LL \widehat T_{n,\ell}$. The potential advantage of the group approach is to provide a single context tree that is applicable to all processes. However, under different choices of $\k$ exploiting the group context tree can lead to improvements as discussed earlier (see Examples in Section \ref{Sec:Examples}).

The proof of Theorem \ref{thm:adapt} shows that (\ref{Condition:n}) suffices as a requirement for the empirical frequency of any leaf $w \in T$ in the sample to be close to its expected frequency, for any given $T\in \sT(h,\pi_*)$. In the next section we consider important classes of processes that fall within this $\beta$-mixing framework.

\section{Rates of convergence for theoretical examples}\label{Sec:Examples}

In what follows we apply the finite sample analysis from the previous section to obtain asymptotic results for some classes of processes. Throughout this section we assume that $\sS$ and $A$ are fixed. The sample size $n$ diverges, and for each $n$ we have parameters $\delta^{(n)},\delta_0^{(n)},\ELL^{(n)}$ and processes
$$X^{(n)}(1),\dots, X^{(n)}(\ELL^{(n)}).$$ We impose the restrictions $$\delta^{(n)}+\delta^{(n)}_0=\bigoh{n^{-\xi}}\mbox{ and }\ELL^{(n)}\,(\delta^{(n)}\delta^{(n)}_0)^{-1}=\bigoh{n^{\alpha}}$$ for constants $\alpha\geq \xi>0$. For each example we make mixing assumptions that we assume to hold uniformly in $n$ and $\ell$. We will omit the superscript ${(n)}$ from our notation.

\subsection{Parametric case}In our first example we assume that the true model for the $\ELL$ processes has a finite context tree $T^*$, which is allowed to vary with $n$. For a fixed $n$, this implies the $\ELL$ processes are finite Markov chains, thus exponentially $\phi$-mixing; we assume {\em uniform} exponential $\beta$-mixing over all processes and all values of $n$. We also assume that $T^*$ is a {\em complete tree}, meaning that any node has $0$ or $|A|$ children (cf. Remark \ref{rem:completetrees} in the Supplementary Material for some comments on this condition which is needed just to achieve uniqueness of the context tree).

\begin{example}[Parametric Case]\label{Ex:Parametric} The processes $X(1),\dots,X(\ELL)$ are stationary and ergodic. Moreover, there exists a finite complete tree $T^*$ and transition probabilities $p=(p_1,\dots,p_\ELL)$ that are compatible with the processes:
$$\forall 1\leq \ell\leq \ELL,\,\forall a\in A\,:\, \Pr{X(\ell)_0=a\mid X(\ell)_{-\infty}^{-1}} = p_\ell(a\mid T^*(X(\ell)^{-1}_{-\infty}))\,\mbox{a.s.}.$$
Moreover, each of these processes is stationary $\beta$-mixing with the same exponential rate function:
$$\beta(b) = \chi\,e^{-\nu\,b} $$
where $\chi,\nu>0$ are independent of the sample size. We assume that $h_{T^*}\pi_{T^*}^{-1}=\liloh{n/\log n}$ and $\pi^{-1}_{T^*}=\bigoh{n^{1-\eps}}$ for some $\eps>0$.

Finally, we define $d_{T^*}\equiv 1$ if $T ^*= \{e\}$; otherwise, when $T^*\neq \{e\}$, we set $$d_{T^*}\equiv \inf\limits_{w\text{ leaf of $T^*$}, w\neq e}\left\{\sup\limits_{w'\succeq {\rm par}(w)\text{ leaf of $T^*$}}\,d(p(\cdot|w),p(\cdot|w'))\right\}.$$ We assume
$$d_{T^*}^{-1} = \liloh{\sqrt{\frac{\pi_{T^*}n}{\log n}}}.$$\end{example}

Note that, by Remark \ref{rem:completetrees} in the Supplementary Material, $d_{T^*}>0$ is equivalent to requiring that $T^*$ is the unique minimal complete context tree compatible with the processes $X(1),\dots,X(\LL)$. Our analysis implies that the ``leaf separation quantity" $d_{T^*}$ above is an appropriate detection threshold.

We have the following result.
\begin{theorem}\label{thm:parametric} In the parametric case considered in Example \ref{Ex:Parametric}, with probability $1-\bigoh{n^{-\xi}}$ we have $\widehat{T}_n=T^*$ and
$$\sup_{x\in {\rm supp}}\|d(\widehat{P}_n(\cdot|x),p(\cdot|x))\|_{\LL,\k} = \bigoh{\sqrt{\frac{\log n}{\pi_{T^*}n}}}.$$
Moreover, the $\log n$ term in the error estimate may be improved to $\log \log n$ in the ``many processes"~case of Theorem \ref{thm:main}.\end{theorem}

Remark \ref{rem:BW} in the Supplementary Material \cite{SM-AGCT} shows that this compares favourably with the theorem of B\"{u}hlmann and Wyner \cite{BuhlmannWyner1999} for the case $\ELL=1$.

\subsection{Chains with infinite connections}\label{sec:complete} In our second example we allow for infinite order chains, but require a non-nullness condition and polynomial uniform continuity.

\begin{example}[Chains with infinite connections]\label{Ex:Infinite} The processes $X(1)$, $\dots$, $X(\ELL)$ are stationary and ergodic. There exist constants $\eta>0$,$\bgamma>1+2\alpha$ and $\Gamma_0>0$ not depending on the sample size $n$ such that for all $1\leq \ell\leq \ELL$,
\[\text{\bf (non-nullness)}\,:\, \inf_{a\in A, x\in A^{-1}_{-\infty}} p_\ell(a|x)\geq \eta\]and
$$\forall k\in\N\, \max_{w\in A^{-1}_{-k}}\|\gamma(w)\|_{\LL,\k}\leq \Gamma_0\,k^{-1-\bgamma}.$$\end{example}

In this case we have the following uniform bound.
\begin{theorem}\label{thm:complete} In the case of chains with infinite connections considered in Example \ref{Ex:Infinite}, we have  $$\Pr{\sup_{x\in {\rm supp}}\|d(\widehat{P}_n(\cdot|x),p(\cdot|x))\|_{\LL,\k}=\bigoh{\frac{1}{\log^{1+\bgamma}n}}} = 1-\bigoh{n^{-\xi}}.$$\end{theorem}
This result shows that $\widehat{P}_n(\cdot|x)$ converges to $p(\cdot|x)$ uniformly over pasts $x$, albeit at a slow rate $1/\log^{1+\theta}n$. Section \ref{rem:minimaxcomplete} in the Supplementary Material \cite{SM-AGCT} shows that this is the minimax rate for uniform convergence over pasts, when $\ELL=1$ and $A=\{0,1\}$. Nonetheless, because of the adaptivity of the estimator, faster rates of convergence would be achieved for pasts with better continuity rates.

\subsection{Renewal processes} Our last example consists of stationary binary renewal processes whose arrival distributions have uniformly bounded $2+\bgamma$ moments, $\bgamma>0$.

\begin{example}[Renewal processes]\label{Ex:Renewal} Each process $X(\ell)$ is a stationary and ergodic binary renewal process. The arrival distributions  $\mu_\ell$ have support on the whole of $\N$ and satisfy $$\sum_{k\in\N}\mu_\ell(k)\,k^{2+\bgamma}\leq C$$ for constants $C,\bgamma>0$ that do not depend on $1\leq \ell\leq \ELL$ or on the sample size. Moreover,  there exist values $\{f_\ell\}_{\ell=1}^{\ELL}$ (possibly depending on $n$) such that
\begin{equation}\label{eq:renewalcont}f_\ell =\lim_{k\to +\infty} \frac{\mu_\ell(k)}{\sum_{j\geq k}\mu_\ell(j)}.\end{equation}\end{example}

In this example we have no control over the continuity rates of the process at arbitrarily deep levels. We establish the following result.
\begin{theorem}\label{thm:renewal}
In the case of renewal processes as in Example \ref{Ex:Renewal}, let $G\subset A^{-1}_{-\infty}$ be the subset of all strings $x=\dots\,10^{s-1}$ where $s$ is such that
$$\min_{1\leq \ell\leq \LL}\sum_{j\geq s}\mu_\ell(j)\geq n^{-\frac{\bgamma}{\bgamma+1}}\log n.$$ Then, the AGCT estimator satisfies the following with probability $1-\bigoh{n^{-\xi}}$:
$$\forall x=\dots 10^{s-1}\in G\,:\,\|d(\widehat{P}_n(\cdot|x),p(\cdot|x)\|_{\LL,\infty} \leq C\,\left\|\left\{\sqrt{\frac{\log n}{n\sum_{j\geq s}\mu_\ell(j)}}\right\}_{\ell=1}^{\LL}\right\|_{\LL,\infty}.$$\end{theorem}

Theorem \ref{thm:renewal} highlights the adaptivity of the rates of convergence. Indeed for pasts $x\in G$ that are more frequent, corresponding to larger values of $\sum_{j\geq s}\mu_\ell(j)$, a faster rate of convergence is obtained.

\section{Example of Applications to Functionals}\label{Sec:Applications}

In this section we develop two applications of the AGCT model and estimation algorithms. In both cases the main objects of interest are neither the context trees, nor the transition probabilities, but rather functionals of these quantities. In what follows we estimate these functionals based on $\widehat T_n$ and $\widehat P_n$ accounting for the estimation error and possible misspecification.
These two applications rely on different metrics and penalty functions, providing a motivation for the generality of the previous analysis.

\subsection{Discrete stochastic dynamic programming}\label{Section:SDP}

Discrete stochastic dynamic programming (DSDP) focuses on solving
structured optimization problems in which a control $u$ is chosen
from a set of discrete options $\mathcal{U}$ at time $t$ and
yields some instantaneous payoff $f(a,u)$, where $a \in A$ is the
current state. The system evolves to a state $x_{t+1}$
at period $t+1$ according to an $A$-valued random function $s(x^t_{-\infty},u)$ satisfying:
$$\Pr{s(x^t_{-\infty},u)=a} = p_u(a\mid x^{t}_{-\infty})\,\,(a\in A,\, u\in \mathcal{U}).$$
That is, the transition probabilities of $s(x^t_{-\infty},u)$ depend on the chosen control $u\in \mathcal{U}$ and (potentially) the complete history of states $x^t_{-\infty}\in A^{-1}_{-\infty}$.

In applications, the main object of interest is the value function that characterize the expected future payoffs as a function of the history of states:
$$ V(x) = \max_{u \in \mathcal{U}} \{ f(x_{-1},u) + \disc \Ex{V(x\,s(x,u))} \} $$
where $\disc<1$ is the discount factor and $x\,s(x,u)$ is the concatenation of $x$ with $s(x,u)$.
In practice the transition probabilities between states need to be estimated.  However, even if transition probabilities were known a priori, the tractability of a dynamic programming formulation relies on avoiding a large state space (in this case potentially  $A^{-1}_{-\infty}$). Nonetheless the selected state space needs to be rich enough to capture the main features of the transition function $s(\cdot,\cdot)$.

Our motivation to apply the AGCT estimator is to create estimates for the transition probabilities while maintaining a data-driven manageable state space. This is exactly the case in which using the AGCT model can be more attractive than using a (potentially much larger) compatible tree $T^*$. We advocate in favor of a small approximation error (that is comparable with the noise in the estimation) with a substantially smaller state space. Thus, for $x \in A^{-1}_{-\infty}$, we propose to estimate the value function with
$$ \widehat V(x) = \widehat V(\widehat T_n(x)), $$
and the transition probabilities with $\hat p_{n,u}(\cdot\mid \widehat T_n(x))=\widehat P_{n,u}(\cdot\mid x)$, which are allowed to depend on the action $u \in \mathcal{U}$.
The total number of states of the estimated system is the number of leaves of $\widehat T_n$.

%The following result derives guarantees on the estimation error between the value function $V$ and its estimator $\widehat V$ that solves the fixed point problem defined with the estimates of the conditional probabilities $\widehat P_{n,u}(\cdot\mid x)$. %We highlight the uniformity over $x\in A^*$.
%\textbf{[NEED: Should the Lemma go to the Appendix? I like the lemma but it is asymmetric with the previous section]}

%\begin{lemma}\label{lemma:DP} In the discrete stochastic dynamic programming problem described above, for $q\geq 1$ the estimator $\widehat V$ satisfies $$ \max_{x \in A^{-1}_{-\infty}} \frac{| \widehat V(x) - V(x) |}{\displaystyle\|V( x \cdot)\|_{\frac{q}{q-1}}\max_{u\in \mathcal{U}} \| \widehat P_{n,u}(\cdot|x)-p_u(\cdot|x)\|_q } \leq \frac{\disc}{1-\disc}$$ where $q/(q-1)=\infty$ if $q=1$.\end{lemma}

%The lemma above allows us to apply the results on the estimation of the conditional probabilities to the stochastic dynamic programming problem.

Let the number of groups $\LL = |\mathcal{U}|$, $d_\ell = \|\cdot\|_1/2$ and $\k = r = \infty$. The data consists of $|\mathcal{U}|$ time series of length $n$ where on each series the decision is chosen to be constant $u\in \mathcal{U}$.
\begin{theorem}[Value Function Approximation]\label{thm:DP}
In the discrete stochastic dynamic programming problem, by choosing $\conf$ as in Definition \ref{def:firstchoice}, we have that with probability at least $1-\delta$ the estimator $\widehat V$ of the value function satisfies
{\small $$ \sup_{x \in {\rm supp}} \frac{| \widehat V(x) - V(x) |}{\sup_{a\in A}|V(xa)|{\displaystyle \inf_{T}}\left\{  \|\gamma(T(x))\|_{\LL,1}+\RR{\infty}{\conf(T(x))}\right\}} \leq \frac{\disc}{1-\disc} 4c\frac{c+1}{c-1} $$}
where
$ \displaystyle \RR{\infty}{\conf(T(x))} \lesssim \max_{\ell=1,\ldots,\LL}\sqrt{\frac{\log(n\LL/\delta) + |A|}{N_{n-1,\ell}(T(x))}}.$
\end{theorem}

As before the estimator enjoys adaptivity. In particular if we restrict the minimum above to typical trees we have the following corollary.

\begin{corollary}[Value Function Approximation for $\beta$-mixing] Under the same assumptions of Theorems \ref{thm:adapt} and \ref{thm:DP} with probability at least $1-\delta-\delta_0$
{\small $$
\sup_{T\in\sT(h,\pi_*)} \sup_{x \in A^{-1}_{-\infty}} \frac{| \widehat V(x) - V(x) |}{\sup_{a\in A}|V(xa)| \left\{ \|\gamma(T(x))\|_{\LL,1}+{\displaystyle \max_{\ell=1,\ldots,\LL}}\sqrt{\frac{\log(n/\delta)}{n\pi_\ell(T(x))}} \right\} } \leq \frac{C\disc}{1-\disc}$$}
\end{corollary}

 %Recall that under mild conditions the oracle balances bias and variance so that there is a constant $K$ such that
%$$ \sup_{x\in A^{-1}_{-\infty}} \frac{c_{T(x)}}{\RR{r}{\conf(T(x))}} \leq K. $$ In this case, the rate of convergence is governed by the oracle regularization term $\RR{r}{\conf(T(x))}$ which goes to zero as the sample size increases.

\subsection{Dynamic discrete choice models}

In dynamic discrete choice models a group of agents makes choices among the same set of options over time \cite{ArellanoHonore2001,ChernozhukovFernandez-ValHahnNewey2009,BrowningCarro2010,AguirregabiriaMira2010,BrowningCarro2011}. Models usually pre-specify a Markovian structure of the process, which is commonly assumed to be of order 1. We are interested in relaxing this assumption and to estimate the relevant context tree and the associated transition probabilities.

Agents are assumed to be sampled independently from the same population. We assume that the underlying context tree is the same across agents, but allow for the specific transition probability to vary by agent to account for heterogeneity. Herein we focus on the case with no covariates, but results can be extended to the case of discrete covariates \cite{BrowningCarro2010,BrowningCarro2011}.

In applications, the main interest is on statistics that are functions of the  conditional probabilities rather than the conditional probabilities themselves.
Here we focus on the average marginal dynamic effect for $a\in A$, $x,y\in A^{-1}_{-\infty}$ $${\rm AVEm}(a,x,y) = \Ex{m_\ell(a,x,y)}$$ where the marginal dynamic effect
$m_\ell(a,x,y) = p_\ell( a | x ) - p_\ell(a|y),$ and the expectation is taken over the distribution of agents in the population of interest.
The average marginal dynamic effect measures the average over the population of the change in the probability of selection of an option $a\in A$ between two different histories of past consumption $x, y \in A^{-1}_{-\infty}$.
Other measures of interest in the literature are the long run proportions of a particular option being chosen, or the probability of selecting a particular option $t$ periods ahead given the current state, see  \cite{BrowningCarro2010}. %in the fuof  choice ``$a$", or $p( a | w^{-t}_{-\infty} )$ which is computed based on the Chapman-Kolgomorov equations.

%, namely $$M_1(p,q ) = \frac{1}{N}\sum_{\ell=1}^N \|p_\ell - q_\ell\|_1 \ \ \mbox{and} \ \ \RR_2(v) = \sqrt{\frac{1}{N}\sum_{\ell=1}^N v^2_\ell}.$$

The estimator of the marginal dynamic effect for an option $a\in A$ and histories of consumptions $x,y\in A^{-1}_{-\infty}$ for the $\ell$th agent is $$ \hat m_\ell(a,x,y) =  \hat p_{n,\ell}( a | \widehat T_n(x) ) - \hat p_{n,\ell}(a|\widehat T_n(y)), $$ and the estimator for the average marginal dynamic effect is
$$ \widehat{{\rm AVEm}}(a,x,y) = \frac{1}{\LL}\sum_{\ell=1}^\LL \hat m_\ell(a,x,y).$$
Therefore, if the conditional probabilities were known, a rate of $1/\sqrt{L}$ would be optimal for the estimation of a single average marginal dynamic effect. In what follows we will use the AGCT model to estimate these dynamic effects uniformly over all histories. This motivates the choice of $d_\ell = \|\cdot\|_\infty$, $\k=1$, and $r=2$ in the AGCT estimator.

%\begin{theorem}\label{Thm:DCM}
%In the dynamic discrete choice model described above, if the context tree and conditional probabilities are estimated with $\conf$ as in Theorem \ref{Thm:confMultiple}, we have that with probability at least $1-2\delta$ the estimator for the average marginal dynamic effect satisfies %\textbf{[NEED: Update constants...]}
%%{\small $$\max_{a\in A, x,y \in A^{-1}_{-\infty}} \frac{| \widehat{{\rm AVEm}}(a,x,y) - {\rm AVEm}(a,x,y) |}{\RR{2}{\conf(T(x))}+\RR{2}{\conf(T(y))} + \sqrt{\frac{2\log(|A|\cdot n^4/[4\delta])}{\LL}}+\frac{2}{\LL}} \leq  1+2c + \frac{2c^2}{c-1}\sup_{z\in A^{-1}_{-\infty}} \frac{c_{T(z)}}{\RR{2}{\conf(T(z))}}.$$}
%{\small $$\max_{ { a\in A, \atop x,y \in A^{-1}_{-\infty}}} \frac{| \widehat{{\rm AVEm}}(a,x,y) - {\rm AVEm}(a,x,y) |}{c_{T(x)}+c_{T(y)}+\RR{2}{\conf(T(x))}+\RR{2}{\conf(T(y))} + \sqrt{\frac{2\log\left(\frac{|A|\cdot n^4}{4\delta}\right)}{\LL}}+\frac{2}{\LL}} \leq  \frac{4c^2}{c-1}.$$}
%where $\RR{2}{\conf(T(z))}\lesssim \sqrt{\frac{\log \log n + \log |A|}{\LL}\sum_{\ell=1}^\LL  1/ N_{n-1,\ell}(T(z))}$, $z\in A^{-1}_{-\infty}$.
%\end{theorem}

\begin{theorem}\label{Thm:DCM}
In the dynamic discrete choice model, if the context tree and conditional probabilities are estimated with $\conf$ as in Definition \ref{def:secondchoice}, we have that with probability at least $1-2\delta$ the estimator for the average marginal dynamic effect satisfies
{\small $$\sup_{ { a\in A, \atop x,y \in {\rm supp}}} \frac{| \widehat{{\rm AVEm}}(a,x,y) - {\rm AVEm}(a,x,y) |}{ {\displaystyle \max_{z=x,y}\inf_T} \left\{ \|\gamma(T(z))\|_{\LL,1}+\RR{2}{\conf(T(z))} \right\} + \sqrt{\frac{2\log\left(\frac{|A|\cdot n^4}{4\delta}\right)}{\LL}}+\frac{2}{\LL}} \leq  8c\frac{c+1}{c-1}.$$}
where $\RR{2}{\conf(T(z))}\lesssim \sqrt{\frac{\log \log n + \log |A|}{\LL}\sum_{\ell=1}^\LL  1/ N_{n-1,\ell}(T(z))}$, $z\in A^{-1}_{-\infty}$.
\end{theorem}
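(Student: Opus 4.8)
The plan is to split the deviation $\widehat{{\rm AVEm}}(a,x,y)-{\rm AVEm}(a,x,y)$ into an estimation part, controlled deterministically on the event $\Good_2$, and a sampling part over the population of agents, controlled by Hoeffding's inequality. Fix $a\in A$ and $x,y\in A^{-1}_{-\infty}$ and write
$$\widehat{{\rm AVEm}}(a,x,y)-{\rm AVEm}(a,x,y)=\frac1\LL\sum_{\ell=1}^\LL\bigl(\hat m_\ell(a,x,y)-m_\ell(a,x,y)\bigr)+\Bigl(\frac1\LL\sum_{\ell=1}^\LL m_\ell(a,x,y)-\Ex{m_\ell(a,x,y)}\Bigr).$$
Since $m_\ell=p_\ell(a|x)-p_\ell(a|y)$ and $\hat m_\ell=\hat p_{n,\ell}(a|\widehat T_n(x))-\hat p_{n,\ell}(a|\widehat T_n(y))$, the triangle inequality bounds the first sum, uniformly in $a$, by $\MM{1}{d(\widehat P_n(\cdot|x),p(\cdot|x))}+\MM{1}{d(\widehat P_n(\cdot|y),p(\cdot|y))}$, using $d_\ell=\|\cdot\|_\infty$ and $\MM{1}{v}=\LL^{-1}\sum_\ell v_\ell$. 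With $k=1$ and $r=2$ Condition AGCT holds, and the choice of $\conf$ from Theorem \ref{Thm:confMultiple} makes $\Good_2$ occur with probability at least $1-\delta$ (the assumptions of that theorem being in force); on $\Good_2$, the second inequality of Theorem \ref{thm:goodapprox} bounds each of these two terms by $c_{T(z)}+\max\{(1+2c)\RR{2}{\conf(T(z))},\,\tfrac{c+1}{c-1}c_{T(z)}\}$ for $z\in\{x,y\}$.

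The next step is to absorb these maxima into the advertised denominator. An elementary case analysis shows that for every $c>1$ and all $R,b\ge 0$ one has $b+\max\{(1+2c)R,\tfrac{c+1}{c-1}b\}\le\tfrac{4c^2}{c-1}(b+R)$: in the case $(1+2c)R\ge\tfrac{c+1}{c-1}b$ one uses $b\le\tfrac{c-1}{c+1}(1+2c)R$ together with $\tfrac{2c(1+2c)}{c+1}\le\tfrac{4c^2}{c-1}$, and in the opposite case one uses $b+\tfrac{c+1}{c-1}b=\tfrac{2c}{c-1}b$ together with $\tfrac{2c}{c-1}\le\tfrac{4c^2}{c-1}$. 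Hence, on $\Good_2$ and uniformly in $a,x,y$, the estimation part is at most $\tfrac{4c^2}{c-1}\bigl(c_{T(x)}+c_{T(y)}+\RR{2}{\conf(T(x))}+\RR{2}{\conf(T(y))}\bigr)$.

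For the sampling part, for each fixed $(a,x,y)$ the random variables $m_\ell(a,x,y)=p_\ell(a|x)-p_\ell(a|y)$ are i.i.d.\ across $\ell$ (agents are sampled independently), lie in $[-1,1]$, and have mean ${\rm AVEm}(a,x,y)$, so Hoeffding's inequality yields a tail $2\exp(-\LL t^2/2)$. I would then take a union bound over $a\in A$ and over the $O(n^2)$ contexts that occur in the data, hence $O(n^4)$ pairs, to conclude that with probability at least $1-\delta$, simultaneously over all $a,x,y$, the sampling part is at most $\sqrt{2\log(|A|n^4/(4\delta))/\LL}$, up to an additive $2/\LL$ term arising from passing between the data-realized contexts over which the union bound is taken and the true contexts $T^*(x),T^*(y)$ entering ${\rm AVEm}$. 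Intersecting the two events (total probability at least $1-2\delta$) and using $\tfrac{4c^2}{c-1}\ge 1$ gives the claimed inequality. The rate for $\RR{2}{\conf(T(z))}$ then follows by plugging $\conf_\ell^I$ from Theorem \ref{Thm:confMultiple} into $\RR{2}{v}=(\LL^{-1}\sum_\ell v_\ell^2)^{1/2}$ and noting that $d_\ell=\|\cdot\|_\infty$ corresponds to $|\sS|=|A|$, so that $\conf_\ell^I(T(z))^2\lesssim N_{n-1,\ell}(T(z))^{-1}(\log\log n+\log|A|)$ after simplifying the logarithmic factors.

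The main obstacle is the sampling part: ${\rm AVEm}(a,x,y)$ and the summands $m_\ell(a,x,y)$ depend on $x,y$ only through the true contexts $T^*(x),T^*(y)$, which may be long (even infinite), whereas a union bound with the advertised $O(n^4)$ cardinality can only range over contexts actually visited by the data. Reconciling the two — showing that the discrepancy between the estimator's context $\widehat T_n(x)\preceq T^*(x)$ and $T^*(x)$ itself is already accounted for by the bias terms $c_{T(x)},c_{T(y)}$ together with the $O(1/\LL)$ correction — is the delicate point; the remainder is bookkeeping around Theorem \ref{thm:goodapprox}.
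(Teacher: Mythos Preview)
Your decomposition into an estimation part and a sampling part is natural, and the algebra absorbing the two maxima into the constant $\tfrac{4c^2}{c-1}$ is fine. The genuine gap is precisely the obstacle you flag at the end, and your proposed resolution of it is not correct.

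The summands $m_\ell(a,x,y)=p_\ell(a|x)-p_\ell(a|y)$ depend on $x,y$ through the (possibly infinite) true contexts $T^*(x),T^*(y)$, so a union bound over the $O(n^4)$ pairs of strings appearing in the data does not cover them, and the discrepancy between $p_\ell(\cdot|x)$ and $p_\ell(\cdot|w)$ for a finite $w$ is a bias term of order $c_w$, not $O(1/\LL)$. The paper avoids this by inserting a \emph{third} layer: it routes through the oracle quantities $\bar m_\ell(a,x,y)=\bar p_{n,\ell}(a|T(x))-\bar p_{n,\ell}(a|T(y))$ and the corresponding $\overline{{\rm AVEm}}$. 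This buys two things simultaneously. First, the bias $|\overline{{\rm AVEm}}-{\rm AVEm}|$ is bounded directly by $c_{T(x)}+c_{T(y)}$, and the estimation part $\hat m_\ell-\bar m_\ell$ is handled by the \emph{first} inequality of Theorem~\ref{thm:goodapprox} (distance to $\bar p_n(\cdot|T(z))$, not to $p(\cdot|z)$). Second, and crucially, for fixed finite $w,w'$ the variables $\bar m_\ell(a,w,w')$ are i.i.d.\ across $\ell$, and $\bar m_\ell(a,x,y)$ depends on $x,y$ only through the finite strings $T(x),T(y)$, which are seen by every agent.

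The union bound is then made rigorous by a leave-one-out trick: writing $E_\LL\le \tfrac{2}{\LL}+\tfrac{\LL-1}{\LL}E_{\LL-1}$, one controls $E_{\LL-1}(a,w,w')$ uniformly over all $w,w'$ with $N_{n,\LL}(w)>0$, $N_{n,\LL}(w')>0$. Since agent $\LL$'s data is independent of agents $1,\dots,\LL-1$, the probability factors, Hoeffding applies to $E_{\LL-1}$ for each fixed $(w,w')$, and $\sum_{w,w'}\Pr{N_{n,\LL}(w)>0,\,N_{n,\LL}(w')>0}\le n^4/4$ supplies the cardinality. This is where the $2/\LL$ actually originates---from sacrificing one agent to decouple the random index set from the Hoeffding deviation---not from any context-length mismatch.
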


This uniform rate of convergence for the average marginal dynamic effect is governed by the rate of convergence of the conditional probabilities of the best context tree estimator, and the number of different agents in the data. %As mentioned in the previous section, in many models of interest there is a constant $K$ such that $$ \sup_{z\in A^{-1}_{-\infty}} \frac{c_{T(z)}}{\RR{2}{\conf(T(z))}} \leq K $$ so that the rate of convergence of the conditional probabilities is governed by the regularization terms and $\sqrt{\log n/ \LL}$.
Interestingly, the above result holds uniformly over all pairs $x, y\in A^{-1}_{-\infty}$. %The cost to attain this uniform rate that accounts for the model selection and the size of $\widehat T_n$ is the $\sqrt{\log n}$.

%Under the mild condition RL, with high probability, it follows that $|\widehat T_n| \leq |A|^C \cdot |T|$ for some constant $C$  so that the uniformity requires only $\log(T/\delta)=o(N)$.

%\subsection{Dynamic discrete choice structural model}
%
% Idea is to build up on the previous two applications.
% It might become nasty since for every structural theta, we should choose a different context tree (at least in principle), and then solve the DP to evaluate the "fit-function"... also the regularization needs to be balanced w.r.t. the new fit function.
%

%\clearpage
\section{Linguistic rhythm differences between European and Brazilian Portuguese}\label{Sec:Linguistic}

In this section we revisit the application and the data considered in \cite{GGGL2012} regarding the linguistic features underlying the European Portuguese (EP) and Brazilian Portuguese (BP) languages. The goal of \cite{GGGL2012} was to compare the rhythmic fingerprints of the two languages in written form.

For each language, the data consist of articles from a popular daily newspaper from the years 1994 and 1995. For each year and each newspaper, 20 articles were randomly selected. The linguistic features are represented by a quinary alphabet with four rhythmic features (0,1,2,3) and an additional feature representing the end of an article (4). The four rhythmic features represent: non-stressed, non prosodic word initial syllable (0); stressed, non prosodic word initial syllable (1); non-stressed, prosodic word initial syllable (2); and stressed prosodic word initial syllable (3) Each data sample was then treated as a stochastic process, and a variant of the BIC model selection method was used to fit a context tree to each sample. Their main finding was summarized as follows.

\begin{quotation}[T]he main difference between the two languages is that whereas in BP both 2 (unstressed boundary of a phonological word) and 3 (stressed boundary of a phonological word) are contexts, in EP only 3 is a context. This means that in EP, as far as noninitial stress words are concerned, the choice of lexical items is dependent on the rhythmic properties of the preceding words. This is not true when the word begins with a stressed syllable. This does not occur in BP, where word boundaries are always contexts, and as such insensitive to what occurs before, independently of being stressed or not. These statistical findings are compatible with the current discussion in the linguistic literature concerning the different behavior of phonological words in the two languages [...] (Galves et. al., \cite[Section 6]{GGGL2012})\end{quotation}

In \cite{GGGL2012}, for each newspaper, the 40 days sample is concatenated into a single string containing respectively a sequence of 105326 and 97750 linguistic features. In order to concatenate articles from different days, a homogeneity assumption was required. However, heterogeneity over different days, or at least over the different years are a source of potential concern. For example, 1994 was a World Cup year and the media in both countries are heavily influenced by such event. Our own study accounts for possible heterogeneity on the conditional probabilities by treating each year as a group in the group context tree model. Thus we allow for year specific conditional probabilities.

Figure \ref{Fig:2groups} displays the estimated context trees. Our findings are in good agreement with \cite{GGGL2012}, in that the context trees found for BP in both studies are the same, and our tree for EP strictly contains the one found in \cite{GGGL2012}. In particular, we corroborate their finding that 2 is a context for BP but {\em not} for EP.

\begin{figure}[h!]
\begin{center}
\pstree[levelsep=1cm,treesep=0.4cm]{\Toval{Brazilian Portuguese}}
{
\pstree{\Tcircle{0}}
        {
        \pstree{\Tcircle{0}}
        {\Tcircle{0} \Tcircle{1} \Tcircle{2} \Tcircle{3}}
        \Tcircle{1}
        \Tcircle{2}
        \Tcircle{3}
        }
\pstree{\Tcircle{1}}
        {
        \pstree{\Tcircle{0}}
            {\Tcircle{0} \Tcircle{2}}
        \Tcircle{2}
        }
\Tcircle{2}
\Tcircle{3}
\Tcircle{4}
}

\vspace{0.3cm}
\pstree[levelsep=1cm,treesep=0.4cm]{\Toval{European Portuguese}}
{
\pstree{\Tcircle{0}}
        {
        \pstree{\Tcircle{0}}
        {\Tcircle{0} \Tcircle{1} \Tcircle{2} \Tcircle{3}}
        \Tcircle{1}
        \Tcircle{2}
        \Tcircle{3}
        }
\pstree{\Tcircle{1}}
        {
        \pstree{\Tcircle{0}}
            {\Tcircle{0} \Tcircle{2}}
        \Tcircle{2}
        }
\pstree{\Tcircle{2}}
        {
        \pstree{\Tcircle{0}}
            {
            \Tcircle{0}
            \pstree{\Tcircle{1}}
                {
                \pstree{\Tcircle{0}}
                    { \Tcircle{0}  \Tcircle{2} }
                \pstree{\Tcircle{2}}
                    {
                    \pstree{\Tcircle{0}}
                        {
                        \Tcircle{0}
                        \pstree{\Tcircle{1}}
                            { \Tcircle{0} \Tcircle{2} }
                        \Tcircle{3}
                        }
                    \Tcircle{1}\Tcircle{3}\Tcircle{4}
                    }
                }
            }
        \Tcircle{1}\Tcircle{3}\Tcircle{4}
        }
\Tcircle{3}\Tcircle{4}
}
\end{center}\caption{Estimated context trees for the Brazilian Portuguese and European Portuguese languages based accounting for heterogeneity in different years.}\label{Fig:2groups}
\end{figure}

\section{Discussions and Variations}\label{Sec:Conclusion}

\subsection{Comparisons with Single-process case}

We briefly indicate similarities and differences between the results presented above with \cite{FerrariWyner2003}, which concerns the single-process case. The work \cite{FerrariWyner2003} proves weak consistency in the estimation of conditional probabilities and of (truncated) context trees for all nodes in a tree $T_n$ that grows with the sample size $n$. For this they assume that the stochastic process is geometrically $\alpha$-mixing, and also that there is sufficient separation between the conditional probabilities corresponding to leaves of the tree and their parents. The authors of \cite{FerrariWyner2003} point out that the latter assumptions might be hard to check in practice.

Our analysis differs from theirs in several important aspects even in the case of $\ELL=1$ processes. Our goal is to estimate transition probabilities given the entire infinite past, uniformly over all such pasts. Achieving consistency in our setting requires that these probabilities be continuous functions of the infinite past, which \cite{FerrariWyner2003} do not need to assume. By contrast, given continuity and $\beta$-mixing, model selection and probability estimation become separate tasks. In particular, our results on the transition probabilities do not require any kind of separation between leaves and their parents. In addition, our results cover natural and interesting classes of processes (such as certain renewal processes) where geometric mixing bounds are not available. Other points of the analysis are mostly incomparable due to the differences in assumptions.

\subsection{Computational Efficiency and Variations}

The algorithm can be implemented efficiently, i.e. in polynomial time with respect to the parameters $\LL$ and $n$ of the data. Observe that ${\sf CanRmv}(w)$ can be computed efficiently from the list of values:
$${\rm List}(w)\equiv \{(\hat{p}_n(\cdot|w'),\conf(w'))\,:\, w'\in E_n,\, w'\succeq w\}$$
and the corresponding list for $\parent(w)$. Since ${\sf CanRmv}(w)$ is only computed for leaves of the current tree $\widehat{T}_n$,  we only need to ensure that at all times, each leaf node and each parent of a leaf stores the correct list ${\rm List}(w)$. This can be achieved as follows.
\begin{itemize}
\item initially, one sets ${\rm List}(w)=\{\hat{p}_n(\cdot|w),\conf(w)\}$ for each $w\in E_n$;
\item whenever a leaf $w$ is examined in $\widehat{T}_n$, its parent's list is updated:
$${\rm List}({\rm par}(w))\leftarrow  {\rm List}({\rm par}(w))\cup \left(\bigcup\limits_{w'\in \widehat{T}_n\,:\,{\rm par}(w')={\rm par}(w)} {\rm List}(w')\right).$$
Actually, this update only needs to be performed at the first time a child of ${\rm par}(w)$ is examined.\end{itemize}We note in passing that more efficient algorithms can be found for the case $\ELL=1$ with the $\ell_\infty$ metric by using compact suffix trees. This will be elaborated upon in a companion paper.

All results established in this work would remain valid if in the definition of  ${\sf CanRmv}(w)$ in (\ref{Def:IsPrunable})
we set $w'' \in \mathcal{W}$ where $\parent(w) \in \mathcal{W} \subseteq \{ z \in E_n : z \succeq \parent(w)\}$. For the same choice of confidence radius, computationally we would like to use the smallest set $\mathcal{W}$ while statistically we would like to use the largest such set.

\subsection{Improvement on confidence radii based on maximal variance}

The choices of confidence radii described in Definition \ref{def:firstchoice} and \ref{def:secondchoice} do not explore the intrinsic variance within the norm, namely
$$ \bar{\sigma}_\ell^2(w) := \max_{S\in\sS} \ \bar{p}_{n,\ell}(S\mid w)(1-\bar{p}_{n,\ell}(S\mid w))$$
where $\bar{p}_{n,\ell}(S \mid w)$ is a weighted sum of probabilities defined in (\ref{eq:deforacle}) for which $\hat p_{n,\ell}(S\mid w)$ is a consistent estimator. (These probabilities can be seen as an oracle estimator, see Section \ref{Sec:Oracle} in the Supplementary Material \cite{SM-AGCT} for a discussion.) Generically, adding variance to our bounds does not necessarily improve rates of convergence but can improve finite sample performance, particularly in the case of $d_\ell = \|\cdot\|_\infty$ with $|A|>2$. Here we discuss such a modification of Definition \ref{def:firstchoice}  that leads to strictly smaller confidence radii while still achieving the same guarantees as in Theorem \ref{thm:main}. However, the variance-based control can be applied to a suffix $w$ only if there were enough occurrences of the suffix in the data, namely the following event occurred
$$J_{\ell,w} :=\left\{ N_{n-1,\ell}(w)  \geq \frac{2\log(n^2|\sS|/\delta) + 4\log[2+2\log(\bar \sigma_\ell^2(w)N_{n-1,\ell}(w))]}{\bar \sigma_\ell^2(w)\log^2(3/2)}\right\}.$$
Otherwise, we use the previous choice as in Definition \ref{def:firstchoice}. To concisely state the results regarding the maximum variance we define
$$\widetilde \sigma_\ell(w) :=  \sqrt{2}\bar \sigma_\ell(w)\Ind{\{ J_{\ell,w}\}} + \Ind{\{ J_{\ell,w}^c\}} \leq 1.$$
We define $\conf^{\tilde{\sigma}}_\ell(w):= \widetilde \sigma_\ell(w)\conf_\ell(w)$. By construction, it follows that $\conf^{\tilde{\sigma}}_\ell(w)\leq \conf_{\ell,m}(w)$ since $\widetilde \sigma_\ell(w) \leq 1$. However, $\conf^{\tilde{\sigma}}_\ell(w)$ might not be non-increasing in $w$. Nonetheless, the confidence radius $\conf^{\tilde{\sigma}}_\ell(w)$ can be majorated by the monotone confidence radius which still leads to an improvement over $\conf_{\ell,m}(w)$, namely
$$ \conf^*_\ell(w) = \max_{w'\preceq w} \conf^{\tilde{\sigma}}_\ell(w') \leq \max_{w'\preceq w} \conf_{\ell,m}(w') = \conf_{\ell,m}(w).$$
A side remark is that  $\conf^*_\ell(w)$ requires the estimation of $\bar\sigma_\ell(w)$. Indeed, the estimates need to satisfy $\bar\sigma_\ell(w) \leq \hat\sigma_\ell(w)$ with high probability uniformly over $w\in E_n$. However, it follows that any such estimator will satisfy $\hat\sigma_\ell(w)\leq 1/2$ so that even by setting $\hat\sigma_\ell(w)= 1/2$ we still achieve smaller confidence radius than the original definition.

\section{Conclusion}

Understanding the memory structure of stochastic processes has proved to be of fundamental importance in applications. VLMC models have been playing a central role in modeling and estimating  stationary processes with discrete alphabets. In this work we consider an extension of the traditional VLMC in which many stationary processes share the same context tree but potentially different conditional probabilities. Since we allow for potentially infinite memory processes, we propose to focus the estimation on an oracle context tree that optimally balances the bias and variance trade-off for a given sample.

We propose a computationally efficient estimator for the underlying context tree and the associated conditional probabilities. We establish several properties of the proposed estimator, including adaptivity and oracle inequalities for the estimation of conditional probabilities. We propose and analyze data-driven choices of the penalty parameters for the regularization, and study its typical behavior under $\beta$-mixing conditions. Two applications, discrete dynamic stochastic programming and discrete choice models, motivated the proposal of the AGCT model. In these applications we are interest in functionals of the conditional probabilities.   We developed the uniform bounds for the estimation of these functionals accounting for possible misspecification of the estimated context tree.

Finally, we investigate the application of the group context tree model and the proposed estimators to investigate the rhythmic differences between Brazilian and European Portuguese allowing for possible heterogeneity in the sample. Our results fully support previous findings of the literature.

\appendix

%\section{Proofs of Section 4}\label{App:ProofsSec4}%\ref{Sec:Data-drivenChoice}}

\section{Proof of Theorem 1}\label{App:ProofsSec4}
%\ref{thm:main}}\label{App:ProofsSec4}

Theorem \ref{thm:main} follows directly from three Lemmas related to the {\em good event} $\Good_*$:
\begin{equation}\label{eq:defgood1}\Good_*\equiv \bigcap_{w\in A^*}\left\{\begin{array}{l}\|d(p(\cdot|w),\widehat{p}_n(\cdot|w))\|_{\ELL,\k}\\ \leq \|\{\gamma_\ell(w)\}_{\ell=1}^{\ELL}\|_{\ELL,\k} + \|\conf(w)\|_{\ELL,r}\end{array}\right\},\end{equation}
where $r=\k\in [1,+\infty]$ in the ``general case" and $r=2$, $\k=1$ in the ``many processes"~case of Theorem \ref{thm:main}, and for $|w|<\infty$ we define $p_\ell(a|w)= \Pr{X_0(\ell)=a\mid X_{-|w|}^{-1}(\ell)=w}$, if $\Pr{X_{-|w|}^{-1}(\ell)=w}>0$, and $p_\ell(a|w)=1/|A|$ if $\Pr{X_{-|w|}^{-1}(\ell)=w}=0$.

\begin{lemma}[Proven in Section \ref{Ass:Sec4Contain}]\label{lem:contain}If $\Good_*$ holds, $\widehat{T}_n\subset T^*$.\end{lemma}
\begin{lemma}[Proven in Section \ref{Ass:Sec4Adapt}]\label{lem:adapt} If $\Good_*$ holds, then for all $x\in A^{-1}_{-\infty}$ and any finite tree $T$
$$\|d(p(\cdot|x),\widehat{P}_n(\cdot|x))\|_{\ELL,\k}\leq  \frac{2c+2}{c-1}\,\|\{\gamma_\ell(T(x))\}_{\ell=1}^{\ELL}\|_{\ELL,\k} + (1+2c)\|\conf(T(x))\|_{\ELL,r}.$$\end{lemma}
\begin{lemma}[Proven in Section \ref{Ass:Sec4AProbGood}]\label{lem:probgood}The probability of $\Good_*$ is $\geq 1-\delta$.\end{lemma}

These three lemmas are proven subsequently.

\subsection{Proof of Lemma \ref{lem:contain}}\label{Ass:Sec4Contain}

Let $z\in A^*\backslash T^*$ and assume $\Good_*$; we will show that $z\not\in\widehat T_n$. Let $w$ be an ancestor of $z$ which is a leaf of $T^*$. Because $T^*$ is the true context tree, $ \|\{\gamma_\ell(w')\}_{\ell=1}^{\ELL}\|_{\ELL,\k}=0$ for all descendants of $w$, in particular for $z,\parent(z)$ and their descendants. If we assume $\Good_*$ holds, the triangle inequality gives
$$\forall u,v\succeq \parent(z)\,:\,\|d(\widehat{p}_n(\cdot|u),\widehat{p}_n(\cdot|v))\|_{\ELL,\k}\leq  \|\conf(u)\|_{\ELL,r} +   \|\conf(v)\|_{\ELL,r},$$
and one can easily deduce from this that ${\rm CanRmv}(u)=1$ for all $u\succeq z$. This means $z$ is pruned from the tree.

\subsection{Proof of Lemma \ref{lem:adapt}}\label{Ass:Sec4Adapt}

Fix $x$ and $T$. Recall that $\widehat{P}_n(\cdot|x)=\widehat{p}_n(\cdot|\widehat T_n(x))$ and that $\|\conf(w)\|_{\ELL,r}$ is monotone non-decreasing in $w$. Notice that $\widehat T_n(x)$ and $T(x)$ are both finite suffixes of $x$. This allows us to divide the analysis into three cases.

{\bf Case 0: $\widehat T_n(x)=T(x)$.} The result follows from
$$\begin{array}{rl}
& \|d(p(\cdot|x),\widehat{p}_n(\cdot|\widehat T_n(x)))\|_{\ELL,\k}  =  \|d(p(\cdot|x),\widehat{p}_n(\cdot|T(x)))\|_{\ELL,\k} \\
& \leq \|d(p(\cdot|x),p(\cdot|T(x)))\|_{\ELL,\k}+\|d(p(\cdot|T(x)),\widehat{p}_n(\cdot|\widehat T_n(x)))\|_{\ELL,\k}\\
& \leq 2\|\{\gamma_\ell(T(x))\}_{\ell=1}^{\ELL}\|_{\ELL,\k} + \|\conf(T(x))\|_{\ELL,r}\end{array}$$
where the first equality is from $\widehat T_n(x)=T(x)$, the second step from triangle inequality, and the third from the event $\Good_*$ and the definition of the continuity rates.\\

{\bf Case 1: $\widehat T_n(x)\prec T(x)$.} Let $w$ denote the child of $\widehat T_n(x)$ on the path to $T(x)$. Note that $w$ must have been pruned, otherwise $w\in \widehat{T}_n$ would be a longer suffix of $x$ than $\widehat{T}_n(x)$.

We deduce that $w$ satisfies ${\sf CanRmv}(w)=1$, like any other pruned node. In particular, this implies that $T(x)\succeq w$ and $\widehat T_n(x)=\parent(w)$ satisfy
$$\|d(\widehat{p}_n(\cdot|T(x)),\widehat{p}_n(\cdot|\widehat T_n(x)))\|_{\ELL,\k}\leq c\,[\|\conf(T(x))\|_{\ELL,r} +   \|\conf(\widehat T_n(x))\|_{\ELL,r}].$$
Since $\widehat T_n(x)\prec T(x)$ the RHS of the above display is $\leq 2c\,\|\conf(T(x))\|_{\ELL,r}$, and the occurrence of $\Good_*$ gives
$$\|d(p(\cdot|T(x)),\widehat{p}_n(\cdot|T(x)))\|_{\ELL,\k}\leq \|\{\gamma_\ell(T(x))\}_{\ell=1}^{\ELL}\|_{\ELL,\k}+\|\conf(T(x))\|_{\ELL,r}.$$
Combining these observations and employing the triangle inequality gives:
$$\|d(p(\cdot|T(x)),\widehat{p}_n(\cdot|\widehat T_n(x)))\|_{\ELL,\k}\leq \|\{\gamma_\ell(T(x))\}_{\ell=1}^{\ELL}\|_{\ELL,\k}+(1+2c)\,\|\conf(T(x))\|_{\ELL,r}.$$
Using $\|d(p(\cdot|x),p(\cdot|T(x)))\|_{\ELL,\k}\leq \|\{\gamma_\ell(T(x))\}_{\ell=1}^{\ELL}\|_{\ELL,\k}$ and another application of the triangle inequality finishes the proof in this case.\\

{\bf Case 2: $\widehat T_n(x)\succ  T(x)$.} We make the following claim.
\begin{claim}[Proven subsequently]\label{Claim:1}$$\|\conf(\widehat T_n(x))\|_{\ELL,r} +   \|\conf(T(x))\|_{\ELL,r}\leq \frac{3}{c-1}\,\|\{\gamma_\ell(T(x))\}_{\ell=1}^{\ELL}\|_{\ELL,\k}.$$\end{claim}
To see how the claim implies the result, we note that
\begin{eqnarray*}\|d(p(\cdot|x),\widehat{p}_n(\cdot|\widehat T_n(x)))\|_{\ELL,\k}&\leq&  \|d(p(\cdot|x),p(\cdot|\widehat T_n(x)))\|_{\ELL,\k} \\ & & +  \|d(p(\cdot|\widehat T_n(x)),\widehat{p}_n(\cdot|\widehat T_n(x)))\|_{\ELL,\k}\\ \mbox{(use continuity rates)} &\leq &  \|\{\gamma_\ell(\widehat T_n(x))\}_{\ell=1}^{\ELL}\|_{\ELL,\k}  \\ & & +  \|d(p(\cdot|\widehat T_n(x)),\widehat{p}_n(\cdot|\widehat T_n(x)))\|_{\ELL,\k}\\  \mbox{($\Good_*$ holds)} &\leq &   2\|\{\gamma_\ell(\widehat T_n(x))\}_{\ell=1}^{\ELL}\|_{\ELL,\k} \\ & &+ \|\conf(\widehat T_n(x))\|_{\ELL,r} \\ \mbox{( $T(x)\preceq \widehat T_n(x)\Rightarrow \gamma_\ell(T(x))$ larger)} &\leq & 2\,\|\{\gamma_\ell(T(x))\}_{\ell=1}^{\ELL}\|_{\ELL,\k} \\ & & + \|\conf(\widehat T_n(x))\|_{\ELL,r} \\ \mbox{(use Claim)}&\leq  & \left(2 + \frac{3}{c-1}\right)\,\,\|\{\gamma_\ell(T(x))\}_{\ell=1}^{\ELL}\|_{\ELL,\k}\end{eqnarray*}

It remains to prove the claim. Since $\widehat T_n(x)$ was not pruned, there exist $w'\succeq \widehat T_n(x)$, $w''\succeq \parent(\widehat T_n(x))\succeq T(x)$ with
\begin{equation}\label{eq:reversethisman}c\,[\|\conf(w')\|_{\ELL,r} +   \|\conf(w'')\|_{\ELL,r}]<\|d(\widehat{p}_n(\cdot|w'),\widehat{p}_n(\cdot|w''))\|_{\ELL,\k}.\end{equation}
On the other hand,
\begin{eqnarray*}\|d(\widehat{p}(\cdot|w'),\widehat{p}_n(\cdot|w''))\|_{\ELL,\k}&\leq &\|d(p(\cdot|w'),p(\cdot|w''))\|_{\ELL,\k}\\ & & + \|d(\widehat{p}(\cdot|w'),p(\cdot|w'))\|_{\ELL,\k}\\ & & + \|d(\widehat{p}(\cdot|w''),p(\cdot|w''))\|_{\ELL,\k}.\end{eqnarray*}
The first term in the RHS is $\leq\|\{\gamma_\ell(T(x))\}_{\ell=1}^{\ELL}\|_{\ELL,\k}$ since $w',w''\succeq T(x)$. The other two terms can be bounded via $\Good_*$, and we obtain:
\begin{eqnarray*}\|d(\widehat{p}(\cdot|w'),\widehat{p}_n(\cdot|w''))\|_{\ELL,\k}&\leq &3\,\|\{\gamma_\ell(T(x))\}_{\ell=1}^{\ELL}\|_{\ELL,\k}\\ & & +\|\conf(w')\|_{\ELL,r} +   \|\conf(w'')\|_{\ELL,r}.\end{eqnarray*}
Combining this with (\ref{eq:reversethisman}) gives:
$$\|\conf(w')\|_{\ELL,r} +   \|\conf(w'')\|_{\ELL,r}\leq \frac{3}{c-1}\,\|\{\gamma_\ell(T(x))\}_{\ell=1}^{\ELL}\|_{\ELL,\k}.$$
The proof of the claim finishes once we recall that $w'\succeq \widehat T_n(x)$, $w''\succeq T(x)$ and the confidence radii $\|\conf(w)\|_{\ELL,r}$ are monotone functions of $w$.

\subsection{Proof of Lemma \ref{lem:probgood}}\label{Ass:Sec4AProbGood}

We define what one might call {\em oracle transition probabilities}: given a context $w \in E_n$, $ \ a \in A, \  \ell=1,\ldots,\LL$ as
\begin{equation}\label{eq:deforacle}\overline{p}_{n,\ell}(a|w)\equiv \frac{1}{N_{n-1,\ell}(w)}\sum_{i=|w|+1}^{n}\Ind{\{X^{i-1}_{i-|w|}(\ell)=w\}}\,p_\ell(a|X^{i-1}_{-\infty}(\ell)),\end{equation}
 and as $\overline{p}_{n,\ell}(a|w)\equiv 1/|A|$ if $w \notin E_n$. A salient feature is that these random transition probabilities are always close to the actual transition probabilities in the following sense:
\begin{equation}\label{eq:oraclealwaysclose}\mbox{If $T(x)\in E_n$, }\|d(p(\cdot|x),\overline{p}_n(\cdot|T(x))\|_{\ELL,\k}\leq \|\{\gamma_\ell(T(x))\}_{\ell=1}^\ELL\|_{\ELL,\k}.\end{equation}
This follows from the fact that $\overline{p}_{n,\ell}(\cdot|T(x))$ is a convex combination of transition probabilities $p_\ell(\cdot|y)$ with $y\succeq T(x)$.

To continue we choose a parameter $m=\infty$ in the ``general case"~of Theorem \ref{thm:main}, and $m=2$ in the ``many processes" case of the same Theorem. The following regularization event will be important in our analysis:
{ \begin{equation}\label{Def:Goodm}\Good_m\equiv \bigcap_{w\in E_n}\left\{ \ \mbox{ $ \RR{m}{\left\{ \frac{d_\ell(\overline{p}_{n,\ell}(\cdot|w),\hat{p}_{n,\ell}(\cdot|w))}{\conf_{\ell}(w)}  \right\}_{\ell=1}^\LL}$}\leq 1 \right\}.\end{equation}}

\begin{claim}\label{claim:Good}$\Good_m\subset \Good_*$, where $\Good_*$ was defined in (\ref{eq:defgood1}).\end{claim}
\begin{proof}[Proof] By (\ref{eq:oraclealwaysclose}) and the triangle inequality it suffices to show that we have the inequality
$$\MM{\k}{d(\hat p_n(\cdot|w), \bar p_n(\cdot|w))}\leq \RR{r}{\conf(w)}$$
for all $w\in A^*$ whenever $\Good_m$ holds. This is trivially true when $w\not\in E_n$. When $w\in E_n$, H\"{o}lder's inequality implies:
$$\MM{\k}{d(\hat p_n(\cdot|w), \bar p_n(\cdot|w))}\leq \RR{m}{\left\{ \frac{d_\ell(\overline{p}_{n,\ell}(\cdot|w),\hat{p}_{n,\ell}(\cdot|w))}{\conf_{\ell}(w)}  \right\}_{\ell=1}^\LL}\, \,\RR{r}{\conf(w)}$$
and the first term in the RHS is $\leq 1$ in $\Good_m$.\end{proof}

The remainder of the proof consists of showing that\begin{claim}\label{claim:finalmain}$\Pr{\Good_m}\geq 1-\delta$.\end{claim}
This clearly suffices to finish the proof in both cases.

We will use a martingale framework from Appendix \ref{App:Martingale} in the Supplementary Material \cite{SM-AGCT}. The following is the special case $\gamma=2$ and $i_0=\log_2 n$ of Lemma \ref{thm:economicalmgfestimate} %{lem:martingalesubgaussian}
in the appendix.
\begin{lemma}\label{lem:martingalemain}Let $(M_j,\sF_j)_{m=0}^n$ be a martingale with $M_0=0$. Assume that for each $1\leq j\leq n$ we have a $\sF_{j-1}$-measurable indicator random variable $Y_{j-1}$ with $|M_{j}-M_{j-1}|\leq Y_{j-1}$ almost surely, and define $V_{n}\equiv \sum_{j=0}^{n-1}Y_j^2.$ Then
$$\forall t\geq 0\,:\, \Pr{\frac{M^2_n}{4V_n}-2\ln(2+\log_2 V_n)\geq t\mid V_n>0}\leq e^{-t}.$$\end{lemma}
Recall that the metric $d=d_1=\dots=d_\ELL$ is given by
$$d_\ell(p,q) = d_\sS(p,q) = \sup_{A\in \sS}|p(A) - q(A)|.$$
We will consider a family of martingales indexed by $w\in A^*$, $S\in \sS$ and $1\leq \ell\leq \ELL$.  A simple calculation reveals that for any $j\in\N$
$$M_{j,\ell}(wS) = N_{j-1,\ell}(w)\,\sum_{a\in S}(\widehat{p}_{j,\ell}(a|w)-\overline{p}_{j,\ell}(a|w))$$ is a martingale under the natural filtration. One may take
$$Y_{j-1} = 0\mbox{ if }j-1<|w|\mbox{ and }Y_{j-1}=\Ind\,\{X^{m-1}_{m-|w|}=w\}\mbox{ otherwise},$$
so that the corresponding $V_n=V_{n,\ell}(wS)$ equals $N_{n-1,\ell}(w)\vee 1$. We also have that
$$N_{n-1,\ell}(w)\,d_\ell(\widehat{p}_{n,\ell}(\cdot|w),\overline{p}_{n,\ell}(\cdot|w))^2 = \max_{S\in\sS}\frac{M_{n}(wS)^2}{V_n(wS)}$$
whenever $N_{n-1,\ell}(w)>0$. The following is immediate from this discussion combined with Lemma \ref{lem:martingalemain}.

\begin{lemma}\label{lem:martingalecondensed}For any $w\in A^*$ and any process $1\leq \ell\leq \ELL$ we have:
$$\forall t\geq 0\,:\, \Pr{\frac{N_{n-1,\ell}(w)\,d_\ell(\widehat{p}_{n,\ell}(\cdot|w),\overline{p}_{n,\ell}(\cdot|w))^2}{4\,[2\ln(2 + \log_2N_{n-1,\ell}(w)) + \ln|\sS| + t]}>1\mid N_{n-1,\ell}(w)>0}\leq e^{-t}.$$\end{lemma}

We use this lemma to prove Claim \ref{claim:finalmain} in the two cases.

\begin{proof}[Proof of Claim \ref{claim:finalmain} in the ``general case"] Set $t:=\ln(n^2\LL/\delta)$. For $\ell=1,2,\dots,\LL$, define
\begin{eqnarray*}A_\ell&:=&\left\{\frac{N_{n-1,\ell}(w)\,d_\ell(\widehat{p}_n(\cdot|w),\overline{p}_n(\cdot|w))^2}{4\,[2\ln(2 + \log_2N_{n-1,\ell}(w)) + \ln|\sS| + t]}>1\right\};\\ B_{\ell}&:=&\{N_{n-1,\ell}(w)>0\}.\end{eqnarray*}
Lemma \ref{lem:martingalecondensed} gives $\Pr{A_{\ell}\mid B_{\ell}}\leq \delta/n^2\LL$ for each $1\leq \ell\leq \LL$. Recalling the formula for $\conf_\ell(w)$ in Definition \ref{def:firstchoice}, we see that $d_\ell(\widehat{p}_{n,\ell}(\cdot|w),\overline{p}_{n,\ell}(\cdot|w))>\conf_\ell(w)$ if and only if $A_\ell$ holds. Therefore, \begin{equation}\label{eq:caracaeleleutudo}
%\begin{eqnarray}\nonumber &  &
\begin{array}{rl}
& \Pr{\left\|\left\{\frac{d_\ell(\widehat{p}_n(\cdot|w),\overline{p}_n(\cdot|w))}{\conf_\ell(w)}\right\}_{\ell=1}^{\ELL}\right\|_{\ELL,\infty}>1\mid \min_\ell N_{n-1,\ell}(w)>0} \\
&= \Pr{\cup_{\ell=1}^{\LL} A_{\ell}\mid \cap_{\ell'=1}^{\LL}B_{\ell'}}\\
&\leq  \sum_{\ell=1}^{\LL}\Pr{A_\ell\mid \cap_{\ell'=1}^{\LL}B_{\ell'}}.\end{array}\end{equation}
Now recall that the \LL~processes $X(\ell)$ are all independent, therefore $A_\ell$ depends on $B_\ell$ but not on $B_{\ell'}$ for $\ell'\neq \ell$. We obtain\[\Pr{A_\ell\mid \cap_{\ell=1}^{\LL}B_{\ell'}} = \Pr{A_\ell\mid B_{\ell}}\leq \frac{\delta}{n^2\LL}.\] Plugging this back into (\ref{eq:caracaeleleutudo}) and removing the conditioning gives
\[\Pr{\left\|\left\{\frac{d_\ell(\widehat{p}_n(\cdot|w),\overline{p}_n(\cdot|w))}{\conf_\ell(w)}\right\}_{\ell=1}^{\ELL}\right\|_{\ELL,\infty}>1}\leq \frac{\delta}{n^2}\,\Pr{\min_{1\leq \ell\leq \LL}N_{n-1,\ell}(w)>0}.\]Taking a union bound over all $w\in A^*$ and bounding \[\Pr{\min_{1\leq \ell\leq \LL}N_{n-1,\ell}(w)>0}\leq \Pr{N_{n-1,1}(w)>0}\] gives
$$1-\Pr{\Good_\infty} \leq \frac{\delta}{n^2}\,\sum_{w\in A^*}\Pr{N_{n-1,\ell}(w)>0}.$$
The sum of probabilities in the RHS is the expected number of distinct substrings of $X_1^{n-1}(1)$, which is at most $n^2$. This implies $\Good_\infty$ occurs with probability at least $1-\delta$, as desired.\end{proof}

\begin{proof}[Proof of Claim \ref{claim:finalmain} in the ``Many processes"~case] For each $w\in A^*$ and $1\leq \ell\leq \ELL$ define the random variable:
\begin{eqnarray}\nonumber \Delta_\ell(w)&\equiv &  N_{n-1,\ell}(w)\,d_\ell(\widehat{p}_{n,\ell}(\cdot|w),\overline{p}_{n,\ell}(\cdot|w))^2 \\ \label{eq:defdeltaell}& & - [4\ln(2 + 2\log_2N_{n-1,\ell}(w)) + \ln|\sS|].\end{eqnarray}
The definition of $\conf_\ell(w)$ in Definition \ref{def:secondchoice} implies
$$\left\|\left\{\frac{d_\ell(\widehat{p}_{n,\ell}(\cdot|w),\overline{p}_{n,\ell}(\cdot|w)}{\conf_\ell(w)}\right\}_{\ell=1}^{\ELL}\right\|_{\ELL,2}>1\Leftrightarrow \frac{1}{\ELL}\sum_{\ell=1}^{\ELL}\Delta_\ell(w) > 1+ \sqrt{\frac{6\ln\left(n^2/\delta\right)}{\ELL}}.$$
Lemma \ref{lem:martingalecondensed} implies that, conditionally on $N_{n-1,\ell}(w)>0$, $\Delta_\ell(w)$ is dominated by an exponential random variable with mean $1$. The independence of the $\ELL$ processes implies that
$$\Pr{\frac{1}{\ELL}\sum_{\ell=1}^L\Delta_{\ell}(w) > 1+  \sqrt{\frac{6\ln\left(n^2/\delta\right)}{\ELL}}\mid \min_\ell N_{n-1,\ell}(w)>0}$$
can be upper bounded as if the $\Delta_\ell(w)$'s were independent exponentials. A standard Laplace transform calculation implies
$$\Pr{\frac{1}{\ELL}\sum_{\ell=1}^L\Delta_{\ell}(w) > 1+ \eps\mid \min_\ell N_{n-1,\ell}(w)>0}\leq e^{-\frac{\eps^2\ELL}{4+2\eps}}.$$
We apply this with $\eps=\sqrt{6\ln(n^2/\delta)/\ELL}$. Since $\eps\leq 1$ the RHS is $\leq \delta/n^2$. We deduce that for all $w\in A^*$
\begin{eqnarray*}\Pr{\left\|\left\{\frac{d_\ell(\widehat{p}_{n,\ell}(\cdot|w),\overline{p}_{n,\ell}(\cdot|w)}{\conf_\ell(w)}\right\}_{\ell=1}^{\ELL}\right\|_{\ELL,2}>1\mid \min_{1\leq \ell \leq \ELL} N_{n-1,\ell}(w)>0}\\ \leq \frac{\delta}{n^2}\prod_{\ell=1}^{\ELL}\Pr{N_{n-1,\ell}(w)>0}.\end{eqnarray*}
The rest of the proof follows the argument for the ``General case".\end{proof}

%\section{Proofs of Section \ref{Sec:Primitive}}

\section{Proof of Theorem 2}\label{Ap:betamixingtypicality}

%The gist of this section consists of proving Theorem \ref{thm:main}. The two Corollaries follow from that result combined with the following observation (which follows from Remark 4[CHECK] in the supplementary material \cite{SM-AGCT}).

The proof of Theorem \ref{thm:adapt} follows from the oracle inequality in Theorem \ref{thm:main} restricted to $\sT(h,\pi_*)$ and properly replacing the empirical confidence radii with the population confidence radii. Lemma \ref{thm:betamixingtypicality} stated below (proven in the Supplementary Material \cite{SM-AGCT}) establishes that the frequencies $N_{n-1,\ell}(w)$ are close to $\pi_\ell(w)n$ for typical trees provided a sample size condition holds. In turn, Lemma \ref{lem:confbeta} below allows one to switch from empirical to population confidence radii, at the price of a small multiplicative constant.

In what follows we use
$$\beta^{-1}(x)\equiv \min\{b\in \N\,:\, \forall b'\geq b,\, \beta(b')\leq x\}\,\,(x\in (0,1)).$$

\begin{restatable}{lemma}{primelemma}%\begin{lemma}[See \cite{SM-AGCT}]
\label{thm:betamixingtypicality} Let $X=(X_{k})_{k\in\Z}$ be a stationary and $\beta$-mixing process over alphabet $A$ with mixing rate function $\beta(\cdot)$. Consider a non-empty finite set $S\subset A^*$ and define:
$$h_S\equiv \max_{w\in S}|w|, \ \ \ \pi_S\equiv \min_{w\in S}\pi(w) \ \ \mbox{ where }\pi(w)\equiv \Pr{X^{-1}_{-|w|}=w}.$$
Let $\xi>0$, $\delta_0\in (0,1/e)$ and $n\in\N$ satisfy:
$$n\geq  2\,\left\{\left\lceil\frac{10h_S}{\xi}\right\rceil\vee \beta^{-1}\left(\frac{\xi\,\pi_S\,\delta_0}{24}\right)\right\}\times \left\{1 + \frac{300}{\xi^2\,\pi_S}\,\ln\left(\frac{12|S|}{\delta_0}\right)\right\},$$
then the random variables $$N_n(w)\equiv |\{|w|\leq j\leq n\,:\, X^j_{j-|w|+1}=w\}|, \ \ w \in S,$$
satisfy:
$$\Pr{\forall w\in S,\, 1-\xi\leq\frac{N_{n}(w)}{\pi(w)\,n}\leq 1+\xi}\geq 1 - \delta_0.$$
\end{restatable}%\end{lemma}

\begin{lemma}\label{lem:confbeta} Assume $X(1),\dots,X(\ELL)$ satisfy Assumptions \ref{assum:supp} through \ref{assum:polybeta}, and the sample size $n$ obeys
$$n\geq 2\,\max\left\{40h,\left\lceil\frac{48\,\Gamma\,\LL}{\pi_*\,\delta_0}\right\rceil^{1/\bgamma}\right\}\times \left\{1 + \frac{1200}{\pi_*}\,\log\left(\frac{24\,(h+1)}{\delta_0\,\pi_*}\right)\right\}.$$
 Let
$${\rm Typ}_r\equiv \bigcap_{\tilde T\in \sT(h,\pi_*)}\bigcap_{w\text{ leaf of }\tilde T}\left\{\frac{\RR{r}{\conf(w)}}{\RR{r}{{\oconf(w)}}}\leq  \sqrt{2}\right\}.$$
Then $\Pr{{\rm Typ}_r}\geq 1-\delta_0$.\end{lemma}

\begin{proof}Define a set $S$ consisting of all $w\in A^*$ of length $|w|\leq h$ and $\min_\ell\pi_\ell(w)\geq \pi_*$. This set contains all leaves of trees $T\in\sT(h,\pi_*)$, and it is clear from the definitions of confidence radii that
$$E=\bigcap_{\ell=1}^{\ELL}E_\ell\mbox{ with }E_\ell\equiv\left\{\forall w\in S:\, \frac{1}{2}\leq \frac{N_{n-1}(w)}{n\,\pi_\ell(w)}\leq \frac{3}{2}\right\}$$
is contained in ${\rm Typ}_r$.
We will apply the previous lemma to prove $\Pr{E_\ell}\geq 1-\delta_0/\ELL$, which implies $\Pr{E}\geq 1-\delta_0$ and finishes the proof. We have processes $X(1),\dots,X(\LL)$ as in Definition \ref{def:typical}, and choose parameters $n\geq 9$, $\xi=1/2$, $\delta_0=\delta_0/\LL$. The mixing rate function $\beta(b)=\Gamma\,b^{-\bgamma}$ is the same for all processes. To obtain a bound on $|S|$, we note that
$$|S\cap  A^{-1}_{-k}|\,\pi_* \leq  \sum_{w\in S\cap A^{-1}_{-k}}\Pr{X^{-1}_{-k}(1)=w}\leq \sum_{w\in S\cap A^{-1}_{-k}}\,\pi_1(w)\leq 1,$$
so
$$|S|\leq \sum_{k=0}^h\,|S\cap  A^{-1}_{-k}|\leq \frac{h+1}{\pi_*}$$
Thus we see that, in order to apply Lemma \ref{thm:betamixingtypicality} to $X(\ell)$, we need the condition
$$n\geq 2\,\max\left\{40h,\left\lceil\frac{48\,\Gamma\,\LL}{\pi_*\,\delta_0}\right\rceil^{1/\bgamma}\right\}\times \left\{1 + \frac{1200}{\pi_*}\,\log\left(\frac{24\,(h+1)}{\delta_0\,\pi_*}\right)\right\},$$
which is precisely the assumption in the present Lemma. This implies that Lemma \ref{thm:betamixingtypicality} is indeed applicable, and we deduce $\Pr{E_\ell}\geq 1-\delta_0/\ELL$, as desired.\end{proof}

\section*{Acknowledgements}
The authors would like to thank Victor Chernozhukov, Antonio Galves and Matthieu Lerasle for various discussions and to Whitney K. Newey for suggesting the dynamic choice model application.

%\begin{supplement}
%\sname{Supplement A}\label{suppA}
%\stitle{Title of the Supplement A}
%\slink[url]{http://www.e-publications.org/ims/support/dowload/imsart-ims.zip}
%\sdescription{Dum esset rex in
%accubitu suo, nardus mea dedit odorem suavitatis. Quoniam confortavit
%seras portarum tuarum, benedixit filiis tuis in te. Qui posuit fines tuos}
%\end{supplement}

\bibliography{biblioVLMC}
\bibliographystyle{plain}

\clearpage

%\pagebreak

\setcounter{page}{1}

\begin{center}
{\LARGE Supplementary Material for the paper ``Approximate group context tree"}
\end{center}

\setcounter{section}{0}

\section{Oracle Approximate Context Tree}\label{Sec:Oracle}

For a given sample the (minimal) compatible context tree might be too long to be efficiently estimated. Thus, in some cases, it is possible that a smaller tree, that is slightly misspecified, lead to much more efficient estimates for the conditional probabilities than any compatible tree would due to the large variance. This motivates us to consider an oracle context tree that balances bias and variance as our goal for estimation.

Based on the sample, define the ``oracle conditional probability" given a context $w$, $ \ a \in A, \  \ell=1,\ldots,\LL$ as %\textbf{[NEED: what if $N_{n-1,\ell}(w)=0$?}
$$\overline{p}_{n,\ell}(a|w)\equiv \frac{1}{N_{n-1,\ell}(w)}\sum_{i=|w|+1}^{n}\Ind{\{X^{i-1}_{i-|w|}(\ell)=w\}}\,p_\ell(a|X^{i-1}_{-\infty}(\ell)),$$
if $\min_{1\leq \ell\leq \LL}N_{n-1,\ell}(w)>0$, and  we define $\overline{p}_{n,\ell}(a|w)\equiv 1/|A|$ otherwise.

The conditional probability distribution $\overline{p}_{n,\ell}(\cdot|w)$ will play the role of an oracle estimate for the conditional probability $p_\ell(\cdot|w)$ which is adapted to the given sample. Thus $\bar p_n$ is an intermediate step in the estimation of our ultimate goal $p$. Indeed, under mild regularity conditions it follows that $\overline{p}_{n,\ell}(a|X^{-1}_{-k}(\ell))$ converges to $ p_\ell(a|X^{-1}_{-\infty}(\ell))$ as $k$ and $n$ grow at appropriate rates. For each  context $w \in A^*$, we denote the approximation error of using $\bar p_n(\cdot |w)$ as an approximation for the underlying conditional probabilities as
\begin{equation}\label{Def:cT}c_{w} := \sup_{z\in (A^{-1}_{-\infty})^\LL} \  \   \MM{\k}{d(p(\cdot|z),\bar{p}_n(\cdot|w))} \ :  \ z(\ell) \succeq w, 1\leq\ell\leq\LL,\end{equation}
where we note that $c_w \leq \MM{\k}{\gamma(w)}$.

Given a ``confidence radius" $$\conf(w)=\left(\conf_1(w),\ldots,\conf_\LL(w)\right) \ \ \mbox{for each} \ \ w\in A^*,$$ the context tree for the approximate model solves the following oracle problem for some $\k\geq 1$ and $r\geq 1$:
\begin{equation}\label{Def:Oracle} \min_{\widetilde T} \sup_{x \in A^{-1}_{-\infty}} c_{\widetilde T(x)} \ \  + \ \ \RR{r}{\conf(\widetilde T(x))}\end{equation} where the minimum is over all finite trees.
%The choice of $k$ and $r$ typically depends on the final purpose of using the model (see Section \ref{Sec:Applications} for examples).

The context tree $T$ that solves the oracle problem (\ref{Def:Oracle}) balances the bias of a misspecified model and the variance associated with its estimation measured as a function of the confidence radius which is assumed to be componentwise increasing in $w$, that is, $\conf_\ell(w) \leq \conf_\ell(w')$ if $w'\succeq w$ . For convenience we also assume that $0\leq \conf_\ell(w)\leq 1$ for all $w\in A^*$.

%Thus, the maximum approximation error of the model by
%$$c_T = \sup_{x \in A^{-1}_{-\infty}} \MM{k}{d(p(\cdot|x),\bar{p}_n(\cdot|T(x)))}.$$
%Finally, note that it follows from the definition of the oracle that $T \subseteq T^*$.

\begin{remark}[On the oracle problem, non-uniqueness]  The oracle problem (\ref{Def:Oracle}) might have multiple solutions. Although the results derived here allow for any such solution to be considered, the oracle further selects a context tree by fixing the paths which achieve the optimal value of the oracle's objective function and further minimization of the criterion function over the remaining paths.
\end{remark}

\begin{remark}[On the oracle problem, approximation error]
Under mild conditions on the processes the oracle can adjust the length of the contexts in the oracle tree to make the approximation error $c_{T(x)}$ to be (at most) of the same order as the regularization term, namely, there is a constant $K$ such that uniformly in $x \in A^{-1}_{-\infty}$ we have
\begin{equation}\label{Def:BoundcT} c_{T(x)} \leq K \RR{r}{\conf(T(x))}. \end{equation}
\end{remark}

%We will need the following definition.
%\begin{definition}An index $u\in\N$ is a $\eps$-approximation point for $x$ if:
%$$\sup_{y,z\in A^{-u-1}_{-\infty}} d(p(\cdot|yx^{-1}_{-u}),p(\cdot|zx^{-1}_{-u}))\leq %\eps.$$\end{definition}

In addition to the oracle inequality established in Theorem \ref{thm:main} that also considers the oracle tree, it is possible to derive bounds on the actual length of $\widehat T_n(x)$ relative to $T(x)$ under the following regularity condition.

\begin{condition}[\textbf{RL}]
There is a $\kappa \in (0,1)$ and integer $\bar{k}\geq 1$ such that
%$$ \sup_{{\tiny \begin{array}{c}
%x \in A^{-1}_{-\infty}, \\
%|T(x)|\leq k \leq |T(x)|-\ln \left(\kappa %\RR{r}{\conf(T(x))}\right)\end{array}}}%N_{n-1,\ell}(x)}
%\frac{\RR{r}{\conf(x^{-1}_{-k})}}{\RR{r}{\conf(x^{-1}_{-k-1})}} \leq \kappa. $$
{\small $$ \sup_{x \in A^{-1}_{-\infty},k} \left\{ \frac{\RR{r}{\conf(x^{-1}_{-k})}}{\RR{r}{\conf(x^{-1}_{-k-\bar{k}})}} :
|T(x)|\leq k \leq |T(x)|+\frac{\bar{k}\log (1/\RR{r}{\conf(T(x))})}{\log (1/\kappa)}
  \right\} \leq \kappa. $$}
\end{condition}

Condition RL is similar to the modulus of continuity between the regularization penalty and the length in a neighborhood of $T(x)$. %It turns out that such condition is satisfied for many designs of interest.
Under Condition RL, we can establish the following result.

\begin{theorem}\label{Thm:RL}
Under Assumptions \ref{assum:supp}, \ref{assum:cont} and RL, suppose that the event $\Good_m$ occurs. Then for all $x\in A^{-1}_{-\infty}$ we have that for the oracle tree $T$
$$ | \widehat T_n(x) |  \leq  |T(x) | +\frac{\bar{k}}{\log (1/\kappa)}\max\left\{ 0 , \ \log \left(\frac{2}{c-1}\frac{c_{T(x)}}{\RR{r}{\conf(T(x))}}\right)\right\}. $$
%$$ | T(x) |  \leq  |\widehat T_n(x) | + ???. $${\bf [CAN WE DO THIS LAST ONE???]}
\end{theorem}
\begin{proof}[Proof of Theorem \ref{Thm:RL}]
First we show that $ | \widehat T_n(x) |  \leq  M_x := |T(x) | +\frac{\bar{k}\log (1/\RR{r}{\conf(T(x))})}{\log (1/\kappa)}$. Suppose otherwise so that $ | \widehat T_n(x) |  >  M_x$. Then we have $$
\begin{array}{rl}
1 \geq_{(1)}  \RR{r}{\conf(\widehat T_n(x))}  & >_{(2)}  \RR{r}{\conf(T(x))} \prod_{m=0}^{(M_x-|T(x)|)/\bar k}\frac{\RR{r}{\conf(x^{-1}_{-|T(x)|-m\bar{k}})}}{\RR{r}{\conf(x^{-1}_{-|T(x)|-(m+1)\bar{k}})}} \\
  & \geq_{(3)} \RR{r}{\conf(T(x))} \prod_{m=0}^{(M_x-|T(x)|)/\bar k} \left(\frac{1}{\kappa}\right)\\
  & =_{(4)}\RR{r}{\conf(T(x))} \left(\frac{1}{\kappa}\right)^{\frac{\log (1/\RR{r}{\conf(T(x))})}{\log (1/\kappa)}}=1.\end{array}$$
where (1) follows from $\conf_\ell(\widehat T_n(x)) \leq 1$ so that $\RR{r}{\conf(\widehat T_n(x))} \leq 1$, (2) follows from $ | \widehat T_n(x) |  >  M_x$,  (3) from Condition RL, and (4) by definition of $M_x$. Therefore, $ | \widehat T_n(x) |  \leq M_x$ and Condition RL applies to $|T(x)| \leq k \leq  | \widehat T_n(x) | $.

Using the same arguments in the proofs of Claim \ref{Claim:1} with $c_{T(x)}$ instead of the continuity rates $\gamma(T(x))$, for $|\widehat T_n(x)|>|T(x)|$ we have
 \begin{equation}\label{RLub} \RR{r}{\conf(\widehat T_n(x))} \leq  \frac{2c_{T(x)}}{c-1}. \end{equation}
%Also,
Next, by Condition RL, we have  \begin{equation}\label{RLlb}
\begin{array}{rl}
 \RR{r}{\conf(\widehat T_n(x))}  & =  \RR{r}{\conf(T(x))} \prod_{m=0}^{(|\widehat T_n(x)|-|T(x)|)/\bar k}\frac{\RR{r}{\conf(x^{-1}_{-|T(x)|-m\bar{k}})}}{\RR{r}{\conf(x^{-1}_{-|T(x)|-(m+1)\bar{k}})}} \\
  & \geq \RR{r}{\conf(T(x))} \left(\frac{1}{\kappa}\right)^{\frac{|\widehat T_n(x)|-|T(x)|}{\bar{k}}}.\end{array}\end{equation}   Combining (\ref{RLub}) and (\ref{RLlb}) we have
$$ \left(\frac{1}{\kappa}\right)^{\frac{|\widehat T_n(x)|-|T(x)|}{\bar{k}}} \leq \frac{2c_{T(x)}}{(c-1)\RR{r}{\conf(T(x))}}$$
and the result follows.
%The second result follows using (\ref{Def:BoundcT}) in to bound $c_{T(x)}$ from above in the relation above.
\end{proof}

It is interesting to consider the result of Theorem \ref{Thm:RL} when  (\ref{Def:BoundcT}) holds. In this case we have
$$ | \widehat T_n(x) |  \leq  |T(x)| + \frac{\bar{k}}{\log (1/\kappa)}\max\left\{ 0 , \ \log K +  \log\left(2/(c-1)\right)\right\}. $$
Moreover, under mild conditions on the process, $\kappa$ is bounded away from zero and $\bar{k}$ is bounded above uniformly in $n$. Therefore, these regularity conditions imply that the length of $\widehat T_n(x)$ is not larger than the length of $T(x)$ plus a constant factor.

\section{On Complete Context Trees}

\begin{remark}[Complete context trees and minimality]\label{rem:completetrees} It is sometimes convenient to work with {\em complete} trees, i.e. trees $\widetilde T$ where all non-leaf nodes have exactly $|A|$ children. For a complete tree we have that if $x,y\in A^{-1}_{-\infty}$
\begin{equation}\label{eq:completeconditionSSS}y^{-1}_{-|\widetilde T(x)|}=x^{-1}_{-|{\widetilde T}(x)|}\Rightarrow \widetilde T(x)=\widetilde T(y).\end{equation}
Completeness is an important condition when dealing with tree recovery. There exists a {\em minimal complete context tree compatible with $X(1),\dots,X(\LL)$}, defined as the intersection of all complete trees that are compatible with the $\LL$ processes. On the other hand, non-complete trees lack a well-defined minimum tree. Consider for instance a single process with $A=\{0,1\}$ and minimal complete tree
\[T^*:=T=\{e,0,1,00,10\};\]
in particular, the transition probabilities associated with the leaves $00$, $10$, and $1$ are all different. Now note that the two incomplete trees
\begin{eqnarray*}\widetilde T&:=& \{e,0,1,00\};\\
\widetilde {\widetilde T}&:=&\{e,0,1,10\},\end{eqnarray*}
are both compatible with the same process (both map $\dots 1$, $\dots 00$ and $\dots 10$ to different terminal nodes), even though there is no natural inclusion in either direction between the two trees. We also note that both trees violate condition (\ref{eq:completeconditionSSS}) above. \end{remark}

\begin{remark}[Complete context trees, continued] In general, $\widehat{T}_n$ will not be {\em complete} in the sense of Remark \ref{rem:completetrees}. Completeness can always be achieved by adding leaf nodes to non-leaf nodes of the tree created by the algorithm once it has  finished. In that case, the conditional probabilities for the added leaves are set to the conditional probabilities of their corresponding parent which was not pruned by the algorithm.\end{remark}

\section{Simulations}\label{Sec:Simulations}

In this section we conduct Monte Carlo experiments to assess the finite-sample performance of the proposed estimator. We use two different designs for the true context tree $T^*$ in these experiments: (i) a full binary Markov chain of order 3, and (ii) a sparse binary VLMC with infinite length associated with a renewal process. The associated context trees are displayed in Table \ref{Table:Models}. The former model corresponds to a parametric model with well separated conditional probabilities. The latter corresponds to an infinite context tree induced by a renewal process. These designs are two extreme cases to help illustrate the performance of the estimator on balanced and unbalanced context trees. For each design we consider the size of the group to be $\LL = 1, 10, 100$, various sample sizes $n$, and two different choices of the regularization parameter, $c=1.01, 0.5$. In all simulations we used $d_\ell=\|\cdot\|_\infty$, $k = 1$, $r=2$, $m = 2$, and  set the confidence level with $1-\delta = 0.95$. On each cell we report the relative frequency with which a particular node of the tree was selected by the proposed estimator across different repetitions. The row labelled as ``extra" corresponds to false positive, that is, the average number of nodes/suffixes not in the true  that context were selected by the algorithm. The row labelled as ``others" reports the average number of nodes in the context tree that were selected but do not belong to the list of nodes already displayed (this is relevant for the infinite trees).

Table \ref{Table:Full3binaryc05} displays the model selection performance of the proposed algorithm for the full binary Markov chain of order 3 when the parameter $c$ is set to $1.01$ and $0.5$. In the case of $c=1.01$ that follows the theoretical recommendation of the previous section, in every instance the estimated tree $\widehat T_n$ was contained in the true context tree $T^*$ confirming our theoretical results. Moreover, the estimated context tree contained a full binary Markov chain of order 2 in most instances. In the larger sample size with $100$ groups, we achieved perfect recovery of the model. When we set $c=0.5$ additional nodes not in $T^*$ are occasionally included (the average number of extra nodes is displayed in the last row of the table labeled as ``extra"). If multiple groups are used in the estimation, the number of extra nodes selected was smaller.

{\small \begin{table}[bp]
\begin{center}
\begin{tabular}{ccccc}
%\pstree[levelsep=0.75cm,treesep=0.25cm]{\Toval{root}}
\pstree[levelsep=0.6cm,treesep=0.2cm]{\Toval{root}}
{\pstree{\Tcircle{0}}
        {
        \pstree{\Tcircle{0}}
        {\Tcircle{0}~{$\left(\frac{3}{4},\frac{1}{4}\right)$} \Tcircle{1}~{$\left(\frac{1}{2},\frac{1}{2}\right)$}}
        \pstree{\Tcircle{1}}
        {\Tcircle{0}~{$\left(\frac{1}{2},\frac{1}{2}\right)$} \Tcircle{1}~{$\left(\frac{1}{4},\frac{3}{4}\right)$}}
        }
\pstree{\Tcircle{1}}
        {
        \pstree{\Tcircle{0}}
        {\Tcircle{0}~{$\left(\frac{3}{4},\frac{1}{4}\right)$} \Tcircle{1}~{$\left(\frac{1}{2},\frac{1}{2}\right)$}}
        \pstree{\Tcircle{1}~*[tnpos=r]{}}
        {\Tcircle{0}~{$\left(\frac{1}{2},\frac{1}{2}\right)$} \Tcircle{1}~{$\left(\frac{1}{4},\frac{3}{4}\right)$}}
        }
}
&  &
\pstree[levelsep=0.6cm,treesep=0.35cm]{\Toval{root}}
{  \pstree{\Tcircle{0}}
        {
        \pstree{\Tcircle{0}}
            {
            \pstree{\Tcircle{0}}
                {
                \Tfan[linestyle=dashed]{}%   \Tfan*\uput[-90](0,0){$\ddots$}
                \Tcircle{1}
                }
            \Tcircle{1}
            }
        \Tcircle{1}
        }
   \Tcircle{1}
}
\\
\end{tabular}
\end{center}\caption{The context trees above illustrate the two models used in our simulations. The left context tree correspond to a full binary Markov chain of order 3. The right context tree correspond a process with infinite memory associated with a renewal process.}
\label{Table:Models}\end{table}
}

{\small
\begin{center}
\begin{table}[h!]
\begin{center}
\begin{tabular}{r|ccc|ccc|ccc}
     & \multicolumn{9}{c}{Probability of selection with parameter $c=1.01$}   \\
     &  \multicolumn{3}{c}{n=1000} & \multicolumn{3}{c}{n=2500} & \multicolumn{3}{c}{n=5000}  \\
\hline
Node &  \LL=1  & \LL=10  & \LL = 100 &  \LL=1  & \LL=10  & \LL = 100&  \LL=1  & \LL=10 &  \\
\hline
000  &   0.0 &  0.0  &   0.0  &  0.02  &  0.48 &   1.0  &  0.68 & 1.0\\
100  &   0.0 &  0.0  &   0.0  &  0.02  &  0.48 &   1.0  &  0.68 & 1.0\\
010  &   0.0 &  0.0  &   0.0  &  0.03  &  0.17 &   0.83 &  0.57 & 1.0\\
110  &   0.0 &  0.0  &   0.0  &  0.03  &  0.17 &   0.83 &  0.57 & 1.0\\
001  &   0.0 &  0.0  &   0.0  &  0.03  &  0.11 &   0.09 &  0.53 & 1.0\\
101  &   0.0 &  0.0  &   0.0  &  0.03  &  0.11 &   0.09 &  0.53 & 1.0\\
011  &   0.0 &  0.0  &   0.0  &  0.04  &  0.42 &   1.0  &  0.69 & 1.0\\
111  &   0.0 &  0.0  &   0.0  &  0.04  &  0.42 &   1.0  &  0.69 & 1.0\\
00   &   0.36&  1.0  &   1.0  &  1.0   &  1.0  &   1.0  &  1.0  & 1.0\\
10   &   0.36&  1.0  &   1.0  &  1.0   &  1.0  &   1.0  &  1.0  & 1.0\\
01   &   0.4 &  0.98 &   1.0  &  1.0   &  1.0  &   1.0  &  1.0  & 1.0\\
11   &   0.4 &  0.98 &   1.0  &  1.0   &  1.0  &   1.0  &  1.0  & 1.0\\
0    &   0.66&  1.0  &   1.0  &  1.0   &  1.0  &   1.0  &  1.0  & 1.0\\
1    &   0.66&  1.0  &   1.0  &  1.0   &  1.0  &   1.0  &  1.0  & 1.0\\
root &   1.0 &  1.0  &   1.0  &  1.0   &  1.0  &   1.0  &  1.0  & 1.0\\
\hline
extra &   0.0 & 0.0  &  0.0   &  0.0   &  0.0  &   0.0  &  0.0  & 0.0\\
\hline
\end{tabular}
%\caption{The table illustrates the model selection performance for selecting nodes of the true context tree in the full binary Markov chain of order 3.}\label{Table:Full3binaryc101}
\begin{tabular}{r|ccc|ccc|ccc}
     & \multicolumn{9}{c}{Probability of selection with parameter $c=0.5$}   \\
     & \multicolumn{3}{c}{n=1000} & \multicolumn{3}{c}{n=2500} &  \multicolumn{3}{c}{n=5000}   \\
\hline
Node  & \LL=1& \LL=10 & \LL=100 &  \LL=1  & \LL=10  & \LL=100 &  \LL=1  & \LL=10 & \\
\hline
000  & 0.92 &   1.0   &  1.0    &  1.0    &  1.0    &  1.0    &  1.0    &  1.0 \\
100  & 0.92 &   1.0   &  1.0    &  1.0    &  1.0    &  1.0    &  1.0    &  1.0 \\
010  & 0.83 &   1.0   &  1.0    &  1.0    &  1.0    &  1.0    &  1.0    &  1.0 \\
110  & 0.83 &   1.0   &  1.0    &  1.0    &  1.0    &  1.0    &  1.0    &  1.0 \\
001  & 0.84 &   1.0   &  1.0    &  1.0    &  1.0    &  1.0    &  1.0    &  1.0 \\
101  & 0.84 &   1.0   &  1.0    &  1.0    &  1.0    &  1.0    &  1.0    &  1.0 \\
011  & 0.91 &   1.0   &  1.0    &  1.0    &  1.0    &  1.0    &  1.0    &  1.0 \\
111  & 0.91 &   1.0   &  1.0    &  1.0    &  1.0    &  1.0    &  1.0    &  1.0 \\
00   & 1.0  &   1.0   &  1.0    &  1.0    &  1.0    &  1.0    &  1.0    &  1.0 \\
10   & 1.0  &   1.0   &  1.0    &  1.0    &  1.0    &  1.0    &  1.0    &  1.0 \\
01   & 1.0  &   1.0   &  1.0    &  1.0    &  1.0    &  1.0    &  1.0    &  1.0 \\
11   & 1.0  &   1.0   &  1.0    &  1.0    &  1.0    &  1.0    &  1.0    &  1.0 \\
0    & 1.0  &   1.0   &  1.0    &  1.0    &  1.0    &  1.0    &  1.0    &  1.0 \\
1    & 1.0  &   1.0   &  1.0    &  1.0    &  1.0    &  1.0    &  1.0    &  1.0 \\
root & 1.0  &   1.0   &  1.0    &  1.0    &  1.0    &  1.0    &  1.0    &  1.0 \\
\hline
extra & 4.22 &  0.0   &  0.0    &  8.67   &  0.25   &  0.0    &  27.55  &  11.02\\
\hline
\end{tabular}
\end{center}\caption{The table illustrates the model selection performance for selecting nodes of the true context tree in the full binary Markov chain of order 3.}\label{Table:Full3binaryc05}
\end{table}
\end{center}}

The renewal process is defined by the independent times $t_i$'s between observing two $1$'s. We specified the random variable $t_i$ such as $P( t_i = k ) =1/[(2\log 2 - 1)k(4k^2-1)]$. Stationarity requires that the first time be drawn from a different distribution, see \cite{CsiszarShields1996,Garivier2006} for details. We note that the polynomial decay of the tail suggests potentially long estimated trees. Table \ref{Table:Renewalbinaryc05} displays the model selection performance of the proposed algorithm for the renewal process when the parameter $c$ is set to $1.01$ and $0.5$. In the case of $c=1.01$ that follows the theoretical recommendation of the previous section, in every instance the estimated tree $\widehat T_n$ was contained in the true context tree $T^*$ confirming our theoretical results.
As expected, as the sample size increases the estimated context tree also increases chasing the infinite true context tree.
When we set $c=0.5$ additional nodes not in $T^*$ are occasionally included (the average number of extra nodes is displayed in the last row of the table labeled as ``extra"). Nonetheless, when multiple groups are used in the estimation, no node outside of the true context tree are selected.

{\small
\begin{center}
\begin{table}[h!]
\begin{center}
\begin{tabular}{r|ccc|ccc|ccc}
     & \multicolumn{9}{c}{Probability of selection with parameter $c=1.01$}   \\
     & \multicolumn{3}{c}{n=5000} & \multicolumn{3}{c}{n=10000} &  \multicolumn{3}{c}{n=50000}   \\
\hline
Node     & \LL=1& \LL=10 & \LL=100 &  \LL=1  & \LL=10  & \LL=100 &  \LL=1  & \LL=10 & \\
\hline
others  & 0.0 & 0.0  & 0.0 &  0.0   &  0.0 &  0.0 &  0.0 &   0.0   \\
00000000  & 0.0 & 0.0  & 0.0 &  0.0   &  0.0 &  0.0 &  0.01 &   0.0   \\
10000000  & 0.0 & 0.0  & 0.0 &  0.0   &  0.0 &  0.0 &  0.01 &   0.0   \\
0000000  & 0.0 & 0.0  & 0.0 &  0.0   &  0.0 &  0.0 &  0.03 &   0.0   \\
1000000  & 0.0 & 0.0  & 0.0 &  0.0   &  0.0 &  0.0 &  0.03 &   0.0   \\
000000  & 0.0 & 0.0  & 0.0 &  0.0  &  0.0 &  0.0 &  0.46 &   0.31 \\
100000  & 0.0 & 0.0  & 0.0 &  0.0  &  0.0 &  0.0 &  0.46 &   0.31 \\
00000  & 0.0 & 0.0  & 0.0 &  0.01  &  0.0 &  0.0 &  0.97 &   1.0 \\
10000  & 0.0 & 0.0  & 0.0 &  0.01  &  0.0 &  0.0 &  0.97 &   1.0 \\
0000  & 0.02 & 0.0  & 0.75 &  0.37  &  0.92 &  1.0 &  1.0 &   1.0 \\
1000  & 0.02 & 0.0  & 0.75 &  0.37  &  0.92 &  1.0 &  1.0 &   1.0 \\
000   & 0.72 & 1.0  & 1.0 &  0.99  &  1.0  &  1.0  &  1.0  &   1.0 \\
100   & 0.72 & 1.0  & 1.0 &  0.99  &  1.0  &  1.0  &  1.0  &   1.0 \\
00   & 1.0 & 1.0 & 1.0 &     1.0  &  1.0  &  1.0  &  1.0  &   1.0 \\
10   & 1.0 & 1.0 & 1.0 &     1.0  &  1.0  &  1.0  &  1.0  &   1.0 \\
0    & 1.0 & 1.0  & 1.0 &    1.0   &  1.0  &  1.0  &  1.0  &   1.0 \\
1    & 1.0 & 1.0  & 1.0 &    1.0   &  1.0  &  1.0  &  1.0  &   1.0 \\
root & 1.0  & 1.0  & 1.0 &  1.0   &  1.0  &  1.0  &  1.0  &   1.0 \\
\hline
extra & 0.0 & 0.0 & 0.0 &  0.0  &  0.0  &  0.0  &   0.0 &   0.0    \\
\hline
\end{tabular}
\begin{tabular}{r|ccc|ccc|ccc}
     & \multicolumn{9}{c}{Probability of selection with parameter $c=0.5$}   \\
     & \multicolumn{3}{c}{n=5000} & \multicolumn{3}{c}{n=10000} &  \multicolumn{3}{c}{n=50000}   \\
\hline
Node     & \LL=1& \LL=10 & \LL=100 &  \LL=1  & \LL=10  & \LL=100 &  \LL=1  & \LL=10 & \\
\hline
others    & 0.18 & 0.0  & 0.0 &  0.02   &  0.0 &  0.0 &  1.84 &   2.76   \\
00000000  & 0.03 & 0.0  & 0.0 &  0.03   &  0.0 &  0.0 &  0.77 &   1.0   \\
10000000  & 0.03 & 0.0  & 0.0 &  0.03   &  0.0 &  0.0 &  0.77 &   1.0   \\
0000000  & 0.04 & 0.0  & 0.0 &  0.06   &  0.02 &  0.0 &  0.99 &   1.0   \\
1000000  & 0.04 & 0.0  & 0.0 &  0.06   &  0.02 &  0.0 &  0.99 &   1.0   \\
000000  & 0.08 & 0.0  & 0.0 &  0.36  &  0.53 &  0.72 &  1.0 &   1.0 \\
100000  & 0.08 & 0.0  & 0.0 &  0.36  &  0.53 &  0.72 &  1.0 &   1.0 \\
00000  & 0.28 & 0.42  & 0.23 &  0.73  &  1.0 &  1.0 &  1.0 &   1.0 \\
10000  & 0.28 & 0.42  & 0.23 &  0.73  &  1.0 &  1.0 &  1.0 &   1.0 \\
0000  & 0.78 & 1.0  & 1.0 &  0.98  &  1.0 &  1.0 &  1.0 &   1.0 \\
1000  & 0.78 & 1.0  & 1.0 &  0.98  &  1.0 &  1.0 &  1.0 &   1.0 \\
000   & 1.0 & 1.0  & 1.0 &  1.0  &  1.0  &  1.0  &  1.0  &   1.0 \\
100   & 1.0 & 1.0  & 1.0 &  1.0  &  1.0  &  1.0  &  1.0  &   1.0 \\
00   & 1.0 & 1.0 & 1.0 &     1.0  &  1.0  &  1.0  &  1.0  &   1.0 \\
10   & 1.0 & 1.0 & 1.0 &     1.0  &  1.0  &  1.0  &  1.0  &   1.0 \\
0    & 1.0 & 1.0  & 1.0 &    1.0   &  1.0  &  1.0  &  1.0  &   1.0 \\
1    & 1.0 & 1.0  & 1.0 &    1.0   &  1.0  &  1.0  &  1.0  &   1.0 \\
root & 1.0  & 1.0  & 1.0 &  1.0   &  1.0  &  1.0  &  1.0  &   1.0 \\
\hline
extra & 4.36 & 0.0 & 0.0 &  6.5  &  0.0  &  0.0  &   3.54 &   0.0    \\
\hline
\end{tabular}
\end{center}\caption{The table illustrates the model selection performance for selecting nodes of the true context tree in the binary Markov chain induced by renewal process $X_t$.}\label{Table:Renewalbinaryc05}
\end{table}
\end{center}}

\section{Technical Properties of the Algorithm}\label{Sec:Algorithm}

We start by establishing technical results that follow from the definition of the pruning algorithm.

\underline{Procedure {\rm PruneTree}}\begin{figure}[h!]
%\begin{center}{\mbox{{\bf Prunning.}} \underline{Procedure {\rm PruneTree}}}\end{center}
%{\mbox{{\bf Prunning.}} \underline{Procedure {\rm PruneTree}}}
\begin{algorithmic}[1]
\State $\widehat{T}_n\leftarrow E_n$. \Comment{In the beginning,
$\widehat{T}_n$ contains all visible strings} \For{each node $w$ of
$E_n$}
    \State ${\sf exam}(w)\leftarrow 0$\Comment{All nodes start out unexamined}
\EndFor \While{$\exists$ leaf $w\in \widehat{T}_n$ with ${\sf
exam}(w)=0$}\Comment{While there are unexamined leaves} \If{${\sf
CanRmv}(w)=1$} \Comment{If $w$ can be removed, remove it.} \State
$\widehat{T}_n\leftarrow \widehat{T}_n\backslash\{w\}$\EndIf \State ${\sf
exam}(w)\leftarrow 1$ \Comment{$w$ has been examined}
 \EndWhile \State {\bf return} $(\hat P_n, \widehat{T}_n)$ where $\hat P_n(\cdot\mid x) = \hat p_n(\cdot\mid \hat T_n(x))$ for all $x\in A^*$.
%\EndProcedure
\end{algorithmic}
\caption{\small{The pruning algorithm.}}\label{fig:prune}\end{figure}

We begin with an alternative characterization of $\widehat{T}_n$.

\begin{proposition}\label{Prop:Smallest}
The estimated tree $\widehat{T}_n$ equals the smallest tree contained in $E_n$ which contains all $w\in E_n$ with ${\sf CanRmv}(w)=0$.\end{proposition}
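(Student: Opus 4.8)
The plan is to prove the two inclusions $T_{\min}\subseteq\widehat T_n$ and $\widehat T_n\subseteq T_{\min}$, where $R=\{w\in E_n:{\sf CanRmv}(w)=0\}$ and $T_{\min}$ denotes the smallest tree with $R\subseteq T_{\min}\subseteq E_n$; such a $T_{\min}$ exists because the family of trees squeezed between $R$ and $E_n$ is non-empty (it contains $E_n$) and closed under intersection. I would first record two book-keeping facts. (a)~Throughout the execution of {\rm PruneTree}, the current value of $\widehat T_n$ is a tree in the sense of Section~\ref{Sec:GCT} contained in $E_n$: this follows by induction on the loop, since the only structural step is the deletion of a leaf, which turns a tree into a tree. (b)~The quantity ${\sf CanRmv}(w)$ in~(\ref{Def:IsPrunable}) depends only on $E_n$ and the data, not on the current $\widehat T_n$, so it is a fixed $\{0,1\}$-labelling of the nodes of $E_n$ and the procedure is simply ``keep deleting removable leaves until none is left''; moreover the loop terminates (each pass through the body marks exactly one hitherto unexamined node), and at termination every leaf of the output has been examined.

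The inclusion $T_{\min}\subseteq\widehat T_n$ is then the easy one: the algorithm never deletes a node labelled ${\sf CanRmv}=0$, so the output contains $R$, and being by~(a) a tree contained in $E_n$ it must contain $T_{\min}$ by minimality.

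The content is in the reverse inclusion, for which I would isolate the lemma: \emph{if $w\in E_n$ is such that ${\sf CanRmv}(v)=1$ for every $v\in E_n$ with $v\succeq w$, then the algorithm deletes $w$}. I would prove this by downward induction on $|w|$ (legitimate since $E_n$ is finite). In the base case $w$ is a leaf of $E_n$, so the hypothesis reduces to ${\sf CanRmv}(w)=1$; since $w$ is a leaf of the initial tree it stays a leaf until deleted or examined, and because the loop runs while an unexamined leaf exists, $w$ is eventually examined while still a leaf, at which moment ${\sf CanRmv}(w)=1$ forces its deletion. In the inductive step, each $w'\in E_n$ with $\parent(w')=w$ again satisfies the lemma's hypothesis (any $v\succeq w'$ also satisfies $v\succeq w$) and is strictly longer, so it is deleted by the inductive hypothesis; once all children of $w$ have been deleted, $w$ is a leaf of the current tree and the base-case argument applies. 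Granting the lemma, suppose some $w^*\in\widehat T_n\setminus T_{\min}$ existed. Then no $v\in E_n$ with $v\succeq w^*$ could lie in $R$ (such a $v$ would lie in $T_{\min}$, hence so would all its suffixes, in particular $w^*$, contradicting $w^*\notin T_{\min}$), so ${\sf CanRmv}(v)=1$ for every $v\in E_n$ with $v\succeq w^*$, the lemma applies to $w^*$, and $w^*$ is deleted --- contradicting $w^*\in\widehat T_n$. Hence $\widehat T_n\subseteq T_{\min}$, and combining the two inclusions, $\widehat T_n=T_{\min}$.

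This proof has no genuinely hard step; the only points requiring a little care are the operational ones inside the lemma --- arguing rigorously that a node whose children have all been deleted ``becomes a leaf of the current tree and is then examined'', which follows from termination of the loop together with its guard ``there exists an unexamined leaf'' --- and the harmless treatment of the root $e$, on which ${\sf CanRmv}$ is never evaluated since its definition invokes $\parent(w)$: with the convention ${\sf CanRmv}(e)=0$ the root stays in both $\widehat T_n$ and $T_{\min}$, leaving the argument intact. I do not foresee any deeper obstacle.
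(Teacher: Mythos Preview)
Your proof is correct and follows essentially the same approach as the paper: both argue the two inclusions separately, establish the easy inclusion $T_{\min}\subseteq\widehat T_n$ from the fact that nodes with ${\sf CanRmv}=0$ are never deleted, and for the reverse inclusion use the same key observation (your contrapositive step, the paper's Claim) that any $w^*\notin T_{\min}$ has ${\sf CanRmv}(v)=1$ for every $v\succeq w^*$, followed by downward induction from the leaves to conclude that such $w^*$ is eventually removed. Your write-up is simply more explicit about the operational details (termination, the root, when a node becomes a leaf) that the paper leaves to the reader.
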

\begin{proof}[Proof of Proposition \ref{Prop:Smallest}] Let $S_n$ denote the aforementioned ``smallest tree". Clearly, $S_n\subset \widehat{T}_n$, as any $w$ with ${\sf CanRmv}(w)=0$ will not be removed from $\widehat{T}_n$ in {\rm PruneTree}.
To prove that $E_n\backslash S_n\subset E_n\backslash
\widehat{T}_n$, we will use the following claim:
\begin{claim} If $v\in E_n\backslash S_n$, then
$\forall w\in E_n\,:\, w\succeq v\Rightarrow {\sf
CanRmv}(w)=1.$\end{claim}\begin{proof}[Proof of Claim]\,In contrapositive
form, if some $w\succeq v$ satisfies that ${\sf CanRmv}(w)=0$, that $w$
belongs to $S_n$ by definition, and then $v\in S_n$ because $S_n$ is
a tree and $v\preceq w$.\end{proof} One may use
induction starting from the leafs to deduce that, if $v\in E_n$ is
such that the conclusion of the Claim holds, it will be removed from
$\widehat{T}_n$ at some stage of {\rm PruneTree}. We deduce
$E_n\backslash S_n\subset E_n\backslash \widehat{T}_n$.\end{proof}

It turns out that, except for the root, any node of the estimated tree $\widehat{T}_n$ must be the closely connected with two nodes that yields substantially different probability distributions.

\begin{proposition}\label{prop:Survivors} Suppose $v\in \widehat{T}_n\backslash\{e\}$. Then there exist $w',w''\in E_n$ with $w'\succeq \parent(v)$, $w''\succeq v$ and
$$c\left[ \ \RR{r}{\conf(w')} + \RR{r}{\conf(w'')} \ \right]<\MM{\k}{ d(\hat{p}_n(\cdot|w'),\hat{p}_n(\cdot|w''))}.$$\end{proposition}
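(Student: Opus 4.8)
First I would invoke Proposition~\ref{Prop:Smallest}, which identifies $\widehat{T}_n$ with the smallest subtree of $E_n$ containing every $w\in E_n$ with ${\sf CanRmv}(w)=0$. The root together with all suffixes of such $w$ already forms a subtree of $E_n$ (it is suffix-closed, hence a tree, and contains all $E_n$) and contains all those nodes, so by minimality it contains $\widehat{T}_n$. Since $v\neq e$, it follows that $v$ must be a suffix of some $u\in E_n$ with ${\sf CanRmv}(u)=0$, that is, $u\succeq v$; note also $u\neq e$ because $u\succeq v\neq e$.

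Next I would unpack the meaning of ${\sf CanRmv}(u)=0$. By the definition~\eqref{Def:IsPrunable}, this is precisely the negation of a statement quantified over pairs $w',w''\in E_n$, so it yields strings $\tilde w,\bar w\in E_n$ with $u\preceq\tilde w$, $\parent(u)\preceq\bar w$ and
$$\MM{k}{d(\hat{p}_n(\cdot|\tilde w),\hat{p}_n(\cdot|\bar w))}>c\,\RR{r}{\conf(\tilde w)}+c\,\RR{r}{\conf(\bar w)}.$$
Since each metric $d_\ell$ is symmetric and the penalty sum is symmetric in its two entries, relabelling $w':=\bar w$ and $w'':=\tilde w$ turns this into exactly the inequality asserted in the proposition, with $w',w''\in E_n$.

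It then remains to check the two suffix relations at $v$ rather than at $u$. From $w''=\tilde w\succeq u\succeq v$ and transitivity of $\preceq$ we obtain $w''\succeq v$. For $w'$ it is enough to show $\parent(u)\succeq\parent(v)$, since then $w'=\bar w\succeq\parent(u)\succeq\parent(v)$. This is immediate if $u=v$; if $u\succ v$ strictly then $|\parent(u)|=|u|-1\geq|v|>|v|-1=|\parent(v)|$, while $\parent(v)$, being a suffix of $v$, is a suffix of $u$, and any two suffixes of $u$ are nested, so $\parent(v)\preceq\parent(u)$. This completes the argument.

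The whole proof is combinatorial bookkeeping, and the only step needing genuine care is this last matching: one must (a) swap the roles of the two witnesses $\tilde w,\bar w$ supplied by the prunability test for $u$, using symmetry of the distance and of the penalty, and (b) verify that passing to parents along the chain $v\preceq u$ respects the suffix order, so that witnesses for the prunability of $u$ also serve as the required witnesses for $v$.
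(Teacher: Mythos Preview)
Your proof is correct and uses the same ingredients as the paper's: Proposition~\ref{Prop:Smallest} together with the definition of ${\sf CanRmv}$. The paper argues by contradiction (assume no such $w',w''$ exist, deduce ${\sf CanRmv}(w)=1$ for every $w\succeq v$, hence by Proposition~\ref{Prop:Smallest} $v\notin\widehat{T}_n$), whereas you argue directly by first locating a $u\succeq v$ with ${\sf CanRmv}(u)=0$ and then transporting the witnesses down to $v$; these are contrapositives of one another. Your write-up is more explicit about the relation $\parent(u)\succeq\parent(v)$, which the paper hides in ``one can easily check.'' One small slip: in your first paragraph the parenthetical ``and contains all $E_n$'' should read ``and is contained in $E_n$.''
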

\begin{proof}[Proof of Proposition \ref{prop:Survivors}]Assume (to get a contradiction) that no such $w',w''$ exist. In that case one can easily check that ${\sf CanRmv}(w)=1$ for all $w\succeq v$. In particular, the subtree of $E_n$ obtained by removing $v$ and all of its descendants contains all $u$ with ${\sf CanRmv}(u)=0$. Proposition \ref{Prop:Smallest} then implies $v\not\in \widehat{T}_n$, which contradicts the assumptions of the present Proposition and finishes the proof.\end{proof}

Finally, the following result formally states the compatibility between the tree structure $\widehat{T}_n$ and the probability distributions $\hat{P}_n(\cdot|x)$ which follows immediately from the pruning definition.

\begin{proposition}\label{prop:Compatibility}Let $x,y\in A^{-1}_{-\infty}$ satisfy $\widehat{T}_n(x)=\widehat{T}_n(y)$. Then $\widehat{P}_n(\cdot|x)=\widehat{P}_n(\cdot|y)$.\end{proposition}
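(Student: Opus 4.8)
The plan is to observe that $\widehat{P}_n(\cdot\mid x)$ is, by construction, a function of the single argument $\widehat{T}_n(x)$, so two pasts that lead to the same terminal node of $\widehat{T}_n$ are automatically assigned the same conditional distribution. Concretely, I would unfold the definition
$$\widehat{P}_n(\cdot\mid x)\;\equiv\;\hat{p}_n\bigl(\cdot\bigm|\widehat{T}_n(x)\bigr)\;=\;\hat{p}_n\bigl(\cdot\bigm| x^{-1}_{-K_{\widehat{T}_n}(x)}\bigr),$$
and likewise for $y$; then, invoking the hypothesis $\widehat{T}_n(x)=\widehat{T}_n(y)$, conclude $\widehat{P}_n(\cdot\mid x)=\hat{p}_n(\cdot\mid\widehat{T}_n(x))=\hat{p}_n(\cdot\mid\widehat{T}_n(y))=\widehat{P}_n(\cdot\mid y)$. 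That is the whole argument.

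The only point deserving a line of justification is that $\widehat{T}_n(x)$ is a well-defined finite string, i.e.\ $K_{\widehat{T}_n}(x)<\infty$. This holds because $\widehat{T}_n\subseteq E_n$ and $E_n$ is finite: every $w\in E_n$ satisfies $\min_{\ell}N_{n-1,\ell}(w)>0$ and hence $|w|\le n$. Moreover the optional completion step (adding leaf children to non-leaf nodes and copying the parent's distribution) neither increases the height of the tree nor changes the rule that the distribution at a past $x$ is read off from the terminal node $\widehat{T}_n(x)$, so the conclusion is unaffected by whether one works with $\widehat{T}_n$ before or after completion. Since everything here is a matter of unwinding the definitions, there is no genuine obstacle; the statement simply records the structural compatibility of $\widehat{T}_n$ with $\widehat{P}_n$ that makes $(\widehat{T}_n,\widehat{P}_n)$ a legitimate tree/transition-probability pair in the sense of Section~\ref{Sec:GCT}.
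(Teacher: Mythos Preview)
Your proposal is correct and matches the paper's approach exactly: the paper simply states that the result ``follows immediately from the pruning definition'' and gives no further argument, which is precisely what you do by unfolding the definition $\widehat{P}_n(\cdot\mid x)=\hat{p}_n(\cdot\mid\widehat{T}_n(x))$. Your added remarks on finiteness of $K_{\widehat{T}_n}(x)$ and on the completion step are accurate but optional.
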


\section{Proofs of Section 5 of the main text}\label{sec:sec5proofs}

We begin by noting an asymptotic form of the condition on $h,\pi_*$ in Definition \ref{def:typical} under the assumption that $\Gamma,\bgamma$ are constant, in the setting of Section \ref{Sec:Beta}. Any sequence of numbers $h=h^{(n)}$ and $\pi_*=\pi_*^{(n)}$ satisfying
\begin{equation}\label{eq:asymptotictypical}h\,\pi_*^{-1} = \liloh{\frac{n}{\log n}}\mbox{ and }\pi^{-1}_* = \liloh{\frac{n^{\frac{\bgamma-\alpha}{\bgamma+1}}}{(\log n)^{\frac{\bgamma}{\bgamma+1}}}}\end{equation}
is valid in Definition \ref{def:typical} for all large enough $n$.

\begin{proof}[Proof of Theorem \ref{thm:parametric}] The $\ELL$ processes are polynomially $\beta$-mixing for any exponent $\bgamma>0$. Choosing $\bgamma$ sufficiently large (in terms of $\alpha$ and $\eps$) and checking (\ref{eq:asymptotictypical}) ensures that $(h,\pi_*)=(h_{T^{*}},\pi_{T^*})$  is a permissible choice for all large enough $n$. Applying Theorem \ref{thm:adapt} with $T=T^*$ gives:
\begin{equation}\label{eq:probparametric}\Pr{\begin{array}{c}\widehat{T}_n\subset T^*\mbox{ and }\forall x\in A^{-1}_{-\infty}\\ \|d(\widehat{P}_n(\cdot|x),p(\cdot|x))\|_{\LL,\k} = \bigoh{\sqrt{\frac{\log n}{\pi_{T^*}n}}}\end{array}}= 1-\bigoh{n^{-\xi}},\end{equation} with the expected improvement in the case of ``many processes".

We now argue that $\widehat{T}_n=T^*$ occurs whenever the following four conditions are met:
\begin{enumerate}
\item The event in (\ref{eq:probparametric}) holds;
\item ${\rm Typ}_r$ in Lemma \ref{lem:confbeta} holds;
\item ${\rm Good}_*$ in (\ref{eq:defgood1}) holds; and
\item $n$ is large enough.\end{enumerate} Combining (\ref{eq:probparametric}) with Lemmas \ref{lem:confbeta} and \ref{lem:probgood}, we see that the three events have joint probability $=1-\bigoh{n^{-\xi}}$ under our assumptions. Therefore, arguing that conditions $1$ to $4$ imply $\widehat{T}_n=T^*$ finishes the proof.

So assume the above conditions. They already imply $\widehat{T}_n\subset T^*$, so our goal is to show $\widehat{T}_n\supset T^*$. To this end it suffices to argue that any leaf $w\in T^*$ also belongs to $\widehat{T}_n$.

So fix a leaf $w$ of $T^*$. If $w=e$, $w\in \widehat{T}_n$ are we are done, so assume instead $w\neq e$. In this case, the definition of $d_{T^*}$ implies that there exists some other leaf $w'\in T^*$ descending from $w$'s parent with $\|d(p(\cdot|w),p(\cdot|w'))\|_{\LL,\k}\geq d_{T^*}$. Since $w,w'\in T^*$ are both leaves, the continuity rates of the two processes at the leaves are $0$, and we obtain from $\Good_*$ that
\begin{eqnarray*}\|d(\widehat{p}_n(\cdot|w),\widehat{p}_n(\cdot|w'))\|_{\LL,\k}&\geq &\|d(p(\cdot|w),p(\cdot|w'))\|_{\LL,\k}-\bigoh{\sqrt{\frac{\log n}{\pi_{T^*}n}}}\\ &>&d_{T^*} -\bigoh{\sqrt{\frac{\log n}{\pi_{T^*}n}}}.\end{eqnarray*}
Moreover, under ${\rm Typ}_r$,  the confidence radii $\|\conf(w)\|_{\ELL,r}$ and $\|\conf(w')\|_{\ELL,r}$ are of the order of their population counterparts $\|\oconf(w)\|_{\ELL,r},\|\oconf(w')\|_{\ELL,r}$ (respectively). Therefore,
$$c\,[\|\conf(w)\|_{\ELL,r} + \|\conf(w')\|_{\ELL,r}] = \bigoh{\sqrt{\frac{\log n}{\pi_{T^*}n}}}.$$
We have assumed that $d_T^{-1}=\liloh{\sqrt{\log n/\pi_{T^*}\,n}}$, so we conclude \[c\,[\|\conf(w)\|_{\ELL,r} + \|\conf(w')\|_{\ELL,r}]<\|d(\widehat{p}_n(\cdot|w),\widehat{p}_n(\cdot|w'))\|_{\LL,\k}\] for $n$ large enough. This implies ${\sf CanRmv}(w)=0$ and $w\in \widehat{T}_n$, as desired.\end{proof}

\begin{remark}\label{rem:BW} B\"{u}hlman and Wyner \cite{BuhlmannWyner1999} prove a related result about perfect tree recovery in the single process case ($\ELL=1$). In their assumptions they require that all leaves of $T^*$ have positive probability and also that a condition implying geometric $\phi$-mixing is satisfied. They also require
$$\pi_{T^*}^{-1} = \bigoh{\frac{\sqrt{n}}{\log^{1/2+a}n}}\mbox{ and }d^{-1}_{T^*}=\bigoh{\frac{\sqrt{\pi_{T^*} n}}{\log^{1/2+b}n}}$$
where $a,b>0$ are constants. Our estimator also satisfies this: just notice that, for complete trees, we have the crude bounds
$$\pi_{T^*}\times (\mbox{ \# of leaves of $T^*$})\leq \sum_{w\text{ leaf of }T^*}\pi_1(w)=1,\mbox{ and }$$
$$h_{T^*}\leq\mbox{(\# of leaves of $T^*$)}.$$\end{remark}

\begin{proof}[Proof of Theorem \ref{thm:complete}] We will again apply Theorem \ref{thm:adapt}. Under our assumptions the $\ELL$ processes are $\beta$-mixing with rate function $\beta(b)=\Gamma\,k^{-\bgamma}$, where $\Gamma$ depends on $\Gamma_0$, $\bgamma$ and the size of the alphabet \cite{CometsFernandezFerrari2002}. One may compare the AGCT estimator to a tree $T$ obtained by truncating the infinite $|A|$-ary tree at height $h_T=c\log n$, where $c>0$ is a constant. Non-nullness implies that $\pi_{\ell}(w) \geq \eta^{|w|}$ for any $w$, so the tree $T$ satisfies $\pi_T\geq \eta^{c\log n}$ and $(h,\pi_*)=(h_T,\pi_T)$ is a valid choice in Definition \ref{def:typical} whenever $c$ is sufficiently small. Moreover, for sufficiently small $c$ the error bound from Theorem \ref{thm:adapt} is dominated by the first term, which is $\bigoh{h_T^{-\bgamma}}$ in our case. \end{proof}

\begin{proof}[Proof of Theorem \ref{thm:renewal}]We note some basic facts. A stationary binary renewal process takes values in the alphabet $A=\{0,1\}$ and is defined by a probability distribution $\mu$ on $\N$ with finite first moment ($\mu$ is called the arrival distribution). The distribution of the corresponding process $X_0,\dots,X_k$ consists of placing ones precisely at positions
$$\tau_0,\tau_0+\tau_1,\tau_0+\tau_1+\tau_2,\tau_0+\tau_1+\tau_2+\tau_3,\dots,$$ where $\tau_0,\tau_1,\tau_2,\tau_3,\ldots$ are independent and $\tau_1, \tau_2, \dots$ have law $\mu$ ($\tau_0$ must have a different law in order to make the process stationary). We write $\mu(\geq k)\equiv \sum_{i\geq k}\mu(k)$ in what follows. One can show that:
\begin{enumerate}
\item Any binary renewal process with arrival distribution $\mu_\ell$ is compatible with the infinite tree $T^*$ with leaves $10^{k-1}$, $k\in\N_*=\N\setminus\{0\}$. Moreover $$\Pr{X^{-1}_{-k}(\ell)=10^{k-1}} = \frac{\mu_\ell(\geq k)}{\sum_{i\in \N_*}\mu_\ell(\geq i)}\geq \alpha\,\mu_\ell(\geq k)$$ for some $\alpha>0$ depending only on $C$ and $\bgamma$, since
$$\sum_{i\in \N_*} \mu(\geq i) = \sum_{k\in \N_*} \mu(k)\,k\leq \left(\sum_{k\in \N_*} \mu(k)\,k^{2+\bgamma}\right)^{\frac{1}{2+\bgamma}}\leq C^{\frac{1}{2+\bgamma}}.$$
\item The transition probabilities are $$p_\ell(1\mid 10^{k-1})=\frac{\mu_\ell(k)}{\mu_\ell(\geq k)}, \ \ \ell = 1,\ldots,\LL.$$
In particular, (\ref{eq:renewalcont}) implies that the processes $X(\ell)$ have continuous transition probabilities. The fact that the $\mu_\ell$ have full support implies that the supports of the processes are all equal to $A^{-1}_{-\infty}$. Moreover,
\begin{equation}\label{eq:momentboundrenewal}\mu_\ell(\geq k)\leq \frac{1}{k^{2+\bgamma}}\sum_{i\geq k}\mu(i)\, i^{2+\bgamma}\leq \frac{C}{k^{2+\bgamma}}.\end{equation}
\item The bound on the ($2+\bgamma$)-th moment for $\mu_\ell$ implies polynomial $\beta$-mixing with rate function $\beta(b)=\Gamma\,b^{-\bgamma}$, where $\Gamma$ depends only on $C$ and $\bgamma$ (this follows from inspecting the arguments of Section 2 in Lindvall \cite{Lindvall1979}).
\end{enumerate}
It follows that the processes satisfy the assumptions of Theorem \ref{thm:adapt}. We take $T$ to be the truncation of $T^*$ at level $h=h_T$, where $h$ is the largest positive integer such that $$\alpha\,\mu_\ell(\geq h)\geq \pi_*\equiv (\log n) \, n^{-\bgamma/(\bgamma+1)}.$$ A calculation using (\ref{eq:momentboundrenewal}) shows that $(h,\pi_*)$ is permissible in Definition \ref{def:typical}. Moreover, for all $x\in G$ we have $p_\ell(\cdot|T(x))=p_\ell(\cdot|x)$. Applying the Corollary \ref{cor:oracle1} gives the desired result.
\end{proof}

\section{Proofs of Section 6}%\ref{Sec:Applications}}

\begin{lemma}\label{lemma:DP}
In the discrete stochastic dynamic programming problem described above, for $q\geq 1$ the estimator $\widehat V$ satisfies
$$ \max_{x \in A^{-1}_{-\infty}} \frac{| \widehat V(x) - V(x) |}{\displaystyle\|V( x \cdot)\|_{\frac{q}{q-1}}\max_{u\in \mathcal{U}} \| \widehat P_{n,u}(\cdot|x)-p_u(\cdot|x)\|_q } \leq \frac{\disc}{1-\disc}$$ where $q/(q-1)=\infty$ if $q=1$.
\end{lemma}
\begin{proof}[Proof of Lemma \ref{lemma:DP}]
For $x \in A^{-1}_{-\infty}$ and $u\in \mathcal{U}$, denote by $V_x = V(x)$ the true value function, $P_{u,x}$ denote the true transition probability (infinite) matrix, $\widehat V_x = \widehat V(\widehat T_n(x))$ the value function associated with the estimated transition probabilities $\widehat P_{n,u,x}=\widehat P_{n,u}(\cdot\mid x)=\widehat p_{n,u}(\cdot\mid \widehat T_n(x))$.

Each value function is the fixed point of contraction mappings $H$, $\widehat H$ on the functions $W:A^{-1}_{-\infty}\to \mathbb{R}$. Formally, the mappings
\begin{eqnarray*}H( W )(x) &=& \max_{u\in \mathcal{U}} \{f(x_{-1},u) + \disc P_{u,x} W \} \mbox{ and} \\  \widehat H( W )(x) &=& \max_{u\in \mathcal{U}} \{f(x_{-1},u) + \disc \widehat P_{n,u,x} W \}\end{eqnarray*} are contractions with modulus $\disc$ by Blackwell's sufficient conditions.

Therefore we have
$$ \begin{array}{rl}
\| \widehat V - V \|_\infty & \leq \| \widehat V - \widehat H(V) \|_\infty + \| \widehat H(V) - V \|_\infty \\
& = \| \widehat H(\widehat V) - \widehat H(V) \|_\infty + \| \widehat H(V) - H(V) \|_\infty  \\
& \leq \disc  \| \widehat V - V \|_\infty  + \| \widehat H(V) - H(V) \|_\infty.  \end{array} $$
Thus, $\| \widehat V - V \|_\infty \leq \| \widehat H(V) - H(V) \|_\infty /(1-\disc)$. where $\| \widehat H(V) - H(V) \|_\infty = \max_{x\in A^{-1}_{-\infty}} | H(V)(x) - \widehat H(V)(x)|$. Thus the result follows by showing that
$$ \begin{array}{ll}& | H(V)(x) - \widehat H(V)(x)|  \\ =& \left| \max_{u\in \mathcal{U}}\{ f(x_{-1},u) + \disc\widehat P_{n,u,x} V\} - \max_{\tilde u \in \mathcal{U}} \{ f(x_{-1},\tilde u) + \disc P_{\tilde u,x} V\}\right| \\
 \leq &  \displaystyle \disc \max_{u\in \mathcal{U}}| ( \widehat P_{n,u,x} -  P_{u,x}) V | \\
 = &\disc \max_{u\in \mathcal{U}} \left|\sum_{a\in A} [ \hat p_{n,u}(a|\widehat T_n(x)) - p_u(a|x) ]V(ax)\right|\\
\leq &   \disc \|V(\cdot x)\|_{\frac{q}{q-1}} \max_{u\in \mathcal{U}} \|\widehat P_{n,u}(\cdot|x) - p_u(\cdot|x)\|_q.
 \end{array}$$
\end{proof}

\begin{proof}[Proof of Theorem \ref{thm:DP}]
The result follows by applying Lemma \ref{lemma:DP} with $q=1$ (which corresponds to $d_\ell=\|\cdot\|_1/2$) and Theorem \ref{thm:main} for the general case with $r=\k=\infty$ to bound
$$ \max_{u\in\mathcal{U}}\frac{\| \widehat P_{n,u}(\cdot|x)-p_u(\cdot|x)\|_1}{2} \leq \inf_T \frac{2c+2}{c-1}\MM{\k}{\gamma(T(x))} + (1+2c)\RR{\infty}{\conf(T(x))}.$$

The bound on  $\RR{\infty}{\conf(T(x))}$ follows from Definition \ref{def:firstchoice} with the family of sets $\sS = 2^A$.
\end{proof}

\begin{proof}[Proof of Theorem \ref{Thm:DCM}]
By the choice of $\conf$ we have that with probability at least $1-\delta$ the event $\Good_m$ occurs. Fix $a\in A$, $x,y\in A^{-1}_{-\infty}$ let
$$ \bar m_\ell(a,x,y) = \bar p_{n,\ell}(a|T(x))- \bar p_{n,\ell}(a|T(y)), \ \ \overline{{\rm AVEm}}(a,x,y)= \Ex{\bar{m}_\ell(a,x,y)},$$ and note that
\begin{equation}\label{Eq:AVEm}
\begin{array}{rl}
\displaystyle \widehat{{\rm AVEm}}(a,x,y) - {\rm AVEm}(a,x,y) &\displaystyle  = \frac{1}{\LL}\sum_{\ell=1}^\LL [\hat m_\ell(a,x,y) - \bar m_\ell(a,x,y)] + \\
&\displaystyle  + \frac{1}{\LL}\sum_{\ell=1}^\LL \bar m_\ell (a,x,y) - \overline{{\rm AVEm}}(a,x,y)+\\
& + \overline{{\rm AVEm}}(a,x,y)- {\rm AVEm}(a,x,y).\\
\end{array}
\end{equation}
First note that for $d_\ell=\|\cdot\|_\infty$, we have
{ $$\begin{array}{rl}
& \left|\overline{{\rm AVEm}}(a,x,y)- {\rm AVEm}(a,x,y)\right|\\
&\leq \MM{1}{d(\bar{p}_{n}(\cdot|T(x)), p(\cdot|x))} +\MM{1}{d( \bar{p}_{n}(\cdot|T(y)),p(\cdot|y))}\\
& \leq \MM{1}{\gamma(T(x))} + \MM{1}{\gamma(T(y))}.%c_{T(x)} + c_{T(y)}.
\end{array}$$}

Next, define  $E_\LL(a,x,y) = \left|\frac{1}{\LL}\sum_{\ell=1}^\LL \bar m_\ell (a,x,y) - \overline{{\rm AVEm}}(a,x,y)\right|$ and note that since $|\bar m_\ell (a,w,w')| \leq 1$ we have
$$ E_\LL(a,x,y) \leq [2/\LL] + [(\LL-1)/\LL] E_{\LL-1}(a,x,y).$$

Moreover, note that $E_\LL(a,x,y) = E_\LL(a,T(x),T(y))$. Thus, we need to consider only suffixes $w \in T$ (in particular, leaves of $T$) which implies that $N_{n,\ell}(w)>0$ for every $\ell =1,\ldots,\LL$. Thus, for $\xi = (\epsilon-[2/\LL])\LL/(\LL-1)$ we have
{\small $$
\begin{array}{rl}
 \displaystyle  P \left( \max_{a\in A, w,w' \in T}E_\LL(a,w,w') \geq \epsilon \right) & \displaystyle  \leq  P \left( \max_{a\in A, w,w' \in T}E_{\LL-1}(a,w,w') \geq \xi \right)\\
&\displaystyle \leq P \left( \max_{a\in A, { w : \min_\ell N_{n,\ell}(w)>0,\atop w': \min_\ell N_{n,\ell}(w')>0}}E_{\LL-1}(a,w,w') \geq \xi \right)\\
&\displaystyle \leq P \left( \max_{a\in A, { w : N_{n,\LL}(w)>0,\atop w': N_{n,\LL}(w')>0}}E_{\LL-1}(a,w,w') \geq \xi \right)\\
&\displaystyle \leq \sum_{{ a\in A,\atop { w\in A^*,\atop w'\in A^*}}} P \left( \begin{array}{c} N_{n,\LL}(w)>0, N_{n,\LL}(w')>0,\\ E_{\LL-1}(a,w,w') \geq \xi \end{array}\right).\\
\end{array}
$$}
The event $\{ N_{n,\LL}(w)>0, N_{n,\LL}(w')>0 \}$ is independent of $\{E_{\LL-1}(a,w,w') \geq \xi\}$. Moreover, since $|\bar m_\ell (a,w,w')| \leq 1$ are i.i.d. draws from the population of agents, for any $\xi>0$
$$ P\left( E_{\LL-1}(a,w,w') > \xi \right) \leq \exp(-(\LL-1)\xi^2/2)$$
by Hoeffding's inequality. Therefore,
{ $$
\begin{array}{rl}
&\displaystyle  P \left( \max_{a\in A, w,w' \in T}E_\LL(a,w,w') \geq \epsilon \right) \\
&\displaystyle \leq |A|\exp(-(\LL-1)\xi^2/2) \cdot \sum_{{ w\in A^*\atop w'\in A^*}} P \left( N_{n,\LL}(w)>0, N_{n,\LL}(w')>0  \right).\\
\end{array}
$$}
Next note that $\sum_{{ w\in A^*\atop w'\in A^*}} P \left( N_{n,\LL}(w)>0, N_{n,\LL}(w')>0  \right)$ is the expected number of different pairs $w,w'\in A^*$ appearing in $X_1^n(\ell)$. Therefore we have
$$\sum_{{ w\in A^*\atop w'\in A^*}} P \left( N_{n,\LL}(w)>0, N_{n,\LL}(w')>0  \right) \leq n^4/4.$$
Finally, to control the first term of (\ref{Eq:AVEm}), since $d_\ell = \|\cdot\|_\infty$, we have
{\small $$
\begin{array}{rl}
\left|\frac{1}{\LL}\sum_{\ell=1}^\LL [\hat m_\ell(a,x,y) - \bar{m}_\ell(a,x,y)] \right|&  \leq \left|\frac{1}{\LL}\sum_{\ell=1}^\LL [\hat p_{n,\ell}(a|\widehat T_n(x)) - \bar{p}_{n,\ell}(a|T(x))]\right| + \\ & + \left| \frac{1}{\LL}\sum_{\ell=1}^\LL [\hat p_{n,\ell}(a|\widehat T_n(y)) - \bar p_{n,\ell}(a|T(y))]\right|\\
& \leq \MM{1}{d(\hat p_{n}(\cdot|\widehat T_n(x)), \bar p_{n}(\cdot|T(x)))} + \\ & +\MM{1}{d(\hat p_{n}(\cdot|\widehat T_n(y)), \bar p_{n}(\cdot|T(y)))}. \\
\end{array}
$$}

Under the event $\Good_2$, by Theorem \ref{thm:main}, uniformly over $z\in A^{-1}_{-\infty}$ we have that %\textbf{[NEED]}
$$\begin{array}{rl} \MM{1}{d(\hat p_{n}(\cdot|\widehat T_n(z)), \bar p_{n}(\cdot|T(z)))} & \leq \frac{2c+2}{c-1}\MM{1}{\gamma(T(z))}+(1+2c)\RR{2}{\conf(T(z))}\end{array}$$
The result follows by combining these bounds.

\end{proof}

\section{A compendium of martingale results}\label{App:Martingale}
%\section{Economical Azuma Inequalities}\label{App:Martingale}

\begin{lemma}\label{thm:economicalazuma}Let $(M_i,\sF_i)_{i=0}^m$ be a martingale with $M_0=0$ and $|M_{i}-M_{i-1}|\leq Y_{i-1}$ for some $\sF_{i-1}$-measurable r.v. $Y_{i-1}$. Define $V_n\equiv \sum_{j=0}^{n-1}Y^2_j$. Then:
$$\forall \lambda,v>0\,:\, \Pr{M_n\geq \lambda,0<V_n\leq
v}\leq
\Pr{V_n>0}\,e^{-\frac{\lambda^2}{2v}}.$$\end{lemma}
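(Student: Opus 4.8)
The plan is to run the classical exponential-supermartingale argument for martingales with bounded increments, and then to refine it with a short stopping-time computation in order to extract the extra factor $\Pr{V_n>0}$. For a parameter $\theta>0$ I would set $V_i \equiv \sum_{j=0}^{i-1}Y_j^2$ (so $V_0=0$) and $Z_i \equiv \exp\!\left(\theta M_i - \tfrac{\theta^2}{2}V_i\right)$, so that $Z_0=1$. The only elementary input is the conditional Hoeffding bound: since $M_i-M_{i-1}$ is mean zero given $\sF_{i-1}$ and bounded in absolute value by the $\sF_{i-1}$-measurable quantity $Y_{i-1}$, we have $\Ex{\exp(\theta(M_i-M_{i-1}))\mid\sF_{i-1}}\le \exp(\tfrac{\theta^2}{2}Y_{i-1}^2)$. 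Because $V_i=V_{i-1}+Y_{i-1}^2$ is $\sF_{i-1}$-measurable, this gives $\Ex{Z_i\mid\sF_{i-1}}\le Z_{i-1}$, i.e. $(Z_i,\sF_i)$ is a supermartingale.

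Next I would note the event can be simplified: if $V_n=0$ then $Y_0=\dots=Y_{n-1}=0$, hence $M_i=M_{i-1}$ for all $i\le n$ and $M_n=M_0=0<\lambda$; thus $\{M_n\ge\lambda\}\subseteq\{V_n>0\}$ and $\{M_n\ge\lambda,\,0<V_n\le v\}=\{M_n\ge\lambda,\,V_n\le v\}$. On this event $Z_n\ge\exp(\theta\lambda-\tfrac{\theta^2}{2}v)$, so a Markov-type bound gives $\Pr{M_n\ge\lambda,\,V_n\le v}\le \exp(-\theta\lambda+\tfrac{\theta^2}{2}v)\,\Ex{Z_n\,\Ind{\{V_n>0\}}}$.

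It remains to show $\Ex{Z_n\,\Ind{\{V_n>0\}}}\le\Pr{V_n>0}$, which is the crux. I would introduce $\sigma\equiv\min\{j\ge 0: Y_j>0\}$, a stopping time for $(\sF_j)$ since $Y_j$ is $\sF_j$-measurable, with $\{V_n>0\}=\{\sigma\le n-1\}$. On $\{\sigma=j\}$ one has $Y_0=\dots=Y_{j-1}=0$, hence $M_j=0$ and $V_j=0$, so $Z_j=1$; using $\{\sigma=j\}\in\sF_j$ and the supermartingale property, $\Ex{Z_n\,\Ind{\{\sigma=j\}}}=\Ex{\Ind{\{\sigma=j\}}\Ex{Z_n\mid\sF_j}}\le\Ex{Z_j\,\Ind{\{\sigma=j\}}}=\Pr{\sigma=j}$. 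Summing over $j=0,\dots,n-1$ yields $\Ex{Z_n\,\Ind{\{V_n>0\}}}\le\Pr{\sigma\le n-1}=\Pr{V_n>0}$. Optimizing the free parameter with $\theta=\lambda/v$ turns $\exp(-\theta\lambda+\tfrac{\theta^2}{2}v)$ into $\exp(-\lambda^2/(2v))$, finishing the proof. I expect the main obstacle to be exactly this bookkeeping around $\{V_n>0\}$: the plain argument only yields $\Ex{Z_n}\le 1$ and loses the prefactor, so one has to exploit that the whole process is frozen at value $1$ until the first index where some $Y_j$ becomes positive — which is what the stopping time $\sigma$ encodes.
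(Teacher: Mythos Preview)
Your proof is correct and follows essentially the same route as the paper: the same exponential supermartingale $Z_i=\exp(\theta M_i-\tfrac{\theta^2}{2}V_i)$, the same Markov/Chernoff step, and the same key refinement to extract the factor $\Pr{V_n>0}$. Your stopping time $\sigma=\min\{j:Y_j>0\}$ is exactly the paper's decomposition $E=\bigcup_j E_j$ with $E_j=\{Y_j\neq 0,\ Y_k=0\ \forall k<j\}$ written in different language, and the concluding computation is identical.
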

\begin{proof} \textit{Step 1: Main arguments.} Write $E\equiv \{\sum_{j=0}^{n-1}Y_j^2>0\}$ and define
$$U_r\equiv e^{sM_r -
\frac{s^2}{2}\sum_{j=0}^{r-1}Y_j^2}\,(0\leq r\leq n)$$
where $s>0$ will be fixed later. By Step 2 below $(U_r,\sF_r)$ is a supermartingale.

Now notice that: $$M_n\geq \lambda,\sum_{j=0}^{n-1}Y_j^2\leq v\Rightarrow sM_n -s\lambda + \frac{s^2v}{2}-
\frac{s^2}{2}\sum_{j=0}^{n-1}Y_j^2 \geq 0\Rightarrow U_n e^{\frac{s^2v}{2}-s\lambda}\geq 1.$$
Therefore,
$$\Pr{M_n\geq \lambda,0<\sum_{j=0}^{n-1}Y_j^2\leq
v}\leq e^{\frac{s^2v}{2}-s\lambda}\,\Ex{U_n\,\Ind{E}}.$$
The result follows by considering $s=\lambda/v$ and noting that $\Ex{U_n\,\Ind{E}} \leq \Pr{V_n>0}$ by Step 3 below.

\textit{ Step 2: $(U_r,\sF_r)$ is a supermartingale.} Since $Y_r$ is $\sF_r$-measurable,
$$\Ex{\frac{U_{r+1}}{U_{r}}\mid\sF_r} = \Ex{e^{s(M_{r+1}-M_r)}\mid\sF_r} e^{- \frac{s^2Y^2_{r}}{2}}.$$ Recall that $|M_{r+1}-M_r|\leq Y_r$, hence by convexity
$$e^{s(M_{r+1}-M_{r})}\leq \cosh(sY_r) +\sinh(sY_r)(M_{r+1}-M_r).$$
Taking conditional expectations, we see that:
\begin{eqnarray*}\Ex{e^{s(M_{r+1}-M_{r})}\mid\sF_r}&\leq &\cosh(sY_r) + \sinh(sY_r)\,\Ex{(M_{r+1}-M_r)\mid\sF_r} \\ &=& \cosh(sY_r).\end{eqnarray*}
This implies $\Ex{e^{s(M_{r+1}-M_r)}\mid\sF_r} e^{- \frac{s^2Y^2_{r}}{2}}\leq \cosh(sY_r)e^{- \frac{s^2Y^2_{r}}{2}}\leq 1$ via the classical inequality $$\forall x\in\R\,:\,\cosh(x)\leq e^{x^2/2},$$ which directly follows from comparing Taylor expansions.

\textit{Step 3: $\Ex{U_n\,\Ind{E}} \leq \Pr{V_n>0}$.} Write $E_0\equiv
\{Y_0\neq 0\}$ and $E_{j}\equiv \{Y_j\neq 0\}\cap \{Y_k=0, 0\leq
k<j\}$. Notice that $E = \cup_{0\leq j\leq n-1}E_j$ (where the union
is disjoint) and that each $E_j$ is $\sF_j$-measurable. Moreover, if
$E_k$ holds, we have $\sum_{j=0}^{k-1}Y^2_j=0$ and $M_{k}=M_0=0$, hence $U_k=1$.
Therefore,
\begin{eqnarray*}\Ex{U_n\,\Ind{E}
} &=& \sum_{k=0}^{n-1}\Ex{U_n\,\Ind{E_k}}\\ \mbox{($E_k$ is
$\sF_k$-measurable)}&=& \sum_{k=0}^{n-1}\Ex{\Ind{E_k}\Ex{U_n\mid\sF_k}}\\
\mbox{($U_k$ is supermartingale)}&\leq & \sum_{k=0}^{n-1}\Ex{\Ind{E_k}U_k}\\
\mbox{($U_k=1$ in $E_k$)}&=& \sum_{k}\Pr{E_k}=\Pr{E}.\end{eqnarray*}
\end{proof}

\begin{lemma}\label{thm:economicalazumaSecond}Let $(M_i,\sF_i)_{i=0}^m$ be a martingale with $M_0=0$, $|M_{i}-M_{i-1}|\leq Y_{i-1} \leq 1$ for some $\sF_{i-1}$-measurable r.v. $Y_{i-1}$. Define $$\widetilde V_n\equiv \sum_{j=1}^{n}\Ex{(M_j-M_{j-1})^2|\sF_{j-1}}.$$ Then:
$$\forall \lambda,v>0\,:\, \Pr{M_n\geq \lambda,0<\widetilde V_n\leq
v}\leq
\Pr{\widetilde V_n>0}\,e^{-\frac{\lambda^2}{2v}(2-\exp(\lambda/v))}.$$\end{lemma}
\begin{proof} \textit{Step 1: Main Arguments.} Write $E\equiv \{\widetilde V_n>0\}$ and define
$$U_r\equiv e^{sM_r -
\frac{s^2e^s}{2}\sum_{j=1}^{r}E[(M_j-M_{j-1})^2|\sF_{j-1}]}\,(0\leq r\leq n)$$
where $s>0$ will be fixed later. It follows that $(U_r,\sF_r)$ is a supermartingale by Step 2 below.

Now notice that: $$M_n\geq \lambda,\widetilde V_n\leq v\Rightarrow sM_n -s\lambda + \frac{s^2e^sv}{2}-
\frac{s^2e^s}{2}\widetilde V_n \geq 0\Rightarrow U_n e^{\frac{s^2e^sv}{2}-s\lambda}\geq 1.$$
Therefore,
$$\begin{array}{rl}
\Pr{M_n\geq \lambda,0<\widetilde V_n\leq
v} & = \Ex{1\{ U_n e^{\frac{s^2e^sv}{2}-s\lambda}\geq 1 \} \cdot \Ind{E}}\\
&\leq \Ex{ U_n e^{\frac{s^2e^sv}{2}-s\lambda}  \Ind{E} } \\
& = e^{\frac{s^2e^sv}{2}-s\lambda}\,\Ex{U_n\,\Ind{E}}.\end{array}$$
The Lemma follows from choosing $s=\lambda/v$ and noting that $\Ex{U_n\,\Ind{E}}\leq \Pr{E}$ by Step 3.

\textit{Step 2: $(U_r,\sF_r)$ is a supermartingale.} Since $Y_r$ is $\sF_r$-measurable,
$$\Ex{\frac{U_{r+1}}{U_{r}}\mid\sF_r} = \Ex{e^{s(M_{r+1}-M_r)}\mid\sF_r} e^{- \frac{s^2e^sE[(M_{r+1}-M_{r})^2|\sF_{r}]}{2}}.$$ Recall that $|M_{r+1}-M_r|\leq Y_r\leq 1$, hence by \cite{LedouxTalagrandBook} page 32,
$$\Ex{e^{s(M_{r+1}-M_{r})}\mid\sF_r}\leq \exp\left( \frac{s^2e^s}{2} E[(M_{r+1}-M_{r})^2|\sF_{r}] \right).$$
This implies $\Ex{e^{s(M_{r+1}-M_r)}\mid\sF_r} e^{- \frac{s^2e^sE[(M_{r+1}-M_{r})^2|\sF_{r}]}{2}}\leq  1$.

\textit{Step 3: $\Ex{U_n\,\Ind{E}
}\leq \Pr{E}$}. The proof is similar to Step 3 in Lemma \ref{thm:economicalazuma}.
\end{proof}

%%%%%%%%%%%%%
%%%%%%%%%%%%%
For any $\gamma > 1$, $\delta\in(0,1)$, define monotonic function $h:[0,\infty) \to [0,\infty)$    $$h(x) = 2x\gamma^2\log\left\{\frac{2}{\delta}(1+\log_{\gamma}x)(2+\log_{\gamma}x)\right\},$$
and let $i_0$ be any integer such that
$$ \gamma^{i_0}\log^2(2-(1/\gamma)) \geq 2\log(2/\delta) + 2\log[(1+i_0)(2+i_0)],$$
so that $2-\exp(\sqrt{h(\gamma^{i_0})}/\gamma^{i_0+1}) \geq 1/\gamma$.

\begin{lemma}\label{thm:economicalmgfestimate} Let $(M_i,\sF_i)_{i=0}^m$ be a martingale with
 $M_0=0$, $|M_{i}-M_{i-1}|\leq Y_{i-1} \leq 1$ for some $\sF_{i-1}$-measurable binary r.v. $Y_{i-1}$. Define $\widetilde V_n\equiv \sum_{j=1}^{n}E[(M_j-M_{j-1})^2|\sF_{j-1}]$ and $V_n\equiv \sum_{j=0}^{n-1}Y_j^2$. Then
{\small $$ \Pr{M_n \geq \sqrt{h(\widetilde V_n)}, \widetilde V_n>\gamma^{i_0}} + \Pr{M_n \geq \sqrt{h(V_n)/\gamma}, 0 < \widetilde V_n \leq \gamma^{i_0} }  \leq
\delta\cdot \Pr{V_n>0}. $$}
\end{lemma}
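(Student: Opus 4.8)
The plan is to decompose the event of interest according to the (random) size of the predictable quadratic variation $V_n$ — using a geometric grid of scales $\gamma^i$ — and to apply the Bernstein-type bound of Lemma \ref{thm:economicalazumaSecond} on each scale, or else the Azuma-type bound of Lemma \ref{thm:economicalazuma} on the sub-threshold range where $V_n$ is too small to give a useful Bernstein estimate. First I would handle the main term $\Pr{M_n\geq \sqrt{h(V_n)},\,V_n>\gamma^{i_0}}$. Since $V_n$ is genuinely random one cannot directly feed it into Lemma \ref{thm:economicalazumaSecond}, whose second argument is a fixed $v$; so I would write
$$\Pr{M_n\geq \sqrt{h(V_n)},\,V_n>\gamma^{i_0}}\le \sum_{i\ge i_0}\Pr{M_n\geq \sqrt{h(\gamma^{i})},\,\gamma^{i}<V_n\le \gamma^{i+1}},$$
using the monotonicity of $h$ so that on the $i$th slice $\sqrt{h(V_n)}\ge \sqrt{h(\gamma^i)}$. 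On the $i$th slice apply Lemma \ref{thm:economicalazumaSecond} with $\lambda=\sqrt{h(\gamma^i)}$ and $v=\gamma^{i+1}$; the key quantitative point is that $\lambda^2/v = h(\gamma^i)/\gamma^{i+1}$, and the factor $2-\exp(\lambda/v)$ in the exponent is bounded below by a positive constant — here $1/\gamma$, by the very choice of $i_0$ (since $i\ge i_0$ and $\lambda/v=\sqrt{h(\gamma^i)}/\gamma^{i+1}$ is decreasing-ish in $i$ on that range, the bound $2-\exp(\lambda/v)\ge 1/\gamma$ persists for all $i\ge i_0$). That yields a per-slice bound $\Pr{V_n>0}\exp(-\tfrac{1}{2\gamma}\,h(\gamma^i)/\gamma^{i+1})=\Pr{V_n>0}\exp\bigl(-\gamma^{-2}\log\{(2/\delta)(1+i)(2+i)\}/\gamma^{\,?}\bigr)$; after plugging the definition of $h$, the exponent collapses so that the $i$th slice contributes at most $\Pr{V_n>0}\cdot\frac{\delta}{2}\cdot\frac{1}{(1+i)(2+i)}$.

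For the second term $\Pr{M_n\geq \sqrt{h(\widetilde V_n)/\gamma},\,0<V_n\le \gamma^{i_0}}$ I would instead use the unconditional quadratic variation $\widetilde V_n=\sum Y_j^2$, which since the $Y_j$ are binary is an integer, and apply Lemma \ref{thm:economicalazuma} slice-by-slice over the dyadic-in-$\gamma$ scales $\gamma^{i}<\widetilde V_n\le \gamma^{i+1}$ for $0\le i\le i_0$ (together with the degenerate slice $\widetilde V_n\le 1$, i.e.\ $\widetilde V_n\in\{0,1\}$, handled directly). The monotonicity of $h$ again gives $\sqrt{h(\widetilde V_n)/\gamma}\ge \sqrt{h(\gamma^i)/\gamma}$ on the $i$th slice, and Lemma \ref{thm:economicalazuma} with $\lambda=\sqrt{h(\gamma^i)/\gamma}$, $v=\gamma^{i+1}$ gives exponent $-\lambda^2/2v=-h(\gamma^i)/(2\gamma^{i+2})$, which after substituting $h$ is again $-\log\{(2/\delta)(1+i)(2+i)\}$ up to the $\gamma$-factors, producing the same $\Pr{V_n>0}\cdot\frac{\delta}{2(1+i)(2+i)}$ per-slice bound. (One checks that $\{\gamma^{i}<\widetilde V_n\le\gamma^{i+1}\}\cap\{V_n>0\}$ indeed forces $V_n>0$, so $\Pr{V_n>0}$ is the correct prefactor; and $\{V_n\le\gamma^{i_0}\}$ is not actually needed for the bound on this term, it is only there to make the two ranges complementary.) Summing both families of slices, $\sum_{i\ge 0}\frac{1}{(1+i)(2+i)}=1$ by telescoping, so the two terms together contribute at most $\Pr{V_n>0}\,(\tfrac{\delta}{2}+\tfrac{\delta}{2})=\delta\,\Pr{V_n>0}$, which is the claim.

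The main obstacle is the bookkeeping that makes the exponent in each slice match the $\tfrac{\delta}{2(1+i)(2+i)}$ target exactly, while simultaneously verifying that the factor $2-\exp(\lambda/v)$ in Lemma \ref{thm:economicalazumaSecond} stays bounded below by $1/\gamma$ uniformly over the relevant slices — this is precisely what the threshold $i_0$ and the inequality $2-\exp(\sqrt{h(\gamma^{i_0})}/\gamma^{i_0+1})\ge 1/\gamma$ in the lemma's preamble are engineered to guarantee. A minor additional care point is the boundary slice $\widetilde V_n\le 1$ in the second term: there either $M_n=0$ (if $\widetilde V_n=0$) or $|M_n|\le 1$ with $\sqrt{h(1)/\gamma}=\sqrt{2\gamma\log(4/\delta)}>1$ once $\delta$ is in range, so the event is empty and contributes nothing. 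Once these checks are in place the two geometric sums close and give the stated bound with no slack to spare.
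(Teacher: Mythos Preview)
Your approach is essentially identical to the paper's: geometric slicing of $V_n$ (via Lemma~\ref{thm:economicalazumaSecond}) for the first term and of $\widetilde V_n$ (via Lemma~\ref{thm:economicalazuma}) for the second, followed by the telescoping sum $\sum_{i\ge 0}1/((1+i)(2+i))\le 1$. One slip: for the second term you wrote the slicing range as $0\le i\le i_0$, but since $V_n\le\gamma^{i_0}$ does not bound $\widetilde V_n$ you must (as you in fact do when summing at the end) take all $i\ge 0$; the paper does exactly this, using that $\widetilde V_n$ is integer-valued so $\{\widetilde V_n>0\}=\bigcup_{i\ge 0}\{\gamma^i\le\widetilde V_n<\gamma^{i+1}\}$, which also absorbs your separate boundary-slice discussion.
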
\begin{proof}
We bound the first term as {\tiny $$
\begin{array}{rl}
 \displaystyle \Pr{ M_n \geq \sqrt{h(\widetilde V_n)}, \widetilde V_n > \gamma^{i_0} } & \displaystyle  =  \sum_{i\geq i_0} \Pr{ M_n \geq \sqrt{h(\widetilde V_n)}, \gamma^i\leq \widetilde V_n  \leq \gamma^{i+1} } \\
 & \displaystyle  \leq  \sum_{i\geq i_0} \Pr{ M_n \geq  \sqrt{h(\gamma^{i})}, 0 < \widetilde V_n  \leq \gamma^{i+1}} \\
 &\displaystyle \leq  \Pr{\widetilde V_n>0}\sum_{i\geq i_0} \exp\left( - \frac{h(\gamma^{i})}{2\gamma^{i+1}}\left[2 - \exp\left(\frac{\sqrt{h(\gamma^i)}}{\gamma^{i+1}} \right)\right]\right)\\
 &\displaystyle \leq \Pr{\widetilde V_n>0}\sum_{i\geq i_0} \exp\left( -\log\frac{2}{\delta} - \log[(1+i)(2+i)]\right) \\
\end{array}
$$}where the second inequality follows by applying Lemma \ref{thm:economicalazumaSecond} for each $i\geq i_0$, and the last line follows from the definition of $i_0$.
Since $i_0 \geq 0$ it follows that
$$ \sum_{i\geq i_0} \exp\left( -\log\frac{2}{\delta} -\log[(1+i)(2+i)]\right) = \frac{\delta}{2}\sum_{i\geq i_0} \frac{1}{(1+i)(2+i)} \leq \frac{\delta}{2}.$$
Furthermore, note that $\widetilde V_n \leq V_n$ so that $\Pr{\widetilde V_n>0}\leq \Pr{V_n>0}$.

Next we proceed for the second term. Note that $\{0 < \widetilde V_n < \gamma^{i_0}\} \subseteq \{ V_n > 0\}$ and that $V_n$ only takes integer values so that
$$\{V_n>0\} = \bigcup_{i=0}^{+\infty}\{\gamma^i\leq V_n<\gamma^{i+1}\}$$
and the union is disjoint. We deduce
{\tiny $$
\begin{array}{rl}
\displaystyle  \Pr{ M_n\geq \sqrt{h(V_n)/\gamma}, 0 < \widetilde V_n \leq \gamma^{i_0} } & \leq  \displaystyle  \Pr{M_n\geq \sqrt{h(V_n)/\gamma}, V_n > 0 }\\
 & \displaystyle  =  \sum_{i\geq 0}  \Pr{ M_n\geq \sqrt{h(V_n)/\gamma}, \gamma^i\leq V_n  \leq \gamma^{i+1}} \\
 & \displaystyle  \leq  \sum_{i\geq 0}  \Pr{ M_n \geq  \sqrt{h(\gamma^{i})/\gamma}, 0 < V_n  \leq \gamma^{i+1}} \\
 &\displaystyle \leq   \Pr{V_n>0}\sum_{i\geq 0} \exp\left( - \frac{h(\gamma^{i})}{2\gamma^{i+2}}\right)\\
 &\displaystyle \leq  \Pr{V_n>0}\sum_{i\geq 0} \exp\left( -\log\frac{2}{\delta} - \log[(1+i)(2+i)]\right) \\
 & \displaystyle \leq  \Pr{V_n>0}\delta /2.
\end{array}
$$}
where we applied Lemma \ref{thm:economicalazuma} for each $i\geq 0$, and the definition of $h$.

\end{proof}

\section{Typicality results for $\beta$-mixing processes}

In what follows we use
$$\beta^{-1}(x)\equiv \min\{b\in \N\,:\, \forall b'\geq b,\, \beta(b')\leq x\}\,\,(x\in (0,1)).$$

\primelemma*

\begin{proof}[Proof of Lemma \ref{thm:betamixingtypicality}] Consider a number $b\in\N\backslash\{0\}$. Given $r\in \N$, a sequence $B=(B_1,\dots,B_{r})\in [n]$ of subsets of $[n]$ is said to consist of $b$-separated blocks if each $B_i$ is an interval in $[n]$
 and $\min B_{i+1}\geq \max B_{i}+b$ for $1\leq i\leq r-1$. We say that such a sequence is {\em $t$-regular} if $t=|B_1|=|B_2|=\dots=|B_{r-1}|\geq |B_r|$.

\begin{lemma}\label{lem:blocks}Under the assumptions of Lemma \ref{thm:betamixingtypicality}, let $(B_1,\dots,B_{r})$ be a sequence of $b$-separated $t$-regular blocks where $t\geq 2|w|$. Define for each $w\in S$ the number of occurrences of $w$ that are contained in one of the blocks $B_i$:
$$N(w)\equiv |\{j\in [n]\,:\, \exists i\in [r],\, j,j+|w|-1\in B_i\mbox{ and }X^{j+|w|-1}_{j}=w\}|$$
and let $n_w$ denote the maximum number of places where $w$ may occur: $$n_w\equiv \sum_{i=1}^r(|B_i|-|w|+1)_+ = (t-|w|+1)\,(r-1) + (|B_r|-|w|+1)_+.$$ Given $\lambda>0$, let $E(\lambda)$ denote the event:
$$E(\lambda)\equiv \left\{\forall w\in S,\, \left|\frac{N(w)}{n_w}-\pi(w)\right|\leq \lambda\,\pi(w)\right\}$$
Then
$$\Pr{E(\lambda)}\geq 1 - 2|S|\exp\left(-\frac{\lambda^2\,\pi_S\,(r-1)}{16(1 + \frac{\lambda}{6})}\right) - \frac{2\beta(b)}{\lambda\,\pi_S}.$$\end{lemma}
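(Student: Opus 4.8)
The plan is to prove, for each fixed $w\in S$, a two-sided deviation bound for $N(w)$ around $n_w\pi(w)$ and then union over $S$. Three ingredients enter: (1) a \emph{within-block blocking} that rewrites $N(w)$ as a sum of bounded increments which are pairwise $b$-separated, hence decouplable through the $\beta$-mixing hypothesis; (2) the Bernstein-type martingale inequality of Lemma~\ref{thm:economicalazumaSecond}, which turns a bound on the predictable quadratic variation into an exponential tail; and (3) an elementary Markov-inequality estimate for the ``bias'' created by the decoupling, arranged so that the $\beta(b)$-term in the conclusion carries no $|S|$ factor.

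Fix $w\in S$, $k=|w|$. I would partition each block $B_i$ into consecutive mini-blocks of a common length $\ell_0$ of order $b$ (the construction is arranged so that $b$ dominates $h_S$, and the regime where $t$ is not large compared to $\ell_0$ — in which the target exponent is weak — is handled by using the $B_i$ themselves). Let $Y_u$ be the number of occurrences of $w$ whose first symbol lies in mini-block $u$, so $N(w)=\sum_u Y_u$ with $0\le Y_u\le m_u\le\ell_0$, where $m_u$ counts the admissible start-positions in mini-block $u$ and $\sum_u m_u=n_w$ up to edge corrections. Since $Y_u$ depends on the symbols only up to the end of mini-block $u$ plus the overhang $k-1$, choosing $\ell_0\asymp b+h_S$ makes the supports of $Y_u$ and $Y_{u+2}$ at least $b$ apart; hence, splitting the mini-blocks into odd- and even-indexed families and equipping each family with its natural increasing filtration $\mathcal{F}$, we obtain the martingale decomposition $N_{\mathrm{odd}}(w)-\pi(w)n^{\mathrm{odd}}_w=\sum_{u\ \mathrm{odd}}(Y_u-\mu_u)+\sum_{u\ \mathrm{odd}}(\mu_u-\Ex{Y_u})$, with $\mu_u=\Ex{Y_u\mid\mathcal{F}_{<u}}$ and $\Ex{Y_u}=m_u\pi(w)$ by stationarity (symmetrically for the even family, $N(w)=N_{\mathrm{odd}}(w)+N_{\mathrm{even}}(w)$, $n_w=n^{\mathrm{odd}}_w+n^{\mathrm{even}}_w$). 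Applying the conditional form of Definition~\ref{def:betamix} together with the data-processing inequality for conditioning yields, for each used $u$, a nonnegative random variable $\beta_u$ with $\Ex{\beta_u}\le\beta(b)$ and $|\mu_u-\Ex{Y_u}|\le m_u\beta_u$.

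For the fluctuation sum, rescaling by $\ell_0$ makes the increments bounded by $1$ and makes the predictable quadratic variation at most $\ell_0^{-2}\sum_u m_u\mu_u\le\ell_0^{-1}\sum_u\mu_u$, which on the event that the bias sum is small is of order $\pi(w)n_w/\ell_0$. Feeding this into Lemma~\ref{thm:economicalazumaSecond} with deviation of order $\lambda\pi(w)n_w/\ell_0$, and using $t\ge 2k$ (so $n_w\ge(r-1)(t-k+1)\ge\tfrac12(r-1)t$) together with $\ell_0\lesssim b$, produces a tail of the claimed shape $\exp(-c\lambda^2\pi_S t(r-1)/[b(1+\lambda/6)])$; summing over the two parities, over the two signs, and over $w\in S$, and absorbing constants, gives the prefactor $2|S|$. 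For the bias sum the key observation is that $\sum_{u\ \mathrm{odd}}m_u\beta_u\le\ell_0\sum_{u\ \mathrm{odd}}\beta_u$ while $n^{\mathrm{odd}}_w$ is at least a constant multiple of $(\#\text{used mini-blocks})\cdot\ell_0$, so the ratio $\sum_{u\ \mathrm{odd}}m_u\beta_u/n^{\mathrm{odd}}_w$ is bounded by a constant times the average of the $\beta_u$ — a quantity free of $w$; since that average has expectation $\le\beta(b)$, a single Markov inequality bounds by $\le 2\beta(b)/(\lambda\pi_S)$ the probability of the $w$-uniform event on which any of the bias sums exceeds its allotted fraction of $\lambda\pi(w)n^{\mathrm{odd}}_w$. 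Intersecting the complement of this event with the fluctuation bounds gives $E(\lambda)$, and passing to complements yields the stated estimate.

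The main obstacle is organizational rather than conceptual: one must (a) choose $\ell_0$ so that the used mini-blocks remain genuinely $b$-separated after accounting for the length-$k$ overhang of each $Y_u$, and so that the straddling occurrences lost at mini-block boundaries are negligible compared with $n_w$; (b) keep the decoupling error uniform in $w$ — comparing $\sum_u m_u\beta_u$ to $n_w$ (equivalently, to the number of mini-blocks) rather than bounding the $\beta_u$ individually is precisely what lets the $\beta(b)$-term escape a union bound over $S$; and (c) carry the several constants through Lemma~\ref{thm:economicalazumaSecond} so the exponent is exactly of the stated form, while reconciling the regimes $t\gtrsim\ell_0$ and $t\lesssim\ell_0$. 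Everything else is routine given Lemma~\ref{thm:economicalazumaSecond} and Definition~\ref{def:betamix}.
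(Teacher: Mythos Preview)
Your route differs from the paper's and is considerably more laborious. The paper does not use martingales at all: since the $B_i$ are already $b$-separated, it couples $(X_{B_i})_{i=1}^r$ to an independent sequence $(\tilde X_{B_i})_{i=1}^r$ with $\tilde X_{B_i}\eqdist X_{B_i}$ via the maximal coupling, so that $\Pr{X_{B_i}\neq\tilde X_{B_i}}\leq\beta(b)$. The $\beta$-term is then a single Markov bound on $\max_{w\in S}|N(w)-\tilde N(w)|/n_w$, which is dominated by the block-average $\frac{1}{r-1}\sum_{i\geq 2}\Ind{\{X_{B_i}\neq\tilde X_{B_i}\}}$ and hence $w$-free --- exactly your observation (b). Concentration for the decoupled sum $\tilde N(w)=\sum_i\tilde N_i(w)$ is the classical Bennett inequality for independent bounded summands: the proof uses $\tilde N_i(w)\leq b$ (tacitly assuming $t\leq b$, which holds in every application of the lemma), $\sum_i\Var{\tilde N_i(w)}\leq b\,\pi(w)\,n_w$, deviation $\tfrac12\lambda\pi(w)n_w$, yielding precisely the exponent $\lambda^2\pi(w)n_w/[8b(1+\lambda/6)]$; then $n_w\geq\tfrac12(r-1)t$ and a union over $S$ finish. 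No mini-blocks, no parity split, no filtration.

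Beyond the extra scaffolding, your plan has a genuine gap. Lemma~\ref{thm:economicalazumaSecond} delivers an exponent $\tfrac{\lambda'^2}{2v}\bigl(2-e^{\lambda'/v}\bigr)$, and in your parametrization $\lambda'/v$ is of order the original $\lambda$; as soon as $\lambda\geq\ln 2$ the factor $2-e^{\lambda'/v}$ is nonpositive and the bound is vacuous. You cannot ``absorb constants'' to reach the Bernstein denominator $(1+\lambda/6)$, because the two forms disagree qualitatively for large $\lambda$. What is actually needed is Freedman's martingale Bernstein inequality, which the paper does not provide; the paper sidesteps this entirely by decoupling first and invoking ordinary Bennett on i.i.d.\ summands. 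A secondary point: since the lemma is only applied with $t\leq b$, your within-block mini-blocking with $\ell_0\asymp b$ and the odd/even split are superfluous --- the $B_i$ themselves already serve as the ``mini-blocks,'' and any two of them are $b$-separated without further thinning.
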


\begin{proof}[Proof of Lemma \ref{lem:blocks}] Let $\tilde{X}_{B_1},\dots,\tilde{X}_{B_{r+1}}$ be a sequence of independent random variables where each $\tilde{X}_{B_i}$ has the same distribution as $X_{B_i}$. Define $\tilde{N}(w)$ and $\tilde{E}(\cdot)$ in analogy with $N(w)$ and $E(\cdot)$ (respectively).

Our first major goal in the proof is to show:

\begin{claim}$\Pr{E(\lambda)}\geq \Pr{\tilde{E}(\lambda/2)} - \frac{2\beta(b)}{\pi_S\,\lambda}.$\end{claim}
To prove the claim we first construct a coupling of $\tilde{X}_{B_1},\dots,\tilde{X}_{B_{r}}$ to the process $X^{+\infty}_{-\infty}$. Set $\tilde{X}_{B_1}=X_{B_1}$. Assuming that we have defined $\tilde{X}_{B_i}$ for $1\leq i<j$, we sample $(X_{B_j},\tilde{X}_{B_j})$ from a coupling achieving total variation distance. That is to say,
$$\begin{array}{l}\Pr{X_{B_j}\neq \tilde{X}_{B_j}\mid X_{B_i},i<j} \\ =\sup_{E\subset A^{B_j}}|\Pr{X_{B_j}\in E\mid X_{B_i},i<j} - \Pr{X_{B_j}\in E}|.\end{array}$$
The $b$ separation condition implies that $X_{B_j}$ is $b$ steps ahead into the future from $X_{B_i},i<j$. Therefore, the $\beta$-mixing condition implies:
$$\begin{array}{l}\Pr{X_{B_j}\neq \tilde{X}_{B_j}} \\= \Ex{\sup_{E\subset A^{B_j}}|\Pr{X_{B_j}\in E\mid X_{B_i},i<j} - \Pr{X_{B_j}\in E}|}\leq \beta(b).\end{array}$$
Now observe that in order for $E(\lambda)$ to hold it suffices that $\tilde{E}(\lambda/2)$ holds and that
$$\forall w\in S,\, \frac{|N(w)-\tilde{N}(w)|}{n_w}\leq \frac{\lambda\,\pi_S}{2}.$$
Therefore we will be done once we show that:
\begin{equation}\label{eq:eventwewant}\Pr{\forall w\in S,\, \frac{|N(w)-\tilde{N}(w)|}{n_w}\leq \frac{\lambda\,\pi_S}{2}}\geq 1 - \frac{2\beta(b)}{\lambda\,\pi_S}.\end{equation}
To do this, notice that for any $w$:
$$|N(w)-\tilde{N}(w)|\leq \sum_{i=1}^r\,(|B_i|-|w|+1)_+\,\Ind{\{X_{B_j}\neq \tilde{X}_{B_j}\}}.$$
This is because each block $B_i$ may contain at most $|B_i|-|w|+1$ occurrences of $w$.

The first $r-1$ blocks have the same size $|B_i|=t$, whereas the last one cannot be larger, hence $n_w\geq (r-1)(t-|w|+1)_+$. Moreover, $X_{B_1}=\tilde{X}_{B_1}$ always. We deduce:
$$|N(w)-\tilde{N}(w)|\leq (t-|w|+1)_+\,\sum_{i=2}^r\Ind{\{X_{B_j}\neq \tilde{X}_{B_j}\}}\leq \frac{n_w}{r-1}\,\sum_{i=2}^r\Ind{\{X_{B_j}\neq \tilde{X}_{B_j}\}},$$
and \begin{eqnarray*}\Ex{\max_{w\in S}\frac{|N(w)-\tilde{N}(w)|}{n_w}}&\leq& \Ex{\max_{w\in S}\frac{|N(w)-\tilde{N}(w)|}{n_w}}\\ &\leq &\frac{\sum_{i=2}^r\Pr{X_{B_j}\neq \tilde{X}_{B_j}}}{r-1}\\ &\leq &\beta(b).\end{eqnarray*}

We deduce from Markov's inequality that:
\begin{equation*}\Pr{\max_{w\in S}\frac{|N(w)-\tilde{N}(w)|}{n_w}>\frac{\lambda\,\pi_S}{2}}\leq \frac{2\beta(b)}{\lambda\,\pi_S},\end{equation*}
and this is precisely (\ref{eq:eventwewant}), which finishes the end of the proof of the Claim.

We must now bound $\Pr{\tilde{E}(\lambda/2)}$. By the union bound, we have:
\begin{equation}\label{eq:decomposeunionbound}\Pr{\tilde{E}(\lambda/2)}\geq 1 - \sum_{w\in S}\Pr{\left|\frac{\tilde{N}(w)}{n_w} - \pi(w)\right|>\frac{\lambda\,\pi(w)}{2}}.\end{equation}
Fix a $w\in S$. Let $\tilde{N}_i(w)$ denote the number of occurrences of $w$ in $B_i$. We will apply Bennett's inequality to the sum of these random variables. To this end we note that:
\begin{enumerate}
\item $\sum_{i=1}^r\tilde{N}_i(w)=\tilde{N}(w)$.
\item {\em The $\tilde{N}_i(w)$ are independent.} This is so because the $\tilde{X}_{B_i}$ are independent.
\item {\em $\tilde{N}_i(w)\leq (|B_i|-|w|+1)_+\leq t$ for all $i$} because $t$ is an upper bound on $|B_i|$.
\item {\em $\sum_{i}\Ex{\tilde{N}_i(w)} = \pi(w)n_w$.}
\item {\em $\sum_{i}\Var{\tilde{N}_i(w)}\leq \pi(w)\,n_w\,t$.} This is so because each $\tilde{N}_i(w)$ is a sum of $(|B_i|-|w|+1)_+\leq t$ indicators with variance $\pi(w)(1-\pi(w))$ and the variance of a sum of $\leq t$ terms is at most $t$ times the sum of the variances (by Cauchy Schwarz).
\end{enumerate}
Therefore,
$$\Pr{|\tilde{N}(w)-\pi(w)\,n_w|\geq \frac{\lambda\,\pi(w)\,n_w}{2}}\leq 2\,\exp\left(-\frac{\lambda^2\,\pi(w)\,n_w}{8t(1 + \frac{\lambda}{6})}\right).$$
Since $t\geq 2h$,
$$\forall w\in S,\, n_w = \sum_{i=1}^r(|B_i|-|w|+1)_+\geq (r-1)\,(t-|w|+1)\geq \frac{(r-1)\,t}{2},$$
and the result follows from plugging the probability inequality into (\ref{eq:decomposeunionbound}) and applying the Claim.\end{proof}

From now on we set:

\begin{eqnarray}\label{eq:defbbeta}b &\equiv& \left\lceil\frac{10h_S}{\xi}\right\rceil\vee \beta^{-1}\left(\frac{\xi\,\pi_S\,\delta_0}{24}\right),\\
\label{eq:defrbeta}r&\equiv & \left\lceil \frac{n}{2b}\right\rceil.\end{eqnarray}

We now construct three sets of $b$-separated blocks in $[n]$. The first one is:
\begin{enumerate}
\item $B^{(1)}=(B_1^{(1)},\dots,B_r^{(1)})$ consists of intervals of the form
$$B_i^{(1)}\equiv\{b(2i-2)+s\,:\, 1\leq s\leq b\}\cap [n].$$
These are the intervals of length $b$ whose right endpoints are even multiples of $b$.
These are $b$-separated $b$-regular blocks.
\item $B^{(2)}=(B_1^{(2)},\dots,B_r^{(2)})$ consists of intervals of the form
$$B_i^{(1)}\equiv\{b(2i-1)+s\,:\, 1\leq s\leq b\}\cap [n].$$
These are the intervals whose right endpoints are odd multiples of $b$. In this case we set $B^{(2)}_i(w)=B^{(2)}_i$ for each $1\leq i\leq r$ and $w\in S$. This also results in $b$-separated $b$-regular blocks.
\item $B^{(3)}=(B^{(3)}_1,\dots,B^{(3)}_{2r-1})$ consists of intervals
$$B^{(3)}_i =\{bi-h_S+2,bi-h_S+3,\dots,bi+h_S\}\cap [n].$$
This results in $b$-separated $2h_S$-regular blocks, as one can check. (Here one must use $b\geq 2h_S$, which follows from (\ref{eq:defbbeta}) and the assumption $\xi<1/2$.)
\end{enumerate}
For each $k\in \{1,2,3\}$, let $N^{(k)}(w)$ count the number of occurrences of $w$ that are contained in a block of the form $B^{(k)}_i$ and let $n^{(k)}_w = \sum_i(|B_i^{(k)}|-|w|+1)_+$. We will need two propositions.
\begin{proposition}\label{claim:countinblocks}For any $w\in S$,
$$N^{(1)}(w)+N^{(2)}(w)\leq N_n(w)\leq N^{(1)}(w)+N^{(2)}(w)+N^{(3)}(w).$$\end{proposition}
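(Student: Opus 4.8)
The plan is to count the occurrences of $w$ directly, sorting them by their position relative to the grid $\{b,2b,3b,\dots\}$. Throughout, call an \emph{occurrence of $w$} an index $j$ with $\{j,j+|w|-1\}\subseteq[m]$ and $X^{j+|w|-1}_{j}=w$; each distinct such $j$ is counted once, so $N_m(w)$ is exactly the number of occurrences of $w$. The first thing to record is that the blocks $B^{(1)}_1,B^{(2)}_1,B^{(1)}_2,B^{(2)}_2,\dots$ are precisely the consecutive length-$b$ intervals $[1,b],[b+1,2b],[2b+1,3b],\dots$ intersected with $[m]$; hence $\{B^{(1)}_i\}_i\cup\{B^{(2)}_i\}_i$ is a partition of $[m]$ (the rightmost piece possibly shorter than $b$). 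Since $b\ge 2h_S\ge |w|$ by \eqref{eq:defbbeta} and $1\le|w|\le h_S$, every occurrence $j$ of $w$ is then of exactly one of two mutually exclusive types: \emph{(i)} $j$ and $j+|w|-1$ lie in the same block of this partition, i.e.\ in a common $B^{(1)}_i$ or a common $B^{(2)}_i$; or \emph{(ii)} $j$ \emph{straddles a grid point}, meaning there is a (necessarily unique) integer $k\ge 1$ with $j\le bk<bk+1\le j+|w|-1$.

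The lower bound is immediate: the index sets counted by $N^{(1)}(w)$ and by $N^{(2)}(w)$ are disjoint, because $B^{(1)}_i\cap B^{(2)}_{i'}=\emptyset$ for all $i,i'$, and each such index is an occurrence of $w$; hence $N^{(1)}(w)+N^{(2)}(w)\le N_m(w)$.

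For the upper bound I would take an arbitrary occurrence $j$ of $w$ and use the dichotomy. If $j$ is of type (i) it is counted by $N^{(1)}(w)$ or by $N^{(2)}(w)$. If $j$ is of type (ii) with straddled grid point $bk$, then from $j\le bk$ and $j+|w|-1\ge bk+1$ one gets
$$ bk-h_S+2 \le bk-|w|+2 \le j \qquad\text{and}\qquad j+|w|-1 \le bk+|w|-1 \le bk+h_S, $$
so $\{j,j+|w|-1\}\subseteq B^{(3)}_k$ and $j$ is counted by $N^{(3)}(w)$. Thus every occurrence of $w$ is counted at least once on the right-hand side, which gives $N_m(w)\le N^{(1)}(w)+N^{(2)}(w)+N^{(3)}(w)$.

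The only steps requiring care — and where the hypotheses really get used — are the off-by-one verifications: that $\{B^{(1)}_i\}\cup\{B^{(2)}_i\}$ tiles all of $[m]$ (which needs $2rb\ge m$, i.e.\ the choice $r=\lceil m/(2b)\rceil$), that $B^{(3)}$ contains a block $B^{(3)}_k$ for every grid point $bk$ with $bk\le m-1$ (again a consequence of $r=\lceil m/(2b)\rceil$, so that $2r-1$ blocks suffice), and that a window of half-width $h_S$ about $bk$ captures every straddling occurrence (this is exactly the displayed inequality, using $1\le |w|\le h_S$ and $b\ge 2h_S$). I do not expect any genuine difficulty beyond this bookkeeping; the statement is purely combinatorial once the block geometry has been pinned down.
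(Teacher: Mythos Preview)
Your argument is correct and is exactly the approach the paper takes, only spelled out in full: the paper's own proof is two sentences observing that $N^{(1)}(w)+N^{(2)}(w)$ counts occurrences contained in the length-$b$ intervals $\{bi+1,\dots,b(i+1)\}$ and that $N^{(3)}(w)$ upper-bounds the remaining (straddling) occurrences since $|w|\le h_S$. Your type-(i)/type-(ii) dichotomy and the displayed containment $\{j,j+|w|-1\}\subseteq B^{(3)}_k$ make precisely this rigorous, and the off-by-one checks you flag (the tiling by $B^{(1)}\cup B^{(2)}$ and the range $1\le k\le 2r-1$) are the only places where the definitions of $r$ and $b$ enter.
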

\begin{proof}The LHS counts the number of occurrences of $w$ contained in intervals of the form $\{bi+1,bi+2,\dots,b(i+1)\}$. Since $|w|\leq h_S$, $N^{(3)}(w)$ is an upper bound on the number of occurrences of $w$ that are not entirely contained in one of those intervals.\end{proof}

\begin{proposition}\label{prop:numbers}For any $w\in S$, $n^{(1)}_w,n^{(2)}_w\geq \frac{(1-\xi/3)n}{2}$, $n^{(1)}_w+n^{(2)}_w\leq n$ and $n^{(3)}_w\leq \frac{3nh_S}{b}$.\end{proposition}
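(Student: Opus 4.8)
\textit{Proof proposal.} The plan is to prove all four inequalities by directly bookkeeping how the block families $B^{(1)},B^{(2)},B^{(3)}$ tile $[n]$ (here and below $m=n$). Recall $b$ and $r=\lceil n/2b\rceil$ from~(\ref{eq:defbbeta})--(\ref{eq:defrbeta}). First I would record the quantitative inputs coming from the hypothesis on $n$ in Theorem~\ref{thm:betamixingtypicality}: namely $n\ge 2b$, and $n/2b\ge 1+\tfrac{300}{\xi^2\pi_S}\ln(12|S|/\delta)\ge 300/\xi^2$ (using $\delta<1/e$ and $\pi_S\le 1$), so that $b/n\le\xi^2/600$ and $2b/n\le\xi^2/300$; together with $h_S/b\le\xi/10$ (from~(\ref{eq:defbbeta})) and $\xi<1/2$ (already invoked in the block construction), these are all that is needed. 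I also use the trivial bound $(|B|-|w|+1)_+\le|B|$, valid for every finite set $B$ because $|w|\ge 1$.

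For the upper bound $n^{(1)}_w+n^{(2)}_w\le n$: unwinding the definitions gives $B^{(1)}_i\cup B^{(2)}_i=(2b(i-1),2bi]\cap[n]$, so the $2r$ sets $\{B^{(1)}_i\}_{i=1}^r\cup\{B^{(2)}_i\}_{i=1}^r$ are pairwise disjoint subsets of $[n]$; summing $(|B|-|w|+1)_+\le|B|$ over all of them yields $n^{(1)}_w+n^{(2)}_w\le\sum_i\bigl(|B^{(1)}_i|+|B^{(2)}_i|\bigr)\le n$.

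The two lower bounds I would treat in parallel. A block $B^{(1)}_i$ (resp.\ $B^{(2)}_i$) has its full size $b$ precisely when its right endpoint $b(2i-1)$ (resp.\ $2bi$) is $\le n$; counting these indices (and checking they all lie in $\{1,\dots,r\}$, so nothing at the right end of $[n]$ is mishandled) shows there are at least $n/2b-\tfrac12$ full blocks in $B^{(1)}$ and at least $n/2b-1$ in $B^{(2)}$. Since $b\ge 2h_S\ge 2|w|$, each full block contributes $b-|w|+1\ge b-h_S$, whence $n^{(1)}_w\ge(n/2b-\tfrac12)(b-h_S)=\tfrac n2(1-b/n)(1-h_S/b)$ and $n^{(2)}_w\ge(n/2b-1)(b-h_S)=\tfrac n2(1-2b/n)(1-h_S/b)$. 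Plugging in $h_S/b\le\xi/10$, $b/n\le\xi^2/600$, $2b/n\le\xi^2/300$ gives $n^{(k)}_w\ge\tfrac n2\bigl(1-\xi/10-\xi^2/300\bigr)\ge\tfrac n2(1-\xi/3)$ for $k=1,2$.

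Finally, for $n^{(3)}_w\le 3nh_S/b$: each of the $2r-1$ blocks $B^{(3)}_i$ has size at most $2h_S-1$, so $n^{(3)}_w\le(2r-1)\bigl(2h_S-|w|\bigr)_+\le(2r-1)(2h_S-1)$; using $2r-1<n/b+1$ and then $2h_S\le nh_S/b$ (which is exactly $n\ge 2b$) gives $(n/b+1)(2h_S-1)<2nh_S/b+2h_S\le 3nh_S/b$. None of these steps is a real obstacle---the arithmetic is routine---and the only place calling for a bit of care is the combinatorial count of full $B^{(1)}$- and $B^{(2)}$-blocks, where one must ensure the counted indices do not exceed $r$ so that truncation at the right endpoint of $[n]$ is correctly accounted for.
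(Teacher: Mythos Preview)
Your proposal is correct and follows essentially the same counting strategy as the paper's proof: for the lower bounds you count full blocks (the paper simply uses that the first $r-1$ blocks in each family are full, which is the same observation), and for $n^{(3)}_w$ both arguments bound $2r-1$ by $n/b+1$ and multiply by the block length. The only differences are cosmetic---you invoke the sharper input $b/n\le\xi^2/600$ where the paper uses a cruder $b/n\le\xi/6$, and you explicitly prove the bound $n^{(1)}_w+n^{(2)}_w\le n$ which the paper leaves tacit---but the route is the same.
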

\begin{proof}Since $B^{(1)}$ is $b$-regular and $r\geq n/2b$ (by \ref{eq:defrbeta}),
$$n^{(1)}_w\geq (r-1)(b-|w|)\geq \frac{n}{2} - \frac{n|w|}{2b} -b = \frac{n}{2}\left(1 - \frac{h_S}{b}-\frac{b}{n}\right).$$
By (\ref{eq:defbbeta}) $b\geq 6h_S/\xi$ and $n\geq 6b/\xi$, so $n_1^{(w)}\geq (1-\xi/3)\,n/2$ as desired. The same argument works for $n^{(2)}_w$.
For $n^{(3)}_w$ we start from $2r-1\leq 2(n/2b+1)-1=n/2b+1$. Since each block contains at most $2h_S$ points,
$$n^{(3)}_w\leq \frac{nh_S}{b} + 2h_S.$$
The rest follows from $b/n\leq \xi<1$.\end{proof}

Consider the events:
\begin{eqnarray}G^{(1)}&\equiv & \left\{\forall w\in S,\, \left|\frac{N^{(1)}(w)}{n^{(1)}_w}-\pi(w)\right|<\frac{\xi\,\pi(w)}{3}\right\};\\
G^{(2)}&\equiv & \left\{\forall w\in S,\, \left|\frac{N^{(2)}(w)}{n^{(2)}_w}-\pi(w)\right|<\frac{\xi\,\pi(w)}{3}\right\};\\
G^{(3)}&\equiv & \left\{\forall w\in S,\, \frac{N^{(3)}(w)}{n}\leq \frac{2\xi\pi(w)}{3}\right\};\\
G&=& G^{(1)}\cap G^{(2)}\cap G^{(3)}.\end{eqnarray}

\begin{claim}$G\subset \{\forall w\in S\,:\,|N_n(w)-\pi(w)n|\leq \xi\pi(w)n\}.$\end{claim}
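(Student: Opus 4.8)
The plan is to fix an arbitrary $w\in S$, argue on the event $G=G^{(1)}\cap G^{(2)}\cap G^{(3)}$, and establish the two-sided bound $(1-\xi)\pi(w)n\leq N_m(w)\leq (1+\xi)\pi(w)n$ directly. The only ingredients are the sandwich $N^{(1)}(w)+N^{(2)}(w)\leq N_m(w)\leq N^{(1)}(w)+N^{(2)}(w)+N^{(3)}(w)$ from Proposition \ref{claim:countinblocks}, the block-count estimates of Proposition \ref{prop:numbers}, and the three deviation inequalities defining $G^{(1)},G^{(2)},G^{(3)}$; no further probabilistic input is needed, so this step is essentially bookkeeping.

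For the upper bound I would use that on $G^{(k)}$ one has $N^{(k)}(w)\le (1+\xi/3)\,\pi(w)\,n^{(k)}_w$ for $k=1,2$, so that $N^{(1)}(w)+N^{(2)}(w)\le (1+\xi/3)\,\pi(w)\,(n^{(1)}_w+n^{(2)}_w)\le (1+\xi/3)\,\pi(w)\,n$ by the inequality $n^{(1)}_w+n^{(2)}_w\le n$ of Proposition \ref{prop:numbers}; on $G^{(3)}$ one has $N^{(3)}(w)\le (2\xi/3)\,\pi(w)\,n$. Adding these and using the right-hand inequality of Proposition \ref{claim:countinblocks} gives $N_m(w)\le (1+\xi/3+2\xi/3)\,\pi(w)\,n=(1+\xi)\,\pi(w)\,n$.

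For the lower bound I would use $N^{(k)}(w)\ge (1-\xi/3)\,\pi(w)\,n^{(k)}_w$ on $G^{(k)}$ for $k=1,2$, hence $N^{(1)}(w)+N^{(2)}(w)\ge (1-\xi/3)\,\pi(w)\,(n^{(1)}_w+n^{(2)}_w)\ge (1-\xi/3)^2\,\pi(w)\,n$, where the last step uses $n^{(k)}_w\ge (1-\xi/3)n/2$ from Proposition \ref{prop:numbers}; since $(1-\xi/3)^2\ge 1-2\xi/3\ge 1-\xi$, the left-hand inequality of Proposition \ref{claim:countinblocks} yields $N_m(w)\ge (1-\xi)\,\pi(w)\,n$. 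Combining the two bounds gives $|N_m(w)-\pi(w)n|\le \xi\,\pi(w)n$ for every $w\in S$ on $G$, which is the claim. The only thing to be careful about — and the closest thing to an obstacle — is keeping the $\xi$-budget consistent: one must use the cheap bound $n^{(1)}_w+n^{(2)}_w\le n$ in the upper-bound direction but the bound $n^{(1)}_w+n^{(2)}_w\ge (1-\xi/3)n$ in the lower-bound direction, and the total slack $\xi$ must split as $\xi/3$ (coming jointly from $G^{(1)}$ and $G^{(2)}$) plus $2\xi/3$ (coming from $G^{(3)}$), matching the constants chosen in the definitions of those events.
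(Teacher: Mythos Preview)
Your proof is correct and follows essentially the same route as the paper: both argue the lower bound via $N^{(1)}(w)+N^{(2)}(w)\ge (1-\xi/3)\pi(w)(n^{(1)}_w+n^{(2)}_w)\ge (1-\xi/3)^2\pi(w)n\ge (1-\xi)\pi(w)n$ and the upper bound via $(1+\xi/3)\pi(w)n + (2\xi/3)\pi(w)n=(1+\xi)\pi(w)n$, using Propositions \ref{claim:countinblocks} and \ref{prop:numbers} in exactly the places you indicate. Your write-up is in fact slightly cleaner about the $\xi$-budget than the paper's (which has a harmless typo $2\xi/4$ in place of $2\xi/3$ in the upper bound).
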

\begin{proof}Assume $G$ holds. Then for any $w\in S$:
\begin{eqnarray*}N_n(w)&\geq & N^{(1)}(w)+N^{(2)}(w)\\ \mbox{($G$ occurs)} &\geq &\pi(w)\,\left(1-\frac{\xi}{3}\right)(n^{(1)}_w+n^{(2)}_w)\\ (\mbox{Proposition} \  \ref{prop:numbers}) &\geq & \pi(w)\,\left(1-\frac{\xi}{3}\right)^2\,n\\ &\geq &\left(1-\xi\right)\,\pi(w)\,n.\end{eqnarray*}
On the other hand,
\begin{eqnarray*}N_n(w)&\leq & N^{(1)}(w)+N^{(2)}(w)+N^{(3)}(w)\\ \mbox{($G$ occurs)} &\leq &\pi(w)\,\left(1+\frac{\xi}{3}\right)\,n + \frac{2\xi\pi(w)}{4}\,n\\ &\leq &\left(1+\xi\right)\,\pi(w)\,n.\end{eqnarray*}
\end{proof}

The claim implies that:
$$\Pr{G}\leq \Pr{\forall w\in S\,:\,|N_n(w)-\pi(w)n|\leq \xi\pi(w)n}$$
and we proceed to bound $\Pr{G}$.

We first apply Lemma \ref{lem:blocks} to each $G^{(k)}$. For $k=1,2$ we may take $t=b$,
$\lambda=\xi/3$ and note that $r\geq n/2b$, $\beta(b)\leq \lambda\delta_0\pi_S/8$ (cf. \ref{eq:defrbeta}, \ref{eq:defbbeta}) to deduce:
\begin{equation}\label{eq:boundG1G2}\Pr{(G^{(1)})^c} + \Pr{(G^{(2)})^c}\leq 4|S|\,\exp\left(-\frac{\xi^2\,\pi_S\,(r-1)}{144\,\left(1 + \frac{\xi}{18}\right)}\right) + \frac{\delta_0}{4}.\end{equation}

For $G^{(3)}$ we take $t=2h_S$ and
$$\lambda =  \frac{2\xi\,n}{3\,\max_{w\in S}n_w}-1\geq \frac{2\xi\,b}{9h_S}-1\geq \frac{\xi\,b}{9h_S}$$
since $b\geq 10h_S/\xi$ by (\ref{eq:defbbeta}) and $n\geq 2b$. Notice that in this case $\lambda>1$, hence
$$\lambda^2/(1+\lambda/6)\geq \lambda^2/(\lambda +\lambda/6)\geq 3\lambda/4\geq \frac{\xi\,b}{12h_S}\geq \frac{10}{12}.$$ Moreover, $\beta(b)/\pi_S\lambda\leq \delta_0/6$. Hence, we deduce:
\begin{equation}\label{eq:boundG3}\begin{array}{rl}
\Pr{(G^{(3)})^c} & \leq 2|S|\,\,\exp\left(-\frac{\,\pi_S\,(2r-1)}{16}\,\frac{\lambda^2}{1+\lambda/6}\right)+\frac{\delta_0}{6} \\
& \leq 2|S|\exp\left(-\frac{\,\pi_S\,(2r-1)}{16}\frac{10}{12}\right)+\frac{\delta_0}{6}.\end{array}\end{equation}
Now compare the exponential terms in the two equations. Since $\xi\leq 1/2$,
$$\frac{150}{\xi^2}\geq \frac{144\,\left(1 + \frac{\xi}{18}\right)}{\xi^2}\geq 16\frac{12}{10},$$
hence the exponential term in (\ref{eq:boundG3}) is larger than the exponential term in (\ref{eq:boundG1G2}). We conclude that:
$$\Pr{G}\leq 1 - \sum_{k=1}^3\Pr{(G^{(k)})^c}\geq 1 - \frac{\delta_0}{2} - 6|S|\,\exp\left(-\frac{r-1}{\frac{300}{\xi^2\,\pi_S}}\right).$$
To finish the proof, we recall that $r\geq n/2b$ (cf. \ref{eq:defrbeta}) and notice that our assumptions imply:
$$\frac{n}{2b}\geq 1 + \frac{300}{\xi^2\,\pi_S}\ln\left(\frac{12|S|}{\delta_0}\right).$$
Plugging this back into the previous inequality gives $\Pr{G}\geq 1- \delta_0$ and finishes the proof.\end{proof}

\section{A remark on minimax rates for chains with infinite connections} \label{rem:minimaxcomplete}

In this section we observe that the uniform convergence rate obtained in Theorem \ref{thm:complete} in Section \ref{sec:complete} of the main text is optimal. We take $\ELL=1$, $A=\{0,1\}$ and omit $\ell$ from the notation.

In order to prove the optimality of our rate, it suffices to show that one can couple two processes $(X_m)_{m\in\Z}$, $(Y_m)_{n\in\N}$, with respective transition probabilities $p_X,p_Y$, so that $\Pr{X_i=Y_i,\,1\leq i\leq n} = 1-\liloh{1}$, even though $d_{\rm tv}(p_X(\cdot|\bar x,p_Y(\cdot|\bar x)) \geq \Gamma/\lceil C\log n\rceil^{1+\theta}$ for at least one past $\bar x$. Here $C>0$ is a constant that depends only on $\Gamma,\theta$ and $n\geq n_0(\Gamma,\theta)$ is assumed large enough. The upshot of this coupling construction is that no estimator can estimate the transition probabilities $p_X,p_Y$ uniformly over pasts, from samples $X_1^n$, $Y_1^n$ without making an error of magnitude at least $ \Gamma/(2\lceil C\log n\rceil^{1+\theta})$ for at least one of $p_X,p_Y$.

Our construction is as follows. First let $(X_m)_{m\in\Z}$ consist of i.i.d. uniform symbols in $A=\{0,1\}$. $Y$ is defined to have transition probabilities:
\[p_Y(1|x) \equiv \left\{\begin{array}{ll}\frac{1}{2} + \frac{\Gamma}{\lceil C\log\,n\rceil^{1+\theta}}, & x^{-1}_{-\lceil C\log n\rceil}=0\dots 0;\\ \frac{1}{2}, & \mbox{otherwise.}\end{array}\right.\]
for some $C>0$ such that $(2/3)^{C\log n}\leq 1/n^2$. Here we implictly are assuming that $n$ is large enough so that the above recipe gives valid transition probabilities. Clearly, Assumption \ref{Ex:Infinite} is satisfied by both $X$ and $Y$ when $n$ is large enough with $\eta = 2/3$. Moreover, $p_X(\cdot|\bar x),p_Y(\cdot|\bar x)$ are at distance $\Gamma/\lceil C\log n\rceil^{1+\theta}$ from each other when $\bar x$ is such that $\bar x^{-1}_{-\lceil C\log n\rceil}=0\dots 0$, i.e. with a suffix of $\lceil C\log n\rceil$ zeros.

Next we couple the two processes so that $X_1^n=Y_1^n$ with high probability. To do this, we will apply the perfect simulation algorithm of Comets et al. \cite{CometsFernandezFerrari2002}. Letting $\eps\equiv {2\Gamma}/{\lceil C\log\,n\rceil^{1+\theta}}$, first note we can write:
\[p_Y(1|x) = (1 - \eps)\,\frac{1}{2} + \eps\,q(1|x)\]
where
\[q(1|x) \equiv \left\{\begin{array}{ll}0, & x^{-1}_{-\lceil C\log n\rceil}=0\dots 0;\\ \frac{1}{2}, & \mbox{otherwise.}\end{array}\right.\]
Following the algorithm of Comets et al. \cite{CometsFernandezFerrari2002}, we may sample $Y_1,\dots,Y_n$ as follows.
\begin{enumerate}
\item[Step 1] Let $(L_m)_{m\in\N}$ be an i.i.d. sequence independent from $(X_m)_{m\in\N}$ with $\Pr{L_0=\lceil C\log n\rceil} = 1 - \Pr{L_0=0}=\eps$.
\item[Step 2] Find $T:=\min\{k\in\Z\,:\, \forall k\leq m\leq n,\, m-L_m\geq k\}$. ($T>-\infty$ a.s. by the results of \cite{CometsFernandezFerrari2002}.)
\item[Step 3] {\bf For} $m=T$ to $n$, set
\[Y_m\equiv \left\{\begin{array}{ll}0, & L_m=\lceil C\log n\rceil\mbox{ and }Y^{m-1}_{m-\lceil C\log n\rceil}=0\dots 0;\\ X_m, & \mbox{otherwise.}\end{array}\right.\]
\end{enumerate}
Notice that this defines $Y_m$ for $T\leq m\leq n$, in particular we obtain $Y_1,\dots,Y_n$. The key point observed by Comets et al. \cite{CometsFernandezFerrari2002} is that at no point of this construction we need to compute $Y_m$ for $m<T$. Indeed, for $m\geq T$, $L_m=\lceil C \log n\rceil$ implies $m\geq T + \lceil C\log n\rceil$ (by definition of $T$), which means that the values $Y^{m-1}_{m-\lceil C\log n\rceil}$ have been defined in previous executions of Step 3 (the {\bf For}  instruction). Comets et al. \cite{CometsFernandezFerrari2002} show that $Y_1^n$ is a perfect sample from the unique probability measure over $A^{\Z}$ that is compatible with $p_Y$.

Let us now show that $X^n_1=Y^n_1$ with high probability. First note that $\Pr{T<-\lceil C\log n\rceil}=\liloh{1}$. Indeed, this probability is upper bounded by $$\Pr{\bigcup_{m=-\lceil C\log n\rceil}^{-1}\{L_m\neq 0\}}\leq \lceil C\log n\rceil\,\eps = \bigoh{\frac{1}{\log^{\theta}n}}.$$
Now consider the case that $T\geq -\lceil C\log n\rceil$. If there exists an index $0\leq i\leq n$, we can see that, by taking the {\em first} such $i$, we must have $X^{i-1}_{i-\lceil C\log n\rceil}=0\dots 0$. We conclude
$$\Pr{\{T\geq -\lceil C\log n\rceil\}\cap\{X_1^n\neq Y_1^n\}}\leq \sum_{i=0}^{n}\Pr{\bigcap_{j=i-\lceil C\log n\rceil}^{i}\{X_j=0\}}.$$
Now the definition of our transitions implies that, for large $n$, \[\Pr{X_j=0\mid X_{-\infty}^{j-1}}\leq \eta=2/3,\] so $\Pr{\bigcap_{j=i-\lceil C\log n\rceil}^{i}\{X_j=0\}}\leq (2/3)^{C\log n}\leq n^{-2}$ by our choice of $C$. We deduce:
$$\Pr{\{T\geq -\lceil C\log n\rceil\}\cap\{X_1^n\neq Y_1^n\}}\leq (n+1)n^{-2}=\liloh{1}.$$
Since we have already shown $\Pr{T< -\lceil C\log n\rceil}=\liloh{1}$, we see that $\{X_1^n\neq Y_1^n\}$ has vanishing probability, as desired.

\bibliography{biblioVLMC}

\begin{thebibliography}{10}

\bibitem{AguirregabiriaMira2010}
V.~Aguirregabiria and P.~Mira.
\newblock Dynamic discrete choice structural models: A survey.
\newblock {\em Journal of Econometrics}, 156:38--67, 2010.

\bibitem{ArellanoHonore2001}
M.~Arellano and B.~H. Honor\'e.
\newblock Panel data models: Some recent developments.
\newblock {\em Handbook of Econometrics}, 5:3229--3296, 2001.

\bibitem{Bejerano04}
Gill Bejerano.
\newblock Algorithms for variable length markov chain modeling.
\newblock {\em Bioinformatics}, 20(5):788--789, 2004.

\bibitem{SM-AGCT}
A.~Belloni and R.~I. Oliveira.
\newblock Supplementary material to ``approximate group context tree".
\newblock {\em ArXiv}, 2014.

\bibitem{BickelRitovTsybakov2009}
P.~J. Bickel, Y.~Ritov, and A.~B. Tsybakov.
\newblock Simultaneous analysis of {L}asso and {D}antzig selector.
\newblock {\em Annals of Statistics}, 37(4):1705--1732, 2009.

\bibitem{BrowningCarro2010}
M.~Browning and J.~M. Carro.
\newblock Heterogeneity in dynamic discrete choice models.
\newblock {\em Econometrics Journal}, 13(1):1--39, 2010.

\bibitem{BrowningCarro2011}
M.~Browning and J.~M. Carro.
\newblock Dynamic binary outcome models with maximal heterogeneity.
\newblock {\em Journal of Econometrics}, 178:805–--823, 2014.

\bibitem{Buhlmann1999}
P.~B\"uhlmann.
\newblock Efficient and adaptive post-model-selection estimators.
\newblock {\em Journal of Statistical Planning and Inference}, 79(1):1--9,
  1999.

\bibitem{Buhlmann2000}
P.~B\"uhlmann.
\newblock Model selection for variable length markov chains and tuning the
  context algorithm.
\newblock {\em Ann. Inst. Statist. Math.}, 52(2):287--315, 2000.

\bibitem{BuhlmannWyner1999}
P.~B\"uhlmann and A.~J. Wyner.
\newblock Variable length markov chains.
\newblock {\em Annals of Statistics}, 27(2):480--513, 1999.

\bibitem{ChernozhukovFernandez-ValHahnNewey2009}
V.~Chernozhukov, I.~Fernandez-Val, J.~Hahn, and W.~Newey.
\newblock Identification and estimation of marginal effects in nonlinear panel
  models.
\newblock {\em arXiv:0904.1990}, 2009.

\bibitem{CometsFernandezFerrari2002}
F.~Comets, R.~Fern\'{a}ndez, and P.~A. Ferrari.
\newblock Processes with long memory: Regenerative construction and perfect
  simulation.
\newblock {\em The Annals of Applied Probability}, 12(3):921--943, 2002.

\bibitem{CsiszarShields1996}
I.~Csisz\'ar and P.~C. Shields.
\newblock Redundancy rates for renewal and other processes.
\newblock {\em IEEE Transactions on Information Theory}, 42(6):2065--2072,
  1996.

\bibitem{CsiszarTalata2006}
I.~Csisz\'ar and Z.~Talata.
\newblock Context tree estimation for not necessarily finite memory processes,
  via bic and mdl.
\newblock {\em IEEE Trans. Inform. Theory}, 52:1007--1016, 2006.

\bibitem{FariasEtAl2010}
V.~F. Farias, C.~C. Moallemi, B.~Van Roy, and T.~Weissman.
\newblock Universal reinforcement learning.
\newblock {\em IEEE Transactions on Information Theory}, 56(5):2441--2454, May
  2010.

\bibitem{FerrariWyner2003}
F.~Ferrari and A.~J. Wyner.
\newblock Estimation of general stationary processes by variable length markov
  chains.
\newblock {\em Scandinavian Journal of Statistics}, 30:459--480, 2003.

\bibitem{GGGL2012}
A.~Galves, C.~Galves, J.~E. García, N.~L. Garcia, and F.~Leonardi.
\newblock Context tree selection and linguistic rhythm retrieval from written
  texts.
\newblock {\em The Annals of Applied Statistics}, 6(1):186--209, 03 2012.

\bibitem{Garivier2006}
A.~Garivier.
\newblock Redundancy of the context-tree weighting method on renewal and markov
  renewal processes.
\newblock {\em IEEE Transactions on Information Theory}, 52:5579--5586, 2006.

\bibitem{GarivierLeonardi2010}
A.~Garivier and F.~Leonardi.
\newblock Context tree selection: A unifying view.
\newblock {\em Stochastic Processes and their Applications},
  121(11):2488--2506, 2011.

\bibitem{LedouxTalagrandBook}
M.~Ledoux and M.~Talagrand.
\newblock {\em Probability in Banach Spaces (Isoperimetry and processes)}.
\newblock Ergebnisse der Mathematik undihrer Grenzgebiete, Springer-Verlag,
  1991.

\bibitem{Lepskii1991}
O.~V. Lepski{\u\i}.
\newblock On a problem of adaptive estimation in {Gaussian} white noise.
\newblock 35(3):454--466, September 1991.
\newblock Original Russian article in {\em Teor. Veroyatnost. i Primenen.},
  {\bf 35}(3), (1990), pp. 459--470.

\bibitem{Lindvall1979}
T.~Lindvall.
\newblock On coupling of discrete renewal processes.
\newblock {\em Zeitschrift für Wahrscheinlichkeitstheorie und Verwandte
  Gebiete}, 48(1):57--70, 1979.

\bibitem{LouniciPontilTsybakovvandeGeer2009}
K.~Lounici, M.~Pontil, A.~B. Tsybakov, and S.~van~de Geer.
\newblock Taking advantage of sparsity in multi-task learning.
\newblock {\em Proc. Computational Learning Theory Conference (COLT 2009)},
  2010.

\bibitem{OWJ2011}
G.~Obozinski, M.~J. Wainwright, and M.~I. Jordan.
\newblock Support union recovery in high-dimensional multivariate regression.
\newblock {\em The Annals of Statistics}, 39(1):1--47, 2011.

\bibitem{Puterman1994}
M.~L. Putterman.
\newblock {\em Markov Decision Processes: Discrete Stochastic Dynamic
  Programming}.
\newblock John Wiley \& Sons, Inc., 1994.

\bibitem{Rissanen1983}
J.~Rissanen.
\newblock A universal data compression system.
\newblock {\em IEEE Tran. Inform. Theory}, 29:656--664, 1983.

\bibitem{Ross1983}
S.~M. Ross.
\newblock {\em Introduction to stochastic dynamic programming}.
\newblock Academic Press, 1983.

\bibitem{Bertsekas1987}
S.~M. Ross.
\newblock {\em Dynamic programming: deterministic and stochastic models}.
\newblock Prentice-Hall, Inc., 1987.

\bibitem{TalataDuncan2009}
Z.~Talata and T.~Duncan.
\newblock Unrestricted bic context tree estimation for not necessarily finite
  memory processes.
\newblock {\em ISIT}, pages 724--728, 2009.

\bibitem{vert2001adaptive}
Jean-Philippe Vert.
\newblock Adaptive context trees and text clustering.
\newblock {\em Information Theory, IEEE Transactions on}, 47(5):1884--1901,
  2001.

\bibitem{WillemsEtAl1995}
Frans M.~J. Willems, Yuri~M. Shtarkov, and Tjalling~J. Tjalkens.
\newblock The context-tree weighting method: Basic properties.
\newblock {\em IEEE Transactions of Information Theory}, 41(3):653--664, 1995.

\bibitem{YuanLin2006}
M.~Yuan and Y.~Lin.
\newblock Model selection and estimation in regression with grouped variables.
\newblock {\em Journal of the Royal Statistical Society: Series B (Statistical
  Methodology)}, 68:49--67, 2006.

\end{thebibliography}
\bibliographystyle{plain}

\end{document}